\title{Process Behaviour: Formulae versus Tests\\ (Extended Abstract)}
\author{Andrea Cerone
\institute{Trinity College Dublin\\ Dublin, Ireland}
\institute{School of Computer Science and Statistics\thanks{The financial support of Science Foundation Ireland is gratefully appreciated.}}
\email{ceronea@cs.tcd.ie}
\and
Matthew Hennessy
\institute{Trinity College Dublin\\ Dublin, Ireland}
\institute{School of Computer Science and Statistics}
\email{Matthew.Hennessy@cs.tcd.ie}
}
\definecolor{what}{rgb}{.40,.10,.10}
\definecolor{greenish}{rgb}{.15,.6,.25} 
\newcommand{\Rmay}{\;R_{\scriptstyle{\text{may}}}\;}
\begin{document}
\maketitle
\begin{abstract}
  Process behaviour is often defined either in terms of the tests they
  satisfy, or in terms of the logical properties they enjoy. Here we
  compare these two approaches, using \emph{extensional testing} in
  the style of DeNicola, Hennessy, and a recursive version of the
  property logic HML.

We first characterise subsets of this property logic which can be
captured by tests.  Then we show that those subsets of the property
logic capture precisely  the power of tests.
\end{abstract}
\section{Introduction}

One central concern  of concurrency theory is to determine whether two processes 
exhibit the same behaviour; to this end, many notions of behavioural equivalence 
have been investigated \cite{rob}. One approach, proposed in \cite{dhn}, is 
based on tests. Intuitively two processes are \emph{testing equivalent}, 
$p \approxtest q$, relative to a set of tests $T$ if $p$ and $q$ pass 
exactly the same set of tests from $T$. Much here depends of course on
details, such as the nature of tests, how they are applied and how
they succeed. 

In the framework set up in \cite{dhn} observers 
have very limited ability to manipulate the processes
under test; informally processes are conceived as completely
independent entities who may or may not react to testing requests;
more importantly the application of a test to a process simply
consists of a run to completion of the process in a \emph{test
  harness}. Because processes are in general nondeterministic,
formally this leads to two testing based equivalences, $ p \approxmay
q$ and $p \approxmust q$; the latter is determined by the set of tests
a process guarantees to pass, written $p \mustsatisfy t$,
while the former by those it is possible to
pass, $p \maysatisfy t$.  The \emph{may} equivalence provides a basis for the so-called
trace theory of processes \cite{csp} , while the \emph{must} equivalence
can be used to justify the various denotational models based on  \emph{Failures}
used in the theory of CSP, \cite{csp,olderog,rocco}.

Another approach to behavioural equivalence is to say that two
processes are equivalent unless there is a property which one enjoys
and the other does not.  Here again much depends on the chosen set of
properties, and what it means for a process to enjoy a property.
\textit{Hennessy Milner Logic} \cite{hml} is a modal logic often used
for expressing process properties in term of the actions they are able
to perform.  It is well-known that it can be used, via differing
interpretations, to determine numerous variations on
\emph{bisimulation equivalence}, \cite{ccs,luca}. What has received
very little attention in the literature however is the relationship
between these properties and tests.  This is the subject of the current
paper.

More specifically, we address the question of determining which
formulae of a recursive version of the Hennessy Milner Logic, which we
will refer to as \rechml, can be used to characterise tests.  This
problem has already been solved in \cite{aceto} for a non-standard
notion of testing; this is discussed more fully later in the paper.
But we will focus on the more standard notions of \emph{may} and
\emph{must} testing mentioned above.



To explain our results, at least intuitively, let us introduce some
informal notation; formal definitions will be given later in the paper.
Suppose we have a property $\phi$ and a test $t$ such that:
\begin{quote}
  for every process $p$,\;\; $p$ satisfies $\phi$ if and only if $p$ \maysatisfy\ 
  the test $t$.
\end{quote}
Then we say the formula $\phi$ \emph{may}-represents the test $t$.
We use similar notation with respect to \emph{must} testing. 
Our first result shows that
the power of tests can be captured by properties; for every test $t$ 
\begin{enumerate}[(i)]
\item 
There is a formula $\phimay{t}$ which \emph{may}-represents $t$; see Theorem~\ref{thm:maytest}

\item 
There is a formula $\phimust{t}$ which \emph{must}-represents $t$; see Theorem~\ref{thm:musttest}
\end{enumerate}

Properties, or at least those expressed in \rechml, are more
discriminating than tests, and so one would not expect the converse to hold.
But we can give simple descriptions of subsets of \rechml, called 
\mayhml and \musthml respectively, with the following properties:
\begin{enumerate}[(a)]
\item Every $\phi \in \mayhml$ \emph{may}-represents some   test $\Tmay{\phi}$; see Theorem~\ref{thm:mayhml}

\item Every $\phi \in \musthml$ \emph{must}-represents  some  test $\Tmust{\phi}$; see Theorem~\ref{thm:musthml}

\end{enumerate}
Moreover because the formulae $\phimay{t},$\;$\phimust{t}$ given in (i),
(ii) above are in  \mayhml, \musthml respectively, these sub-languages
of \rechml have a pleasing completeness property. For example let
$\phi$ be any formula from \rechml which can be represented by some
test $t$ with respect to \emph{must} testing; that is $p$ satisfies $\phi$ if and only if $p
\mustsatisfy t$. Then, up to logical equivalence, the formula $\phi$ is
guaranteed to be already in the sub-language \musthml; that is, there is
a formula $\psi \in \musthml$ which is logically equivalent to
$\phi$. The language \mayhml has a similar completeness property for
\emph{may} testing.

We now give a brief overview of the remainder of the paper.  In the
next section we recall the formal definitions required to state our
results precisely. Our results in the may case will only hold 
when the set of tests we consider come from a finite state 
finite branching LTS. Further, we also require 
for the LTS of processes to be finite branching when dealing with the 
\must testing relation. The reader should also be warned 
that we use a slightly non-standard interpretation of \rechml.

We then explain
both \emph{may} and \emph{must} testing, where we take as processes
the set of states from an arbitrary LTS, and give an explicit syntax
for tests.  In Section~\ref{sec:tf} we give a precise statement of
our results, including definitions of the sub-languages \mayhml and
\musthml, together with some illuminating examples. The proofs of
these results for the \emph{must} case are given in
Section~\ref{sec:must}, while those for the \emph{may} case are
outlined in Section~\ref{sec:may}. We end with a brief comparison with
related work.

\section{Background}\label{sec:background}
One formal model for describing the behaviour of a concurrent system
is given by \textbf{Labelled Transition Systems (LTSs)}:

\begin{defi}
A LTS over a set of actions $Act$\ is a triple $\mathcal{L} = \langle S,\; Act_{\tau},\; \longrightarrow \rangle$\ where:
\begin{itemize}
\item $S$\ is a countable set of states
\item $Act_{\tau} = Act \cup \{\tau\}$\ is a countable set of actions,
where $\tau$ does not occur in $Act$

\item $\longrightarrow \subseteq S \times Act_\tau \times S$\ is a transition relation.
\end{itemize}
We use $a, b, \cdots$\ to range over the set  of external actions $Act$, and $\alpha,
\beta, \cdots$\ to range over  $Act_\tau$.
The standard notation $s \trans{\alpha} s'$\ will be used in lieu of $(s,\alpha,s') \in \longrightarrow$. 
States of a LTS $\mathcal{L}$\ will also be referred to as (term)
\textit{processes} and ranged over by $s,\,s',p,\;q$\qed. 
\end{defi}

Let us recall some standard notation associated with LTSs. We write
$s \trans{\alpha}$\ if there exists some $s'$\ such that $s
\trans{\alpha} s'$, $s \longrightarrow$\ if there exists $\alpha \in
Act_{\tau}$\ such that $s \trans{\alpha}$, and $s \nottrans{\alpha}$, $s
\nottrans{\;}$\ for their respective negations. We use $\Succ{\alpha, s}$ to
denote the set $\{s' | s \trans{\alpha} s'\}$, and $\Succ{s}$ for $\bigcup_{\alpha \in Act_{\tau}} \Succ{\alpha, s}$. 
If $\Succ{s}$ is finite for every state $s \in S$ the LTS is said to be \textit{finite
  branching}.  Finally, a state $s$\ diverges, denoted $s \Uparrow$,
if there is an infinite path of internal moves $s \trans{\tau} s' \trans{\tau} \cdots$,
while it converges, $s \Downarrow$, otherwise.

For a given LTS, each action of the form $\trans{a}$\ can be interpreted 
as an observable activity; informally speaking, this means that each 
component which is external to the modeled system can detect that such an action 
has been performed. On the other hand, the action $\tau$\ is meant to represent 
internal unobservable activity; this gives rise to the standard notation for 
weak actions. 
$s \Trans{\tau} s'$ Is used to denote reflexive transitive closure of
$\trans{\tau}$, while $s \Trans{a} s'$ denotes $s
\Trans{\tau} s'' \trans{a} s''' \Trans{\tau} s'$.
When $s \Trans{\alpha} s'$\ we say that $s'$ is an 
$\alpha$-derivative of $s$.
The associated notation  $s \Trans{\alpha}$,  
$s \Longrightarrow$, $s \notTrans{\alpha}$\ 
and $s \notTrans{\;}$\ have the obvious definitions.

It is common to define many operators on LTSs for interpreting
process algebras.  In this paper we will use only one, a parallel
operator designed with \emph{testing} in mind.

\begin{defi}[Parallel composition]~\\
Let $\mathcal{L}_1 = \langle S_1,\; Act^{1}_{\tau},\; 
\longrightarrow\rangle$, $\mathcal{L}_2 = \langle S_2,\;
 Act_{\tau}^2,\; \longrightarrow \rangle$ be LTSs.
The parallel composition of $\mathcal{L}_1$\ and $\mathcal{L}_2$\
 is a LTS $\mathcal{L}_1 | \mathcal{L}_2 =\; 
\langle S_1 \times S_2,\; \{\tau\}, \longrightarrow \rangle$, where
$\longrightarrow$\ is defined by the following SOS rules:\\[3pt]

\begin{center}
\begin{tabular}{ccc}
\begin{prooftree}
s \trans{\tau} s'
\justifies
s | t \trans{\tau} s' | t
\end{prooftree}
&
\hspace{50pt}
\begin{prooftree}
t \trans{\tau} t'
\justifies
s | t \trans{\tau} s | t'
\end{prooftree}
\hspace{50pt}
&
\begin{prooftree}
s \trans{a} s' \quad t \trans{a} t'
\justifies
s | t \trans{\tau} s' | t'
\end{prooftree}
\end{tabular}
\end{center}
~\\[7pt]
$s \;|\; t$\ is used as a conventional notation for $(s, t)$.\qed
\end{defi}

The first two rules express the possibility for each component of a LTS to 
perform independently an internal activity, which cannot be detected by the 
other component.
\leaveout{
The first two rules models the possibility for each component of a LTS to
perform their internal actions independently from the other one. This is needed, 
as internal activities of a component cannot be detected by the other one.
}
The last rule models the synchronization of two processes executing the 
same action; this will result in  unobservable activity.
\subsection{Recursive HML}\label{sec:recursivehml}
\textbf{Hennessy Milner Logic} (HML), \cite{hml} has proven to 
be a very expressive property language for states in an LTS. 
It is
based on a minimal set of modalities to 
capture the actions a process can perform, and what the effects of performing such 
actions are. Here we use a variant in which the interpretation depends on
the weak actions of an LTS.
\begin{defi}[Syntax of \rechml]
Let $Var$\ be a countable set of variables.
The language \rechml is defined as the set of 
closed formulae generated by the following grammar:
\begin{equation*}
\phi \is \ttt \barra \fff \barra X \barra \Acc A \barra 
\phi_1 \vee \phi_2 \barra \phi_1 \wedge \phi_2 \barra \dmnd{\alpha}\phi \barra [\alpha]\phi \barra 
\lfp{X}{\phi} \barra \gfp{X}{\phi}
\end{equation*}
Here $X$\ is chosen from the countable set of variables $Var$.
The operators $\lfp X{\phi},$\\$\gfp X {\phi}$ act as binders for variables and we have the
standard notions of free and bound variables, and associated binding sensitive
substitution of formulae for variables.\qed
\end{defi}
Let us recall the informal meaning of \rechml operators. A formula 
of the form $\dmnd \alpha \phi$\ expresses the need for a process to have an 
$\alpha$-derivative which satisfies formula $\phi$, while formula $[\alpha]\phi$\ 
expresses the need for all $\alpha$-derivatives (if any) of a converging process to satisfy formula 
$\phi$.\\
Formula $\Acc A$\ is defined when $A$\ is a finite subset of $Act$, and is satisfied exactly 
by those converging processes for which each $\tau$-derivative has at least an $a$-derivative for $a \in Act$.
$\lfp X \phi$\ and $\gfp X \phi$ allow the description of recursive properties, respectively being
the least and largest solution of the equation $X = \phi$\ over the powerset domain of the state space.

Formally, given a LTS $\langle S, Act_{\tau}, \longrightarrow \rangle$, 
we interpret each (closed) formula
as a subset of $2^S$. The set $2^s$ is a complete lattice and the
semantics  is determined by interpreting each operator in the language as 
a monotonic operator over this complete lattice. The binary operators 
$\vee,\;\wedge$ are interpreted as set theoretic union and intersection
respectively while the unary operators are interpreted as follows:
\begin{align*}
  \dmnd{\cdot\alpha\cdot}P = &\;\setof{s}
            { s\Trans{\alpha} s' \mbox{ for some } s' \in P}\\
 \bbox{\cdot\alpha\cdot}P =&\; \setof{s}
        {s\Downarrow, \text{ and } s\Trans{\alpha} s' \mbox{ implies } s' \in P}
\end{align*}
where $P$ ranges over subsets of $2^S$.

Open formulae in \rechml can be interpreted by specifying, 
for each variable $X$, the set of states for which 
the atomic formula $X$\ is satisfied.
Such a mapping $\rho: Var \rightarrow 2^S$\ is called environment.
Let $\env$\ be the set of environments. A formula $\phi$\ of
$\rechml$ will be interpreted as a function $\sem{\phi}: \env \rightarrow 2^S$. 
We will use the standard notation $\rho[X \mapsto P]$\ to refer to the 
environment $\rho'$\ such that $\rho'(X) = P$\ and $\rho'(Y) = \rho(Y)$\ for 
all variables $Y$\ such that $X \neq Y$.\\
The definition of the interpretation $\sem{\cdot}$\ is given in Table
\ref{tab:interpr}.
When referring to the interpretation of a closed formula $\phi \in
\rechml$, we will omit the environment application, and  sometimes 
use the standard notation $p \models \phi$ for $p \in \sem{\phi}$.

\begin{table}[t]
\label{tab:interpr}
\begin{equation*}
\begin{array}{lcl|lcl}
\sem{\ttt}\rho &\triangleq& S&
\sem{\fff}\rho &\triangleq& \emptyset\\
\sem{X}\rho &\triangleq& \rho(X)&
\sem{\Acc A}\rho &\triangleq & \{ s | s \Downarrow, \mbox{ if } s \Trans{\tau} s' \mbox{ then } \exists a \in A.s' \Trans{a} \}\\
\sem{\dmnd{\alpha} \phi}\rho &\triangleq& \dmnd{\cdot\alpha\cdot} (\sem{\phi}\rho)&
\sem{\bbox{\alpha} \phi}\rho &\triangleq& \bbox{\cdot\alpha\cdot} (\sem{\phi}\rho)\\
\sem{\phi_1 \vee \phi_2}\rho &\triangleq& \sem{\phi_1}\rho \cup \sem{\phi_2}\rho&
\sem{\phi_1 \wedge \phi_2} \rho &\triangleq& \sem{\phi_1}\rho \cap \sem{\phi_2}\rho\\
\sem{\lfp X \phi}\rho &\triangleq& \bigcap \{ P \;|\; \sem{\phi}\rho[X \mapsto P] \subseteq P\}&
\sem{\gfp X \phi}\rho &\triangleq& \bigcup \{ P \;|\; P \subseteq \sem{\phi}\rho[X \mapsto P]\}\\
\end{array}
\end{equation*}
\caption{Interpretation of \rechml}
\end{table}
Our version of HML is non-standard, as we have added a convergence
requirement for the interpretation of the box operator
$\bbox{\alpha}$.  The intuition here is that, as in the \emph{failures
  model} of CSP \cite{csp}, divergence represents
\emph{underdefinedness}.  So if a process does not converge all of its
capabilities have not yet been determined; therefore one can not quantify over all
of its $\alpha$ derivatives, as the totality of this set has not yet been determined.
Further, the operator $\Acc A$\ is also non-standard. It has been introduced 
for the sake of simplicity, as it will be useful later; in fact it does not add any 
expressive power to the logic, since for each finite set $A \subseteq Act$\ the formula 
$\Acc A$\ is logically equivalent to $[\tau](\bigvee_{a \in A} \dmnd a \ttt)$.

As usual, we will write $\phi\{\psi/X\}$\ to denote the formula $\phi$\ where all 
the free occurrences of the variable $X$\ are replaced with $\psi$. 
We will use the congruence symbol $\equiv$\ for syntactic equivalence.

The language \rechml can be extended conservatively by adding
simultaneous fixpoints, leading to the language $\rechml^+$.  Given a sequence of variables
$(\overline{X})$ of length $n > 0$, and a sequence of formulae
$\overline{\phi}$\ of the same length, we allow the formula $min_i(\overline{X},
\overline{\phi})$ for $1 \leq i \leq n$. This formula
will be interpreted as the $i$-th projection of the simultaneous
fixpoint formula.
\begin{defi}[Interpretation of simultaneous fixpoints]
  Let $\overline{X}$\ and $\overline{\phi}$\ respectively be sequences
  of variables and formulae of length $n$.
\begin{eqnarray*}
  \sem{\lfp {\overline{X}}{\overline{\phi}}}\rho &\triangleq& 
  \bigcap \{ \overline{P} \;|\; \sem{\phi_i}\rho[\overline{X}\mapsto\overline{P}]\subseteq P_i \; \forall 1 \leq i \leq n\}\\
  \sem{\slfp i {\overline{X}}{\overline{\phi}}}\rho &\triangleq& \pi_i(\sem{\lfp{\overline{X}}{\overline{\phi}}}\rho)
\end{eqnarray*}
where $\pi_i$\ is the $i$-th projection operator, and intersection over 
vectors of sets is defined pointwise.\qed
\leaveout{
\[
\langle P_1, \cdots, P_n \rangle \cap \langle Q_1, \cdots, Q_n\rangle = \langle P_1 \cap Q_1, \cdots, P_n \cap Q_n\rangle
\]
}
\end{defi}
Again we will omit the environment application if a formula of the form $\slfp i {\overline{X}}{\overline{\phi}}$ 
is closed, that is the only variables that occur in $\overline{\phi}$ are those in $\overline{X}$. 
Intuitively, an interpretation $\sem{\lfp {\overline{X}} {\overline{\phi}}}$, where 
$\overline{X} = \langle X_1,\cdots,X_n\rangle$ and 
$\overline{\phi} = \langle \phi_1, \cdots, \phi_n \rangle$, is the least solution 
(over the set of vectors of length $n$ over $2^S$) of the equation system given 
by $X_i = \phi_i$ for all $i = 1, \cdots, n$, 
\leaveout{
whose form is
\begin{eqnarray*}
X_1 &=& \phi_1\\
&\vdots&\\
X_n &=& \phi_n
\end{eqnarray*}
}
while $\sem{\slfp i {\overline{X}} {\overline{\phi}}}$\ is the $i$-th projection of such a vector.
\leaveout{Let $\overline{P} = \langle P_1, \cdots, P_n\rangle$\ be the least solution for a system of 
equations as above. The following theorem states that, for each index $i$, there exists an 
equation $X = \psi$\ such that its least solution coincides with $P_i$.
}
Simultaneous fixpoints do not add any expressivity to \rechml, as shown below:
\begin{thm}[Bek\'ic, \cite{becik}]
\label{thm:becik}\qquad\\
For each formula $\phi \in \rechml^+$\ there is a formula $\psi \in \rechml$\ such that $\sem \phi = \sem \psi$.\qed
\end{thm}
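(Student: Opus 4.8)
A formula of $\rechml^+$ is a \rechml{} formula that may contain occurrences of the new operator $\slfp{i}{\overline{X}}{\overline{\phi}}$ (the constructor $min_i$), so the plan is to remove these occurrences one at a time, innermost first. If $\phi\in\rechml^+$ contains no such operator we are done; otherwise pick $\overline{X}=\langle X_1,\dots,X_n\rangle$ and $\overline{\phi}=\langle\phi_1,\dots,\phi_n\rangle$ such that $\slfp{i}{\overline{X}}{\overline{\phi}}$ occurs in $\phi$ for some $i$ and no $\phi_j$ itself contains a $min$-operator --- so every $\phi_j$ is a \rechml{} formula, possibly with free variables (including the $X_j$, some of which may be bound further out in $\phi$). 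The main lemma below produces, for each $1\le j\le n$, a genuine \rechml{} formula $\psi_j$ with $\sem{\psi_j}\rho=\sem{\slfp{j}{\overline{X}}{\overline{\phi}}}\rho$ for every $\rho$, having no free variables beyond those of $\slfp{j}{\overline{X}}{\overline{\phi}}$. Replacing every occurrence of $\slfp{j}{\overline{X}}{\overline{\phi}}$ in $\phi$ by $\psi_j$ (using the binding-sensitive substitution of the paper, $\alpha$-renaming as needed to avoid capture) strictly decreases the number of $min$-operators in $\phi$ while preserving its denotation; iterating yields the required $\psi\in\rechml$. Throughout I keep formulae open and environments explicit, and I use the substitution lemma $\sem{\phi\{\overline{\psi}/\overline{X}\}}\rho=\sem{\phi}\rho[\,\overline{X}\mapsto\sem{\overline{\psi}}\rho\,]$ and the fact that $\rho'\mapsto\sem{\phi}\rho'$ is monotone in each free variable (still true for our non-standard $\bbox{\alpha}$ and $\Acc A$, and preserved by the fixpoint operators).

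\textbf{The main lemma (induction on $n$).}
For $n=1$, $\slfp{1}{X_1}{\phi_1}$ is by definition just $\lfp{X_1}{\phi_1}\in\rechml$. For $n\ge 2$, first ``eliminate $X_n$'': put $\psi_n\equiv\lfp{X_n}{\phi_n}\in\rechml$ (its free variables are those of $\phi_n$ other than $X_n$, hence among $X_1,\dots,X_{n-1}$ and the external variables) and substitute it into the remaining equations, $\phi_j'\equiv\phi_j\{\psi_n/X_n\}\in\rechml$ for $1\le j\le n-1$. This leaves a system of length $n-1$ over \rechml{}, so by the induction hypothesis there are \rechml{} formulae $\psi_1,\dots,\psi_{n-1}$, with no free variables beyond the external ones, such that $\sem{\psi_j}\rho=\sem{\slfp{j}{\langle X_1,\dots,X_{n-1}\rangle}{\langle\phi_1',\dots,\phi_{n-1}'\rangle}}\rho$ for all $\rho$. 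I claim the witness for $\slfp{i}{\overline{X}}{\overline{\phi}}$ is $\psi_i$ when $i<n$, and $\psi_n\{\psi_1/X_1,\dots,\psi_{n-1}/X_{n-1}\}$ when $i=n$ --- all of these lie in \rechml{}.

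\textbf{The Bek\'ic identity.}
The one point that requires real argument is that these substitutions compute the right sets. Fix $\rho$; let $\overline{P}=\langle P_1,\dots,P_n\rangle$ be the least solution of $\{X_j=\phi_j\}_{1\le j\le n}$ in the pointwise-ordered lattice of length-$n$ vectors over $2^S$ --- equivalently, by Knaster--Tarski, its least fixed point, i.e.\ the vector given by the clause for $\lfp{\overline{X}}{\overline{\phi}}$ in Table~\ref{tab:interpr} --- and let $\overline{Q}=\langle Q_1,\dots,Q_{n-1}\rangle$ be the least solution of the reduced system $\{X_j=\phi_j'\}_{1\le j\le n-1}$. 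It suffices to show $P_j=Q_j$ for $j<n$ and $P_n=\sem{\psi_n}\rho[\,X_j\mapsto P_j\;(j<n)\,]$, since then, by the induction hypothesis and the substitution lemma, $\sem{\slfp{j}{\overline{X}}{\overline{\phi}}}\rho=P_j=Q_j=\sem{\psi_j}\rho$ for $j<n$, and $\sem{\slfp{n}{\overline{X}}{\overline{\phi}}}\rho=P_n=\sem{\psi_n\{\psi_1/X_1,\dots,\psi_{n-1}/X_{n-1}\}}\rho$. I would prove the identity by two inclusions, using only monotonicity and the characterisation of $\mathrm{lfp}$ as the least pre-fixed point. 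First, put $Q_n:=\sem{\psi_n}\rho[\,X_j\mapsto Q_j\;(j<n)\,]$; since $\psi_n$ is a \emph{fixed} point of $U\mapsto\sem{\phi_n}\rho[X_j\mapsto Q_j,\,X_n\mapsto U]$, and, by the substitution lemma, $Q_j=\sem{\phi_j'}\rho[\dots]=\sem{\phi_j}\rho[\,X_i\mapsto Q_i\;(i<n),\,X_n\mapsto Q_n\,]$, the vector $\langle Q_1,\dots,Q_n\rangle$ is a fixed point of the full system, hence $\overline{P}\subseteq\langle Q_1,\dots,Q_n\rangle$ componentwise. Conversely, from $P_n=\sem{\phi_n}\rho[\,\overline{X}\mapsto\overline{P}\,]$, $P_n$ is a fixed point of $U\mapsto\sem{\phi_n}\rho[X_j\mapsto P_j,\,X_n\mapsto U]$, so $\sem{\psi_n}\rho[\,X_j\mapsto P_j\;(j<n)\,]\subseteq P_n$ by minimality; then monotonicity gives $\sem{\phi_j'}\rho[\,X_i\mapsto P_i\;(i<n)\,]\subseteq\sem{\phi_j}\rho[\,\overline{X}\mapsto\overline{P}\,]=P_j$, so $\langle P_1,\dots,P_{n-1}\rangle$ is a pre-fixed point of the reduced system, whence $\overline{Q}\subseteq\langle P_1,\dots,P_{n-1}\rangle$. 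The two inclusions give $P_j=Q_j$ for $j<n$, and feeding this back, $P_n=Q_n=\sem{\psi_n}\rho[\,X_j\mapsto P_j\;(j<n)\,]$. The case of greatest simultaneous fixpoints, if one adds them to $\rechml^+$ as well, is entirely dual.

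\textbf{Main obstacle.}
The mathematical content is the classical Bek\'ic identity, so the real difficulty is bookkeeping: stating and applying the generalised, capture-avoiding substitution lemma correctly in the presence of binders; making the ``innermost-first'' elimination and its termination precise; carrying the ``no new free variables'' invariant through both inductions; and checking that monotonicity of the interpretation genuinely survives the paper's non-standard readings of $\bbox{\alpha}$ and $\Acc A$. None of this is deep, but it is where a careless argument would go wrong.
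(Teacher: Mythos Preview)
Your proposal is correct and is precisely the classical Bek\'ic argument that the paper invokes by citation: the paper itself gives no proof in the body (the theorem is stated with a \qed and the reference \cite{becik}), and the suppressed appendix only sketches the $n=2$ case by unfolding the definition and then says ``iterate,'' which is exactly your inductive step specialised to two variables. Your write-up is in fact more careful than the paper's sketch---you make the innermost-first elimination, the free-variable invariant, and the two pre-fixed-point inclusions explicit---so there is nothing to correct.
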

Later we will need the following properties of simultaneous fixpoints:
\begin{thm}[Fixpoint properties]\qquad
\label{thm:fixpointprop}
  \begin{enumerate}[(i)]
  \item 
\label{thm:minfixprop}
Let $(\overline{P})$\ be a vector of sets from $2^S$\ satisfying
$
\sem{\phi_i} \rho[\overline{X} \mapsto \overline{P}] \subseteq P_i
$ for every $1 \leq i \leq n$.
Then 
$
\sem{min_i(\overline{X}, \overline{\phi})} \rho \subseteq P_i
$
\item
\label{thm:fixprop}
Let $\rhomin$\ be an environments such that
$
\rhomin(X_i) = \sem{min_i(\overline{X}, \overline{\phi})}.
$
Then 
$
\sem{min_i(\overline{X}, \overline{\phi})} = \sem{\phi_i} \rhomin.
$\qed
 \end{enumerate}
\end{thm}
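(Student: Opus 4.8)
The plan is to recognise Theorem~\ref{thm:fixpointprop} as nothing more than the Knaster--Tarski characterisation of least fixpoints, instantiated at the product lattice $(2^S)^n$, together with the usual ``coincidence'' bookkeeping for environments.

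First I would fix an LTS and an environment $\rho$, and define the operator $F\colon (2^S)^n \to (2^S)^n$ by $F(\overline P)_i \triangleq \sem{\phi_i}\rho[\overline X \mapsto \overline P]$, where $(2^S)^n$ carries componentwise inclusion (I overload $\subseteq$ for vectors, in the same spirit as the pointwise intersection already used in the definition) and is a complete lattice under componentwise unions and intersections. The one lemma that actually needs an argument is that $\sem{\cdot}$ is monotone in its environment: if $\rho'(Y) \subseteq \rho''(Y)$ for every $Y$, then $\sem{\psi}\rho' \subseteq \sem{\psi}\rho''$ for every $\psi \in \rechml^+$. This is a routine induction on $\psi$; the only clauses worth a remark are the box modality, whose semantic operator $\bbox{\cdot\alpha\cdot}(\cdot)$ is monotone in its set argument because the side-condition $s\Downarrow$ does not mention that argument (likewise $\Acc A$ is constant in the environment), and the fixpoint clauses, where an intersection, respectively union, of a family of sets that grows with $\rho$ again grows with $\rho$. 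Monotonicity of $F$ is then immediate, and by the defining equation of $\sem{\lfp{\overline X}{\overline\phi}}\rho$ this set is exactly $\bigcap\{\overline P \mid F(\overline P) \subseteq \overline P\}$, the intersection of all pre-fixpoints of $F$, with $\sem{min_i(\overline X,\overline\phi)}\rho$ its $i$-th projection.

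Part (i) is then immediate: the hypothesis says precisely that $\overline P$ is a pre-fixpoint of $F$, and the intersection of all pre-fixpoints is contained componentwise in any one of them, so $\sem{min_i(\overline X,\overline\phi)}\rho = \pi_i\bigl(\bigcap\{\overline Q \mid F(\overline Q)\subseteq\overline Q\}\bigr) \subseteq P_i$. For part (ii) I would run the standard Knaster--Tarski argument. Write $\overline L = \sem{\lfp{\overline X}{\overline\phi}}\rho$. For any pre-fixpoint $\overline P$ we have $\overline L \subseteq \overline P$, hence $F(\overline L) \subseteq F(\overline P) \subseteq \overline P$ by monotonicity; intersecting over all pre-fixpoints $\overline P$ yields $F(\overline L)\subseteq \overline L$, i.e. $\overline L$ is itself a pre-fixpoint. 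Applying $F$ once more, $F(F(\overline L))\subseteq F(\overline L)$, so $F(\overline L)$ is a pre-fixpoint too, whence $\overline L \subseteq F(\overline L)$. Thus $F(\overline L)=\overline L$, which componentwise reads $\sem{\phi_i}\rho[\overline X\mapsto\overline L] = L_i = \sem{min_i(\overline X,\overline\phi)}\rho$. Finally, since $min_i(\overline X,\overline\phi)$ is closed — the only variables occurring in $\overline\phi$ are those in $\overline X$ — and $\rhomin(X_j) = \sem{min_j(\overline X,\overline\phi)}\rho = L_j$, the environments $\rho[\overline X\mapsto\overline L]$ and $\rhomin$ agree on all free variables of $\phi_i$; by the coincidence lemma (another trivial induction on formulae) $\sem{\phi_i}\rho[\overline X\mapsto\overline L] = \sem{\phi_i}\rhomin$, which is the claim.

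I do not expect any real obstacle: the mathematical content is entirely standard lattice theory. The only place to be careful is the environment bookkeeping — stating and using monotonicity-in-$\rho$ and the coincidence lemma correctly, and checking that the non-standard ingredients of our logic ($\bbox{\alpha}$ with its convergence requirement, and $\Acc A$) do not disturb monotonicity, which they do not, since their semantics is monotone in the set argument and independent of the environment.
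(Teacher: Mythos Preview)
Your proposal is correct and follows essentially the same route as the paper's (appendix) argument: both parts are the Knaster--Tarski characterisation of the least pre-fixpoint on the product lattice $(2^S)^n$, with part~(i) immediate from the definition as an intersection of pre-fixpoints and part~(ii) the standard ``$F(\overline L)\subseteq\overline L$, hence $F(\overline L)$ is itself a pre-fixpoint, hence $\overline L\subseteq F(\overline L)$'' step. You are simply more explicit than the paper about the two ingredients it leaves implicit---monotonicity of $\sem{\cdot}$ in the environment and the coincidence lemma needed to pass from $\rho[\overline X\mapsto\overline L]$ to $\rhomin$---and your checks that the non-standard clauses $\bbox{\alpha}$ and $\Acc A$ do not disturb monotonicity are appropriate.
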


\subsection{Tests}
Another way to analyse the behaviour of a process is given by testing.
Testing a process can be thought of as an experiment in which another
process, called test, detects the actions performed by the tested
process, reacting to it by allowing or forbidding the execution of a
subset of observables. After observing the behaviour of the process,
the test could decree that it satisfies some property for which the
test was designed for by reporting the success of the experiment,
through the execution of a special action $\omega$.

Formally speaking, a  test is a state from a LTS 
$\mathcal{T} = \langle T, Act^\omega_{\tau}, \longrightarrow \rangle$, 
where $ Act^\omega_{\tau} = Act_{\tau} \cup \{\omega\}$ and $\omega$ 
is an action not contained in $Act_{\tau}$. 

Given a LTS of processes $\mathcal{L} = \langle S, Act_{\tau}, \longrightarrow \rangle$, an experiment
consists of a pair $p \;|\; t$ from the  product LTS 
$(\mathcal{L}\barra\mathcal{T})$. We refer to a maximal path
$  p \;|\; t \trans{\tau} p_1 \;|\; t_1 \trans{\tau} \ldots \ldots 
     \trans{\tau} p_k \;|\; t_k \trans{\tau} \ldots
$
as a \emph{computation} of $p \;|\; t$. It may be finite or infinite; it is successful if there 
exists some $n \geq 0$ such that $t_n \trans{\omega}$. As only $\tau$-actions can be performed in 
an experiment, we will omit the symbol $\tau$ in computations and in computation prefixes.
Successful computations lead to the definition of two well known \textit{testing relations}, \cite{dhn}:
\begin{defi}[May Satisfy, Must Satisfy] Assuming a LTS of processes and a LTS of tests, 
let $s$  and $t$ be a state and a test from such LTSs, respectively. We say
\begin{enumerate}[(a)]
\item $s \maysatisfy t$ if there exists a successful computation for the experiment 
$s \;|\; t$.
\item $s \mustsatisfy t$ if each computation of the experiment $s\; |\; t$\ is successful.
\end{enumerate}
\end{defi}
\leaveout{
Processes can now be compared in terms of the set of test that they may/must pass.
}
Later in the paper we will use a specific LTS of tests, whose states are all
the closed  terms generated by the grammar
\begin{equation}
t \is 0 \barra \alpha.t \barra \omega.0 \barra  X \barra t_1 + t_2 \barra \mu X.t \; .
\label{eq:tests}
\end{equation}
Again in this language  $X$\ is bound in $\mu X.t$, and  the test $t\{t'/X\}$ 
denotes the test $t$ in which  each free occurrence of $X$ is replaced by $t'$.
The transition relation is  defined by the following rules:\footnote{For the sake of clarity, 
the rules use an abuse of notation, by considering $\alpha$\ as an action from $Act_{\tau} \cup {\omega}$\ rather than from $Act_{\tau}$.}
\begin{displaymath}
\begin{tabular}{llll}
\begin{prooftree}
\;
\justifies
\alpha.t \trans{\alpha} t
\end{prooftree}
&\qquad
\begin{prooftree}
t_1 \trans{\alpha} t_1'
\justifies
t_1 + t_2 \trans{\alpha} t_1'
\end{prooftree}
&\qquad
\begin{prooftree}
t_2 \trans{\alpha} t_2'
\justifies
t_1 + t_2 \trans{\alpha} t_2'
\end{prooftree}
&\qquad
\begin{prooftree}
\;
\justifies
\mu X.T \trans{\tau} t\{(\mu X.t)/X\}
\end{prooftree} 
\end{tabular}
\end{displaymath}

The last rule states that a test of the form $\mu X.t$\ can always perform a $\tau$-action before evolving in the test $t\{\mu X.t/X\}$. 
This treatment of recursive processes will allow us to prove properties 
of paths of recursive tests and experiments by performing an induction on their length. 
Further, the following properties hold for a test $t$ in grammar \eqref{eq:tests}:

\begin{prop}
\label{prop:Tbf}

Let $\mathcal{T} = \langle T, Act_{\tau}, \longrightarrow \rangle$\ be the LTS generated by a state $t$\ in grammar \eqref{eq:tests}: then 
$\mathcal{T}$ is both branching finite and finite state.\qed
\end{prop}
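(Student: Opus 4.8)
The plan is to establish the two assertions separately, each by a structural induction on the term $t$ that generates the LTS. It is convenient to note first that substituting a closed term of grammar~\eqref{eq:tests} for a variable in a term of that grammar again yields a term of the grammar, and that whenever $s$ is a closed term of the grammar and $s \trans{\alpha} s'$ then $s'$ is again a closed term of the grammar; hence every state reachable from $t$ is a closed term of~\eqref{eq:tests}, and it suffices to reason about such terms.

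For \emph{branching finiteness} I would prove, by a direct structural induction on a closed term $s$, that the set $\{(\alpha,s') \mid s \trans{\alpha} s'\}$ is finite (so, a fortiori, $\Succ{s}$ is finite). The cases $0$, $\omega.0$ and $\alpha.s_1$ contribute at most one transition each; for $s_1 + s_2$ the transitions are exactly those of $s_1$ together with those of $s_2$, so the induction hypothesis applies; and for $\mu X.s_1$ the only applicable rule is the unfolding rule, contributing the single transition $\mu X.s_1 \trans{\tau} s_1\{\mu X.s_1 / X\}$. It is worth stressing that this last case needs no induction hypothesis, which is essential since $s_1\{\mu X.s_1 / X\}$ is not a subterm of $\mu X.s_1$; this is precisely why the chosen $\tau$-guarded treatment of recursion keeps branching finite.

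For \emph{finite-stateness} I would attach to every term $s$ of the grammar a set of terms $D(s)$, defined by structural recursion: $D(0) = \{0\}$, $D(X) = \{X\}$, $D(\omega.0) = \{\omega.0, 0\}$, $D(\alpha.s_1) = \{\alpha.s_1\} \cup D(s_1)$, $D(s_1 + s_2) = \{s_1 + s_2\} \cup D(s_1) \cup D(s_2)$, and, crucially, $D(\mu X.s_1) = \{\mu X.s_1\} \cup \{\, r\{\mu X.s_1 / X\} \mid r \in D(s_1)\,\}$. A routine structural induction shows that each $D(s)$ is finite, that $s \in D(s)$, and that $D(s)$ consists of (closed, when $s$ is closed) terms of the grammar. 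Since $s \in D(s)$, it then suffices to show that $D(s)$ is closed under $\longrightarrow$: applying this to the generating term $t$ gives that the set of states reachable from $t$ is contained in the finite set $D(t)$, so $\mathcal{T}$ is finite state, and branching finiteness of $\mathcal{T}$ follows from the first induction applied to its (finitely many) states, each a closed term of the grammar.

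The crux, and the step I expect to be the main obstacle, is closure of $D(s)$ under transitions in the case $s = \mu X.s_1$, where an element $r\{\mu X.s_1 / X\}$ of $D(s)$ may fire a transition that reaches into the substituted copy of $\mu X.s_1$. I would handle this with a substitution lemma, proved by induction on the structure of $r$ from the transition rules and the standard laws for composing substitutions (with the usual care over $\alpha$-renaming of the bound variable in the $\mu$-clause): if $r\{q/X\} \trans{\alpha} u$ then either $r \trans{\alpha} r'$ with $u = r'\{q/X\}$ for some $r'$, or $q \trans{\alpha} u$. Granting the lemma, the outer induction on $s$ is straightforward; in the case $\mu X.s_1$, a transition out of $r\{\mu X.s_1 / X\}$ of the first kind lands, via $r \trans{\alpha} r'$ and the induction hypothesis for the proper subterm $s_1$ (so there is no circularity), back in $D(\mu X.s_1)$, while a transition of the second kind is forced to be the unfolding step $\mu X.s_1 \trans{\tau} s_1\{\mu X.s_1 / X\}$, whose target lies in $D(\mu X.s_1)$ because $s_1 \in D(s_1)$. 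The base cases and the prefix and sum cases, as well as the remaining cases of the substitution lemma, are immediate from the transition rules.
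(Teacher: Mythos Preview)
Your argument is correct. The branching-finiteness part is essentially the same as the paper's: both reduce to the observation that each syntactic constructor contributes at most finitely many immediate transitions, with the key point that $\mu X.s_1$ has exactly one outgoing transition regardless of the body.

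For finite-stateness you take a genuinely different route. The paper defines, for each test $t$, a recursion-free companion $\underline{t}$ obtained by replacing every $\mu X.t'$ with $\tau.\underline{t'\{0/X\}}$, and then argues (rather informally) that the LTS generated by $t$ has no more states than the LTS generated by $\underline{t}$; since $\underline{t}$ is $\mu$-free, the latter is finite by an easy structural induction. Your approach instead builds an explicit finite over-approximation $D(t)$ of the reachable state set and proves it closed under $\longrightarrow$ via a substitution lemma for transitions. What your approach buys is rigour and self-containment: the closure argument is fully spelled out, and the substitution lemma is a clean, reusable statement. What the paper's approach buys is a more intuitive picture (unfolding $\mu$ merely adds back-edges, so erasing recursion cannot decrease the state count), at the cost of leaving the ``at most as many states'' comparison to the reader. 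Both are standard techniques for showing that a CCS-like calculus with guarded recursion is finite-state; yours is closer to the usual ``reachable states are substitution instances of subterms'' argument, while the paper's is a comparison-to-a-simpler-system argument.
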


\section{Testing formulae}\label{sec:tf}
Relative to a  process LTS $\langle S, Act_{\tau}, \longrightarrow \rangle$\ 
and a test LTS $\langle T, Act_{\tau}^{\omega}, \longrightarrow \rangle$, we 
now explore the relationship between tests from our
default LTS of tests and  formulae of \rechml. 
Given a test $t$, our goal is to find a formula $\phi$\ such that 
the set of processes which \maysatisfy/\mustsatisfy 
such a test is completely characterised by the 
interpretation $\sem{\phi}$. Moreover, we 
aim to establish exactly the subsets of 
\rechml for which each formula can be checked by 
some test, both in the \may and \must case.
 
For this purpose some definitions are necessary:
\begin{defi}
Let $\phi$\ be a \rechml formula and $t$ a test. We say that:
\begin{itemize}
\item $\phi$ \emph{must}-represents  the test $t$,  if
for all  $p \in S$,  $p \mustsatisfy t$  
if and only if $p \models \phi$.

\item $\phi$ is  \emph{must}-testable whenever there exists 
a test  which  $\phi$ \emph{must}-represents.

\item $t$ is \emph{must}-representable, if
  there exists some $\phi \in \rechml$ which
  \emph{must}-represents it respectively.
\end{itemize}
Similar definitions are given for may testing.\qed
\end{defi}
First some examples.
\begin{example}[Negative results]
\begin{enumerate}[(a)]\qquad
\item $\phi = [a]\fff$ is not \emph{may}-testable.\\
Let $s \in \sem{[a]\fff}$; a new process $p$\ can be built starting from $s$\ by letting  
$p \trans{\tau} p$, whenever
$s \trans{\alpha} s'$ then $p \trans{\alpha} s'$.\\
Processes $p$\ and $s$\ \maysatisfy the same set of tests. However, $p \notin \sem{[a]\fff}$, as $p \Uparrow$. 
Therefore\\ no test \emph{may}-represents $[a]\fff$.
\item $\phi = \dmnd a \ttt$\ is not \emph{must}-testable.\\
We show by contradiction that there exists no test $t$ that \emph{must}-represents $\phi$. 
To this end, we perform a case analysis on the structure of $t$.
\begin{itemize}
\item $t \trans{\omega}$: Consider the process $0$ with no transitions. Then $0 \mustsatisfy t$, 
whereas $0 \notin \sem\phi$.
\item $t \nottrans{\omega}$: Let $s \in \sem\phi$ and consider the process $p$ built up 
from $s$ according to the rules of the example above; we have $p \in \sem\phi$. On the 
other hand, $p \mustsatisfy t$ is not true; indeed the experiment $p \;|\;t$ leads to 
the unsuccessful computation 
$p\;|\;t \shortrightarrow p\;|\;t \shortrightarrow \cdots.$
\end{itemize}
Therefore there is no test $t$ which \emph{must}-represents $\phi$.
\item $\phi = \dmnd a \ttt \wedge \dmnd b \ttt$\ is not \emph{may}-testable.\\
Let $s$\ be the process whose only transitions are $s \trans{a} 0$, $s \trans{b} 0$.
Let also $p, p'$ be the processes whose only transitions are 
$p \trans{a} 0$, $p' \trans{b} 0$. We have $s \in \sem{\phi}$, whereas 
$p, p' \notin \sem{\phi}$. We show that whenever $s \maysatisfy$ a test $t$, 
then either $p \maysatisfy t$ or $p' \maysatisfy t$. Thus there exists 
no test which is \emph{may}-satisfied by exactly those processes in 
$\sem{\phi}$, and therefore $\phi$ is not 
\emph{may}-representable.
\leaveout{Let $t$ be an 
arbitrary test such that $s \maysatisfy t$; we will 
show that either $p \maysatisfy t$\ or $p' \maysatisfy t$ 
hold, where $p$\ and $p'$\ are the processes whose 
only transitions are $p \trans{a} 0, p' \trans{b} 0$.\\
As neither $p$\ and $p'$\ satisfy the formula $\phi$, 
it follows that there exists a process $q$\ 
such that $q \maysatisfy t$\ but 
$q \notin \sem{\phi}$, hence 
$\phi$\ is not \emph{may}-testable.\\
}
First, notice that if $s \maysatisfy t$, then at least one of the following holds:
\begin{enumerate}[(i)]
\item $t\Trans{\omega}$, \label{cond:1}
\item $t\Trans{a}t'\Trans{\omega}$, \label{cond:2}
\item $t\Trans{b}t'\Trans{\omega}$. \label{cond:3}
\end{enumerate}

If $t\Trans{\omega}$, then trivially both $p$\ and $p'$\ \maysatisfy $t$. On the other hand, if $t\Trans{a}t'\Trans{\omega}$, 
then there exist $t'', t_{\omega}$\ such that $t \Trans{\tau} t'' \trans{a} t' \Trans{\tau} t_{\omega} \trans{\omega}$. We can 
build the computation fragment for $p \barra t$\ such that
\[
p \barra t \shortrightarrow \cdots \shortrightarrow p \barra t'' \shortrightarrow 0 \barra t' \shortrightarrow \cdots \shortrightarrow 0 \barra t_{\omega}
\]

which is successful. Hence $p \maysatisfy t$. Finally, The case $t\Trans{b}t'\Trans{\omega}$ is similar.
\label{ex:c}
\item In an analogous way to \eqref{ex:c} it can be shown that $[a] \fff \vee [b] \fff$\ is not \emph{must}-testable.\qed
\end{enumerate}
\end{example}
We now 
investigate precisely which  formulae in \rechml can be represented by tests. 
To this end, we define two sub-languages, namely \mayhml and \musthml.
\begin{defi}{(Representable formulae)}
\begin{itemize}
\item The language \mayhml is defined to be the set of closed formulae generated by the following \rechml grammar fragment:
\begin{eqnarray}
\phi \is \ttt \barra \fff \barra X \barra \dmnd \alpha \phi \barra \phi_1 \vee \phi_2 \barra \lfp X \phi
\end{eqnarray}
\item The language \musthml is defined to be the set of closed formulae generated by the following \rechml grammar fragment:
\begin{eqnarray}
\phi \is \ttt \barra \fff \barra \Acc A \barra X \barra [\alpha]\phi \barra \phi_1 \wedge \phi_2 \barra \lfp X \phi 
\end{eqnarray}
\end{itemize}
\end{defi}

\noindent
Note that both sub-languages use the minimal fixpoint operator only;
this is not surprising, as informally at least testing is an inductive
rather than a coinductive property.  Since there exist formulae of the
form $[\alpha]\phi$, $\phi_1 \wedge \phi_2$ which are not
\emph{may}-representable, the $[\cdot]$ modality and the conjunction
operator, have not been included in \mayhml \leaveout{ The modality
  $[\cdot]$\ and the conjunction operator $\wedge$ are not allowed in
  \mayhml; the above examples show in fact that there exist formulae
  of the form $[\alpha]\phi$\ which are not \emph{may}-testable, and
  that conjunction of two formulae is not always \emph{may}-testable.
} The same argument applies to the modality $\dmnd \cdot$ and the
disjunction operator $\vee$\ in the must case, which are therefore
not included in \musthml.

Note also that the modality $\bbox{\cdot}$ is only used in \musthml,
which will be compared with \emph{must-testing}. No diverging process 
must satisfy a non-trivial test $t$, i.e. such that  $t \nottrans{\omega}$.
Hence, in this setting, the convergence restriction  on
this modality is natural. 


We have now completed the set of definitions setting up our framework of 
properties and tests. In the remainder of the paper we prove the results 
announced, informally, in the Introduction.

\section{The must case}\label{sec:must}

We will now develop the mathematical basis needed to relate \musthml
formulae and the \must testing relation; in this section we will assume
that the LTS of processes is branching finite.\\

\begin{lem}
\label{lem:divergence}
Let $\phi \in \musthml$, and let $p \in \sem{\phi}$, where $p
\Uparrow$: then $\sem{\phi}$\ is the entire process space,
i.e. $\sem{\phi} = S$. \qed
\end{lem}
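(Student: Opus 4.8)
Write $D = \{s \in S \mid s \Uparrow\}$ for the set of divergent states, and call a set $P \subseteq S$ \emph{good} if either $P = S$ or $P \cap D = \emptyset$. The plan rests on one trivial closure property together with one structural induction. The closure property is that good sets are closed under arbitrary unions and arbitrary intersections: for a family $\{P_i\}_{i \in I}$ of good sets, either some $P_i$ equals $S$, whence $\bigcup_i P_i = S$, or every $P_i$ is disjoint from $D$, whence so is $\bigcup_i P_i$; dually, either every $P_i = S$, so $\bigcap_i P_i = S$, or some $P_i$ is disjoint from $D$, so $\bigcap_i P_i$ is too.

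Next I would strengthen the statement to open formulae and prove, by induction on the structure of $\phi \in \musthml$, that $\sem{\phi}\rho$ is good whenever $\rho(X)$ is good for every $X \in Var$. The lemma is the special case in which $\phi$ is closed: $\sem{\phi}$ is then good, and since $p \in \sem{\phi} \cap D$ the option $\sem{\phi} \cap D = \emptyset$ is excluded, forcing $\sem{\phi} = S$. In the induction the non-recursive cases are read straight off Table~\ref{tab:interpr}: $\sem{\ttt}\rho = S$ and $\sem{\fff}\rho = \emptyset$ are good; every state in $\sem{\Acc A}\rho$ and in $\sem{[\alpha]\psi}\rho$ is required to converge, which is precisely where the non-standard convergence clause on $\bbox{\cdot\alpha\cdot}$ is used, so these sets are disjoint from $D$ and hence good irrespective of $\sem{\psi}\rho$; $\sem{X}\rho = \rho(X)$ is good by hypothesis; and $\sem{\phi_1 \wedge \phi_2}\rho = \sem{\phi_1}\rho \cap \sem{\phi_2}\rho$ is an intersection of good sets by the induction hypothesis. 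For $\phi = \lfp{X}{\psi}$, the operator $F(P) \triangleq \sem{\psi}\rho[X \mapsto P]$ on $2^S$ is monotone, so its least fixpoint, which is $\sem{\lfp{X}{\psi}}\rho$, is obtained by iterating $F$ from $\emptyset$ through the ordinals: $F^0(\emptyset) = \emptyset$, $F^{\kappa + 1}(\emptyset) = F(F^\kappa(\emptyset))$, and $F^\lambda(\emptyset) = \bigcup_{\kappa < \lambda} F^\kappa(\emptyset)$ at limits. A transfinite induction then shows every approximant is good: $\emptyset$ is good; if $F^\kappa(\emptyset)$ is good then $\rho[X \mapsto F^\kappa(\emptyset)]$ sends every variable to a good set, so $F^{\kappa+1}(\emptyset) = \sem{\psi}\rho[X \mapsto F^\kappa(\emptyset)]$ is good by the induction hypothesis on $\psi$; and the limit stages are unions of good sets. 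Hence $\sem{\lfp{X}{\psi}}\rho$, being a union of good sets, is good.

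The one delicate point will be this fixpoint case, and in particular the choice to compute the least fixpoint by iteration from $\emptyset$ rather than from its defining form $\bigcap\{P \mid F(P) \subseteq P\}$: the latter intersection ranges over \emph{all} pre-fixpoints of $F$, some of which need not be good, so goodness of the outcome is not apparent from it directly. Since this section assumes the process LTS finite branching, $\bbox{\cdot\alpha\cdot}$, and hence every \musthml connective, is in fact $\omega$-continuous on the convergent states by a K\"onig's-lemma argument, so plain Kleene iteration over $\omega$ would do in place of the transfinite one; but monotonicity alone already makes the argument go through.
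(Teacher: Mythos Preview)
Your proof is correct and takes a genuinely different, more semantic route than the paper's. The paper proceeds by syntactic elimination on the shape of $\phi$: a divergent $p$ is excluded from $\sem{\fff}$, $\sem{\Acc A}$, and $\sem{[\alpha]\psi}$ outright; for a fixpoint $\lfp X \psi$ it unfolds via $\sem{\lfp X \psi} = \sem{\psi\{\lfp X \psi/X\}}$ and argues by cases on whether $\psi$ contains a box or $\Acc{\cdot}$ operator (reducing to the previous cases after unfolding) or an unguarded free occurrence of $X$ (yielding something equivalent to $\lfp X {(X \wedge \psi')}$, whose least solution is $\emptyset$); the only formulae that survive this elimination are those generated from $\ttt$ by $\wedge$ and vacuous fixpoints, and these all denote $S$. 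You instead isolate the semantic invariant directly --- your ``good'' sets are exactly the class $\mathcal{C}$ that the paper introduces immediately \emph{after} this lemma --- show it is closed under arbitrary unions and intersections, and then prove by structural induction on open formulae (under environments mapping every variable into $\mathcal{C}$) that every \musthml\ interpretation lies in $\mathcal{C}$, handling recursion uniformly by transfinite iteration of the monotone body from $\emptyset$. Your argument is shorter, avoids the somewhat delicate syntactic unfolding manipulations, and would extend unchanged to simultaneous least fixpoints; the paper's argument, by contrast, makes the syntactic reason for triviality (only $\ttt$, $\wedge$, and dead fixpoints can remain) explicit.
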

\noindent
This lemma has important consequences; it means formulae in \musthml either have the trivial interpretation as
the full set of states $S$, or they are only satisfied by convergent states. 
\begin{defi}
  Let $\mathcal{C}$ be the set of subsets of $S$ determined by:
\begin{itemize}
\item $S \in \mathcal{C}$,
\item $X \in \mathcal{C}, s \in X$ implies $s \Downarrow$. \qed
\end{itemize}
\end{defi}

\begin{prop}
\label{prop:cpo}
$\mathcal{C}$ ordered by set inclusion is a  \emph{continuous partial
order}, \emph{cpo}.
\end{prop}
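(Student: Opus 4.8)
The plan is to verify directly the two defining requirements of a cpo, with the order being set inclusion inherited from $2^S$: the existence of a least element, and the existence of suprema for directed subsets (in particular, for chains).

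First I would observe that $\emptyset \in \mathcal{C}$: the defining clause ``$s \in X$ implies $s \Downarrow$'' holds vacuously for $X = \emptyset$. Since $\emptyset \subseteq X$ for every $X \in \mathcal{C}$, this is the least element. Next, given a directed family $D \subseteq \mathcal{C}$, I would show that $\bigcup D$ is the supremum of $D$ in $\mathcal{C}$. The only nontrivial point is that $\bigcup D$ actually lies in $\mathcal{C}$; once this is known, $\bigcup D$ is plainly an upper bound of $D$, and any upper bound $Y \in \mathcal{C}$ of $D$ is in particular a superset of each member of $D$, whence $\bigcup D \subseteq Y$, so $\bigcup D$ is the least upper bound. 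To see that $\bigcup D \in \mathcal{C}$ I would split into two cases. If $S \in D$, then $\bigcup D = S \in \mathcal{C}$. Otherwise every $X \in D$ is distinct from $S$, so by the defining clause of $\mathcal{C}$ each such $X$ consists solely of convergent states; hence every element of $\bigcup D$ converges and therefore $\bigcup D \in \mathcal{C}$. (The empty directed set falls under the second case, giving $\bigcup \emptyset = \emptyset \in \mathcal{C}$, consistently with the least element computed above.)

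There is no genuine obstacle here; the only subtlety is that the supremum must be taken within the sub-poset $\mathcal{C}$ rather than in the full powerset $2^S$, which is exactly why the case analysis on whether $S \in D$ is required — this mirrors, at the level of the domain, the dichotomy of Lemma~\ref{lem:divergence} between the trivial interpretation $S$ and interpretations containing only convergent states. In fact the same argument shows $\mathcal{C}$ is closed under arbitrary unions (and, dually, arbitrary intersections), so $\mathcal{C}$ is even a complete lattice; but only the cpo structure is needed for the fixpoint constructions that follow, so I would state and prove just that.
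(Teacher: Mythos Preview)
Your proof is correct and follows essentially the same approach as the paper: identify $\emptyset$ as the least element and show that the union of a chain (you treat the slightly more general case of a directed family) again lies in $\mathcal{C}$. The paper's proof is terser---it simply states that it suffices to check closure under unions of chains and leaves the verification implicit---whereas you spell out the case split on whether $S$ occurs in the family, which is exactly the right way to fill in that gap.
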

\begin{proof}
  The empty set is obviously the least element in $\mathcal{C}$. So it is sufficient to show
that if  $X_0 \subseteq X_1 \subseteq \cdots$\ is a chain of elements  in $\mathcal{C}$ then 
$\bigcup_n X_n$ is also in $\mathcal{C}$. 
\end{proof}
We can now take advantage of the fact that \musthml actually has 
a continuous interpretation in $(\mathcal{C}, \subseteq)$. 
The only non trivial case here is the continuity of 
the operator $\bbox{\cdot\alpha\cdot}$:
\begin{prop}
\label{prop:dmndcontinuous}
Suppose the LTS of processes is finite-branching: If  
$X_0 \subseteq X_1 \subseteq \cdots$\ is a chain 
of elements in $\mathcal{C}$ then
\[
\bigcup_n [\cdot \alpha \cdot] X_n = [\cdot \alpha \cdot] \bigcup_n X_n.
\]
\qed
\end{prop}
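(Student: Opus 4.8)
The plan is to prove the two inclusions separately, noting that one direction is just monotonicity and the real content is in the other. For the inclusion $\bigcup_n [\cdot\alpha\cdot]X_n \subseteq [\cdot\alpha\cdot]\bigcup_n X_n$, I would simply observe that each $X_n \subseteq \bigcup_m X_m$, that $[\cdot\alpha\cdot]$ is monotone (it is one of the monotone operators used to interpret \rechml), and hence $[\cdot\alpha\cdot]X_n \subseteq [\cdot\alpha\cdot]\bigcup_m X_m$ for every $n$; taking the union over $n$ gives the claim. This direction does not use finite branching at all.

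The substantive direction is $[\cdot\alpha\cdot]\bigcup_n X_n \subseteq \bigcup_n [\cdot\alpha\cdot]X_n$, and here is where finite branching is essential. Take $s \in [\cdot\alpha\cdot]\bigcup_n X_n$. By definition this means $s \Downarrow$ and every $\alpha$-derivative $s'$ of $s$ (i.e.\ every $s'$ with $s \Trans{\alpha} s'$) lies in $\bigcup_n X_n$, so for each such $s'$ there is an index $n(s')$ with $s' \in X_{n(s')}$. I would then argue that the set $D = \{ s' \mid s \Trans{\alpha} s' \}$ of $\alpha$-derivatives of $s$ is finite: since $s$ converges and the LTS is finite branching, König's lemma applied to the tree of $\tau$-transitions (and the single $a$-step in the case $\alpha = a$) rooted at $s$ shows there are only finitely many reachable states, hence only finitely many $\alpha$-derivatives. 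Given $D = \{s_1', \dots, s_k'\}$ finite, set $N = \max\{ n(s_1'), \dots, n(s_k') \}$; since the $X_n$ form an increasing chain, every $s_i' \in X_N$, so every $\alpha$-derivative of $s$ is in $X_N$, and together with $s \Downarrow$ this gives $s \in [\cdot\alpha\cdot]X_N \subseteq \bigcup_n [\cdot\alpha\cdot]X_n$.

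The main obstacle is the finiteness-of-derivatives argument, i.e.\ making precise that convergence plus finite branching forces the set of $\alpha$-derivatives to be finite. The point is that $s \Downarrow$ means there is no infinite $\tau$-path from $s$, so the tree of finite $\tau$-paths out of $s$ is well-founded; finite branching makes it finitely branching; by König's lemma it is then finite, so $\{ s'' \mid s \Trans{\tau} s'' \}$ is finite. For $\alpha = \tau$ this already gives finiteness of $D$. For $\alpha = a$, each $s'' $ in this finite set has a finite set $\Succ{a, s''}$ of $a$-successors (finite branching again), and each such successor $s'''$ has a finite $\tau$-closure by the same convergence-plus-König argument — note $s \Downarrow$ does not directly give $s''' \Downarrow$, but that is not needed, because the definition of $[\cdot\alpha\cdot]X_n$ only requires $s \Downarrow$ and quantifies over $\alpha$-derivatives without any convergence condition on them; what we need is merely that $D$ is finite, and $D$ is a finite union of finite $\tau$-closures. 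One subtlety worth flagging: convergence of $s$ is genuinely required, since without it $\{ s'' \mid s \Trans{\tau} s''\}$ can be infinite even with finite branching, and then the chain-index argument fails; this is exactly why the statement presupposes $X_n \in \mathcal{C}$, so that membership of $s$ in the left-hand side already carries $s \Downarrow$ with it.
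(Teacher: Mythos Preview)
Your argument has a real gap in the $\alpha = a$ case. You claim that the set $D = \{\, s' \mid s \Trans{a} s'\,\}$ is finite, arguing that the $\tau$-closure of $s$ is finite (correct, from $s \Downarrow$ and K\"onig), that each $a$-successor set is finite (correct, finite branching), and that ``each such successor $s'''$ has a finite $\tau$-closure by the same convergence-plus-K\"onig argument''. But that last step needs $s''' \Downarrow$, which you yourself note does \emph{not} follow from $s \Downarrow$; you then say it is ``not needed'' while still using it. It \emph{is} needed: take $s$ with a single transition $s \trans{a} s_0$ and $s_0 \trans{\tau} s_1 \trans{\tau} s_2 \trans{\tau} \cdots$. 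Then $s \Downarrow$, the LTS is finite branching, yet $D$ is infinite. So finiteness of $D$ cannot be derived from $s \Downarrow$ and finite branching alone.

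What rescues the argument is precisely the hypothesis you under-use, namely $X_n \in \mathcal{C}$. The paper first disposes of the degenerate case where some $X_i = S$ (both sides then equal $\{\,s \mid s \Downarrow\,\}$). In the remaining case every $X_n \neq S$, hence every element of every $X_n$ converges. Now each $a$-derivative $s'$ of $s$ lies in $\bigcup_n X_n$ by assumption, so $s' \in X_m$ for some $m \neq S$, whence $s' \Downarrow$. In particular every state $s'''$ reached immediately after the $a$-step is itself an $a$-derivative and therefore converges, so its $\tau$-closure is finite by K\"onig, and your finite-union argument now goes through. In short, the role of $\mathcal{C}$ is not only to give $s \Downarrow$ (which you correctly identify) but also to force convergence of the $\alpha$-derivatives, and that second use is what makes the K\"onig step work after the visible action.
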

This continuous interpretation of \musthml allows us to 
use  chains of finite
approximations for these formulae of \musthml. 
That is given $\phi \in \musthml$\ and $k\geq 0$, recursion free
formulae $\phi^k$\ will be defined such that $\sem{\phi^k} \subseteq
\sem{\phi^{(k+1)}}$\ and $\bigcup_{k\geq 0} = \sem{\phi}$. We can therefore
 reason inductively on approximations in order to prove
properties of recursive formulae.
\begin{defi}[Formulae approximations]
For each formula $\phi$\ in \musthml define
\begin{eqnarray*}
\phi^0 &\triangleq& \fff\\
\phi^{(k+1)} &\triangleq& \phi \mbox{\hspace{135pt}if } \phi = \ttt,\fff \mbox{ or } \Acc A\\
\leaveout{
\ttt^{(k+1)} &\triangle eq& \ttt\\
\fff^{(k+1)} &\triangleq &\fff\\
(\Acc A)^{(k+1)} &\triangleq& \Acc A\\
}
([\alpha]\phi)^{(k+1)} &\triangleq& [\alpha](\phi)^{(k+1)}\\
(\phi_1 \wedge \phi_2)^{(k+1)} &\triangleq& \phi_1^{(k+1)} \wedge \phi_2^{(k+1)}\\
(\lfp X \phi)^{(k+1)} &\triangleq& (\phi\{min(X, \phi)/X\})^k
\end{eqnarray*}\qed
\end{defi}

It is obvious that for every $\phi \in \musthml$, $\sem{\phi^k} \subseteq
\sem{\phi^{(k+1)}}$ for every $k \geq 0$; The fact that the union of the
approximations of $\phi$\ converges to $\phi$\ itself depends on the
continuity of the interpretation: 
\begin{prop}
\label{cor:continuity}
\[
\bigcup_{k\geq 0} \sem{\phi^k} = \sem{\phi}
\]
\end{prop}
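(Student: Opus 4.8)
The plan is to prove $\bigcup_{k\geq 0}\sem{\phi^k}=\sem{\phi}$ by structural induction on $\phi\in\musthml$, using the continuity results (Propositions~\ref{prop:cpo}, \ref{prop:dmndcontinuous}) and the fixpoint properties (Theorem~\ref{thm:fixpointprop}) for the recursive case. First I would observe that the inclusion $\bigcup_k\sem{\phi^k}\subseteq\sem{\phi}$ is routine, since we already know each $\sem{\phi^k}\subseteq\sem{\phi^{(k+1)}}$ and $\sem{\phi^k}\subseteq\sem{\phi}$ (the latter being an easy structural induction on $\phi$ using monotonicity of the operators, together with $\sem{\fff}=\emptyset\subseteq\sem{\phi}$ and, in the fixpoint case, $\sem{(\phi\{min(X,\phi)/X\})^k}\subseteq\sem{\phi\{min(X,\phi)/X\}}=\sem{\lfp X\phi}$ by Theorem~\ref{thm:fixpointprop}(\ref{thm:fixprop})). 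So the real content is the reverse inclusion $\sem{\phi}\subseteq\bigcup_k\sem{\phi^k}$.

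For the base cases $\phi=\ttt,\fff,\Acc A$ this is immediate, since $\phi^{(k+1)}=\phi$ so the union already contains $\sem{\phi}$ from $k=1$ on. For $\phi=[\alpha]\psi$: given $s\in\sem{[\alpha]\psi}$, for each $\alpha$-derivative $s'$ we have $s'\in\sem{\psi}$, hence by induction $s'\in\sem{\psi^{k_{s'}}}$ for some $k_{s'}$. Since the process LTS is finite-branching, there are only finitely many such $s'$, so we may take $k=\max_{s'}k_{s'}$, and then $s\in\sem{[\alpha]\psi^k}=\sem{([\alpha]\psi)^k}$ using monotonicity (the convergence condition on $s$ is unaffected). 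The $\wedge$ case is analogous and even easier: two derivative indices, take their max. (This is exactly where finite branching and the fact that $\bigcup_n[\cdot\alpha\cdot]X_n=[\cdot\alpha\cdot]\bigcup_n X_n$ from Proposition~\ref{prop:dmndcontinuous} get used — indeed one can phrase the $[\alpha]$ step as a direct appeal to that proposition applied to the chain $\sem{\psi^k}$.)

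The main obstacle is the fixpoint case $\phi=\lfp X\psi$, where $\phi^{(k+1)}=(\psi\{min(X,\psi)/X\})^k$ — note the approximation peels off one unfolding \emph{and} drops one level of $k$, so a naive induction does not immediately close. The approach I would take is: let $\theta\triangleq\psi\{min(X,\psi)/X\}$, so that by Bek\'ic / the fixpoint properties $\sem{\lfp X\psi}=\sem{\theta}$ (Theorem~\ref{thm:fixpointprop}(\ref{thm:fixprop})), and $\phi^{(k+1)}=\theta^k$. If $\theta$ were a \emph{smaller} formula than $\phi$ we would be done by the structural induction hypothesis applied to $\theta$; since it is not syntactically smaller, I would instead set up the induction more carefully — either by an auxiliary induction on $k$ showing $\sem{\lfp X\psi}=\bigcup_k\sem{\phi^k}$ directly from the Knaster–Tarski characterisation, or by exploiting that in the cpo $(\mathcal{C},\subseteq)$ the continuous functional $F(P)=\sem{\psi}\rho[X\mapsto P]$ has least fixpoint $\bigcup_n F^n(\emptyset)$, and identifying $\sem{\phi^n}$ with (a formula denoting) $F^n(\emptyset)$ by induction on $n$: $\sem{\phi^0}=\emptyset=F^0(\emptyset)$, and $\sem{\phi^{(n+1)}}=\sem{\theta^n}$, which by the structural IH (on $\psi$, applied within the substituted formula, tracking the environment at $X$) equals $F(\bigcup_j F^j(\emptyset)\text{ up to level }n)=F^{n+1}(\emptyset)$. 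Continuity of $F$ on $(\mathcal{C},\subseteq)$ — which is precisely what Propositions~\ref{prop:cpo} and \ref{prop:dmndcontinuous} secure — then gives $\sem{\lfp X\psi}=\bigsqcup_n F^n(\emptyset)=\bigcup_n\sem{\phi^n}$, completing the argument. Care is needed to keep the environment bookkeeping straight when $\psi$ has other free variables, but since we only ever evaluate closed formulae the outer variables cause no trouble.
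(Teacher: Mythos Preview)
Your overall strategy is sound and is, in fact, an unpacking of what the paper's proof leaves implicit: the paper's entire argument is the sentence ``This is true in the initial continuous interpretation of the language, and therefore also in our interpretation'', with all details deferred to a reference. So the paper does not carry out the structural induction you sketch; it simply invokes the standard domain-theoretic fact that, once the operators are shown to be continuous on the cpo $(\mathcal{C},\subseteq)$ (which is what Propositions~\ref{prop:cpo} and~\ref{prop:dmndcontinuous} secure), least fixpoints coincide with suprema of their Kleene iterates, and the syntactic approximants $\phi^k$ compute those iterates. In that sense your proposal and the paper's proof are morally the same argument, with you doing the work the paper outsources.

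That said, your fixpoint case is where the sketch wobbles. The identification $\sem{\phi^n}=F^n(\emptyset)$ is not literally correct: already for $\phi=\lfp X{[a]X}$ one has $\sem{\phi^1}=\sem{\fff}=\emptyset$ while $F^1(\emptyset)=[\cdot a\cdot]\emptyset$, which is generally non-empty, so there is at least an index shift (in this example $\sem{\phi^{n+1}}=F^n(\emptyset)$). More seriously, the step ``by the structural IH on $\psi$, applied within the substituted formula'' does not type-check as written: the induction hypothesis is about $\psi$, not about $\theta=\psi\{\phi/X\}$, and the approximation operator $(\cdot)^k$ does not commute with substitution in any simple way, since the copies of $\phi$ substituted for $X$ are themselves recursively unfolded by $(\cdot)^k$. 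The clean route---and presumably the one in the cited reference---is to generalise the statement to open formulae interpreted in environments, establish continuity of the interpretation compositionally, and then relate the syntactic approximants to the semantic Kleene chain by a separate lemma. Your intuition is right; the bookkeeping is just more delicate than your last paragraph acknowledges.
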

\begin{proof}
 This is true in the initial continuous interpretation of the language, and therefore also in our interpretation.
  For details see  \cite{finiteapprox}. 
\end{proof}

Having established these properties of the interpretation of formulae
in \musthml, we now show that they are all \emph{must}-testable.  The
required tests are defined by induction on the structure of the
formulae.
\begin{defi}
For each $\phi$\ in \musthml define $\Tmust \phi$\ as follows:
\begin{eqnarray}
\Tmust \ttt &=& \omega.0 \label{eq:tmusttt}\\
\Tmust \fff &=& 0 \label{eq:tmustff}\\
\Tmust {\Acc A} &=& \sum_{a \in A} a.\omega.0 \label{eq:tmustacc}\\
\Tmust {X} &=& X \label{eq:tmustX}\\
\Tmust {[\tau] \phi} &=& \tau.\Tmust \phi \label{eq:tmusttau}\\
\Tmust {[a] \phi} &=& a. \Tmust \phi + \tau.\omega.0 \label{eq:tmusta}\\
\Tmust {\phi_1 \wedge \phi_2} &=& \begin{cases} 
	\omega.0 & \mbox{if } \phi_1 \wedge \phi_2 \mbox{ is closed and }\\
	&\mbox{logically equivalent to }\ttt\\
	&\\
	\tau.Tmust {\phi_1} + \tau. \Tmust{\phi_2}&\mbox{otherwise}
	\end{cases} \label{eq:tmustwedge}\\
\Tmust{\lfp X \phi} &=& \begin{cases}
	\Tmust \phi & \mbox{ if } \phi \mbox{ is closed}\\
	\mu X.\Tmust \phi & \mbox{otherwise}
	\end{cases} \label{eq:tmustmin}
\end{eqnarray}\qed
\end{defi}

For each formula $\phi$\ in $\musthml$, the test $\Tmust \phi$\ is defined 
in a way such that the set of processes which  $\mustsatisfy$\ $\Tmust\phi$\ is exactly $\sem{\phi}$. 
Before supplying the details of a formal proof of this statement, let us comment on the definition of $\Tmust\phi$.\\
Cases (\ref{eq:tmusttt}), (\ref{eq:tmustff}) and (\ref{eq:tmustX}) are straightforward.
In the case of $\Acc A$, the test allows only those action which are in $A$\ to be performed by a process, after which it reports success.\\
For the box operator, a distinction has to be made between $[a]\phi$\ and $[\tau]\phi$. In the former we have 
to take into account that a converging process which cannot perform a weak $a$-action 
satisfies such a property; thus, synchronisation through the execution of a $a$-action is allowed, but a possibility for the test to report success 
after the execution of an internal action is given.
In the case of $\bbox{\tau}\phi$ no synchronization with any action is required; however, 
since we are adding a convergence requirement to formula $\phi$, we have to avoid the possibility that the 
test $\Tmust{\bbox{\tau}\phi}$ can immediately perform a $\omega$ action. This is done by requiring the test 
$\Tmust{\bbox{\tau}\phi}$ to perform only an internal action.\\
Finally, (\ref{eq:tmustwedge})\ and (\ref{eq:tmustmin}) are defined by distinguishing between two cases; this is because a formula of the form $\phi_1 \wedge \phi_2$\ or $\lfp X \phi$\ can be logically equivalent to $\ttt$, whose interpretation is the entire state space. However, the second clause in the definition of $\Tmust \phi$\ for such formulae 
require the test to perform a $\tau$\ action before performing any other activity, thus at most converging processes \mustsatisfy such a test.

In order to give a formal proof that $\Tmust{\phi}$ does indeed capture the formula $\phi$ we need to establish some preliminary
properties. The first essentially says that no formula of the form $\lfp X \phi$, with $\phi$\ not closed, will be interpreted in the whole state space.

\begin{lem} 
\label{lem:statespaceformulae}
Let $\phi = \lfp X \psi$, with $\psi$\ not closed. Then $\sem{\phi} \neq S$.\qed
\end{lem}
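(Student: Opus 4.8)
The plan is to prove the stronger fact that every state in $\sem{\lfp{X}{\psi}}$ converges. Write $C = \{s \in S \mid s \Downarrow\}$ for the set of convergent states; I will show $\sem{\lfp{X}{\psi}}\rho \subseteq C$ for every environment $\rho$, which suffices since $C \neq S$ (the process LTS contains at least one divergent state). Note that the hypothesis that $\psi$ is not closed is used exactly to guarantee that the variable $X$ genuinely occurs free in $\psi$.

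The core of the argument is the following claim, proved by structural induction on $\chi \in \musthml$: \emph{if $X$ occurs free in $\chi$ and $\sigma$ is any environment with $\sigma(X) \subseteq C$, then $\sem{\chi}\sigma \subseteq C$.} The cases $\chi = \ttt, \fff, \Acc A$, and $\chi = Y$ with $Y \neq X$, are vacuous, since $X$ is not free in such $\chi$; for $\chi = X$ we have $\sem{X}\sigma = \sigma(X) \subseteq C$. For $\chi = [\alpha]\chi'$ the claim is immediate from the definition of the box operator: $\bbox{\cdot\alpha\cdot}P$ contains only convergent states for every $P$, which is precisely the non-standard convergence clause built into our semantics. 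For $\chi = \chi_1 \wedge \chi_2$, the variable $X$ is free in at least one conjunct $\chi_i$; the induction hypothesis gives $\sem{\chi_i}\sigma \subseteq C$, and hence $\sem{\chi_1 \wedge \chi_2}\sigma = \sem{\chi_1}\sigma \cap \sem{\chi_2}\sigma \subseteq \sem{\chi_i}\sigma \subseteq C$. Finally, for $\chi = \lfp{Y}{\chi'}$ we have $Y \neq X$ and $X$ free in $\chi'$; since the environment $\sigma[Y \mapsto C]$ still sends $X$ to $\sigma(X) \subseteq C$, the induction hypothesis gives $\sem{\chi'}\sigma[Y \mapsto C] \subseteq C$, i.e. $C$ is a post-fixpoint of the monotone map $P \mapsto \sem{\chi'}\sigma[Y \mapsto P]$; as $\sem{\lfp{Y}{\chi'}}\sigma$ is the least fixpoint of this map, it lies below every post-fixpoint, so $\sem{\lfp{Y}{\chi'}}\sigma \subseteq C$.

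To finish, fix any $\rho$ and apply the claim to $\psi$ with the environment $\rho[X \mapsto C]$: since $X$ is free in $\psi$ and $\rho[X \mapsto C](X) = C$, we obtain $\sem{\psi}\rho[X \mapsto C] \subseteq C$, so $C$ is a post-fixpoint of $P \mapsto \sem{\psi}\rho[X \mapsto P]$ and therefore $\sem{\lfp{X}{\psi}}\rho \subseteq C \neq S$. The one point to get right is the formulation of the inductive claim: tracking a single distinguished free variable $X$ under the side condition $\sigma(X) \subseteq C$ — rather than, say, requiring all free variables of $\chi$ to be interpreted inside $C$ — is what makes the conjunction and fixpoint cases go through cleanly, and phrasing the fixpoint step via post-fixpoints avoids any need to appeal to continuity (Proposition~\ref{cor:continuity}) of the open-formula interpretation.
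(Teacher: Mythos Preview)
Your argument is correct and takes a genuinely different route from the paper. The paper proves the lemma by contradiction: assuming $\sem{\phi}=S$, it picks a divergent $p\in\sem{\phi}$ and then re-runs the case analysis from the proof of Lemma~\ref{lem:divergence}, which involves syntactic rewriting of $\lfp{X}{\psi}$ into equivalent formulae (either exposing a leading $[\alpha]$, or reaching the shape $\lfp{X}{X\wedge\psi'}$ whose least fixpoint is $\emptyset$). Your proof instead establishes the stronger inclusion $\sem{\lfp{X}{\psi}}\subseteq C$ directly, via a clean structural induction that tracks a single distinguished free variable and exploits (i) the built-in convergence clause of $\bbox{\cdot\alpha\cdot}$ and (ii) the pre-fixpoint characterisation of $\lfp{}{}$. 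This is more self-contained: it avoids the somewhat informal appeals to ``equivalent formulae'' and does not require unfolding the fixpoint at all. Two small remarks: what you call a ``post-fixpoint'' (a set $C$ with $f(C)\subseteq C$) is more commonly called a pre-fixpoint, though the mathematics is right either way; and your parenthetical assumption that the process LTS contains at least one divergent state is exactly the same implicit assumption the paper's proof makes (it begins ``let $p$ be a process such that $p\Uparrow$''), so you are not adding any hypothesis beyond what the original argument already needs.
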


Then we state some simple properties about recursive tests.

\begin{lem}\qquad
\label{lem:testprops}
\begin{itemize}
\item $p \mustsatisfy \mu X.t$\ implies $p \mustsatisfy t\{\mu X.t/X\}$.
\item $p \Downarrow, p \mustsatisfy t\{\mu X.t/X\}$\ implies $p \mustsatisfy \mu X.t$.\qed
\end{itemize}
\end{lem}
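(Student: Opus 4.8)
The plan is to prove both implications by unfolding the definition of $\mustsatisfy$ in terms of computations, exploiting the single SOS rule $\mu X.t \trans{\tau} t\{\mu X.t/X\}$ which guarantees that $\mu X.t$ has exactly one outgoing transition, namely a $\tau$-move to its unfolding. The key observation is that every computation of the experiment $p \mathbin{|} \mu X.t$ must begin with a step that either comes from a $\tau$-move of $p$ (rule one for parallel composition) or from the unfolding $\mu X.t \trans{\tau} t\{\mu X.t/X\}$ (rule two); there is no synchronisation step available at the first position since $\mu X.t$ has no visible action. This lets me set up a correspondence between computations of $p \mathbin{|} \mu X.t$ and computations of $p \mathbin{|} t\{\mu X.t/X\}$.

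For the first bullet, I would argue contrapositively. Suppose $p \mathrel{\not\!\!\mustsatisfy} t\{\mu X.t/X\}$, i.e.\ there is an unsuccessful computation $C$ of $p \mathbin{|} t\{\mu X.t/X\}$. Prefixing $C$ with the single step $p \mathbin{|} \mu X.t \shortrightarrow p \mathbin{|} t\{\mu X.t/X\}$ yields a computation $C'$ of $p \mathbin{|} \mu X.t$; since the added state $p \mathbin{|} \mu X.t$ has no $\omega$ transition (the only action of $\mu X.t$ is $\tau$), $C'$ is unsuccessful, so $p \mathrel{\not\!\!\mustsatisfy} \mu X.t$. Here one should also check $C'$ is maximal, which is immediate because $C$ is maximal and we only prepended a step.

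For the second bullet, assume $p \Downarrow$ and $p \mustsatisfy t\{\mu X.t/X\}$, and take any computation $C$ of $p \mathbin{|} \mu X.t$. Since $p$ converges, only finitely many of the initial steps of $C$ can be left-component $\tau$-moves of $p$ before the unfolding step must occur — more precisely, either $C$ eventually performs the unfolding step $\cdot \mathbin{|} \mu X.t \shortrightarrow \cdot \mathbin{|} t\{\mu X.t/X\}$, or $C$ consists solely of $\tau$-moves of $p$, which would contradict $p \Downarrow$ (an infinite such $C$) or else terminates in a deadlocked state $p' \mathbin{|} \mu X.t$ with $p' \nottrans{}$, which is impossible since $\mu X.t \trans{\tau}$. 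So $C$ has the form $p \mathbin{|} \mu X.t \shortrightarrow^* p' \mathbin{|} \mu X.t \shortrightarrow p' \mathbin{|} t\{\mu X.t/X\} \shortrightarrow \cdots$ with $p \Trans{\tau} p'$. The suffix of $C$ starting from $p' \mathbin{|} t\{\mu X.t/X\}$ is a computation of $p' \mathbin{|} t\{\mu X.t/X\}$; since $p \mustsatisfy t\{\mu X.t/X\}$ implies $p' \mustsatisfy t\{\mu X.t/X\}$ (a $\tau$-derivative of a process that must-satisfies a test must-satisfies it too — this is a small auxiliary fact I would state, following from the fact that computations of $p' \mathbin{|} u$ are suffixes of computations of $p \mathbin{|} u$), that suffix is successful, hence $C$ is successful.

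The main obstacle is the second bullet: one needs the convergence hypothesis $p \Downarrow$ precisely to rule out the computation that diverges by performing infinitely many left-hand $\tau$-moves of $p$ without ever unfolding the recursion, and to make the case analysis on the shape of $C$ exhaustive. I would also be careful that the auxiliary monotonicity fact ($p \mustsatisfy u$ and $p \Trans{\tau} p'$ imply $p' \mustsatisfy u$) is genuinely available here; it is straightforward from the definition, since any computation of $p' \mathbin{|} u$ extends to one of $p \mathbin{|} u$ by prepending the $\tau$-moves witnessing $p \Trans{\tau} p'$, and prepending $\tau$-moves on the left component cannot introduce success.
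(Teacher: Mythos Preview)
Your proposal is correct and follows essentially the same approach as the paper's proof: both bullets are handled by relating computations of $p \mathbin{|} \mu X.t$ and of $p \mathbin{|} t\{\mu X.t/X\}$ via the single unfolding step, using $p \Downarrow$ in the second part to ensure that every computation of $p \mathbin{|} \mu X.t$ eventually reaches a state $p' \mathbin{|} t\{\mu X.t/X\}$ with $p \Trans{\tau} p'$. The only cosmetic difference is that the paper, rather than invoking your auxiliary monotonicity fact, directly rewrites the prefix $p \mathbin{|} \mu X.t \shortrightarrow \cdots \shortrightarrow p' \mathbin{|} \mu X.t \shortrightarrow p' \mathbin{|} t\{\mu X.t/X\}$ as the prefix $p \mathbin{|} t\{\mu X.t/X\} \shortrightarrow \cdots \shortrightarrow p' \mathbin{|} t\{\mu X.t/X\}$ of a computation of $p \mathbin{|} t\{\mu X.t/X\}$, which is the same observation unpacked.
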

Note that the premise $p \Downarrow$  is essential in the second part of this lemma, 
as $\mu X.t$ cannot perform a $\omega$ action; therefore it can be \emph{must}-satisfied 
only by processes which converge.

\begin{prop}\label{prop:oneway}
Suppose the LTS of processes is finitely branching. If $p \mustsatisfy \Tmust \phi$\ then $p \in \sem{\phi}$.
\end{prop}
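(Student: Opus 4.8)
The plan is to induct on the structure of $\phi$, but since $\phi$ may contain least fixpoints, a naive structural induction will not close. First I would reduce to the recursion-free case using the finite approximations $\phi^k$ introduced above: by Proposition~\ref{cor:continuity} we have $\sem{\phi} = \bigcup_{k \geq 0} \sem{\phi^k}$, so it suffices to show, for every recursion-free $\phi$ (more precisely, for every formula of the shape $\phi^k$) that $p \mustsatisfy \Tmust{\phi}$ implies $p \in \sem{\phi}$, together with a matching statement relating $\Tmust{\phi^k}$ to $\Tmust{\phi}$. Concretely I would prove by induction on $k$, with an inner induction on the structure of the (now recursion-free) approximant, the statement: if $p \mustsatisfy \Tmust{\phi}$ then $p \in \sem{\phi^k}$ fails only when... — actually the cleanest route is to observe that $\Tmust{\lfp X \phi} $ unfolds, via Lemma~\ref{lem:testprops} and the transition rule $\mu X.t \trans{\tau} t\{\mu X.t/X\}$, in lock-step with the syntactic unfolding $(\lfp X\phi)^{(k+1)} = (\phi\{min(X,\phi)/X\})^k$. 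So I would actually prove: \emph{for all $k$ and all $\phi \in \musthml$, $p \mustsatisfy \Tmust{\phi}$ implies $p \in \sem{\phi^k}$}, hence $p \in \bigcup_k \sem{\phi^k} = \sem{\phi}$.

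The core is then the case analysis on the outermost operator of $\phi$, carried out inside the induction:
\begin{itemize}
\item $\phi = \ttt$: $\sem{\ttt}=S$, nothing to prove. $\phi = \fff$: $\Tmust{\fff}=0$, and no process $\mustsatisfy 0$ (the computation $p|0$ is stuck and unsuccessful), so the hypothesis is vacuous.
\item $\phi = \Acc A$: if $p \mustsatisfy \sum_{a\in A} a.\omega.0$, then first $p \Downarrow$ (a divergent $p$ gives an infinite unsuccessful computation $p|t \shortrightarrow p'|t \shortrightarrow \cdots$ in which $t$ never reaches $\omega$, since $t$ cannot perform $\tau$ or $\omega$), and moreover every $\tau$-derivative $p'$ of $p$ must be able to perform some $a \in A$, else the computation reaching $p'|\,\sum a.\omega.0$ is stuck; this is exactly $p \models \Acc A$.
\item $\phi = [\tau]\psi$: $\Tmust{[\tau]\psi}=\tau.\Tmust{\psi}$. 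From $p \mustsatisfy \tau.\Tmust{\psi}$ I get $p\Downarrow$ (the test cannot act until it reaches $\Tmust\psi$, so divergence of $p$ is fatal), and every $p'$ with $p \Trans{\tau} p'$ satisfies $p' \mustsatisfy \Tmust{\psi}$ (since $p'|\Tmust\psi$ is reachable and its computations are among those of $p|\tau.\Tmust\psi$), so by induction $p' \in \sem{\psi}$; hence $p \in \bbox{\cdot\tau\cdot}\sem{\psi}$.
\item $\phi = [a]\psi$: $\Tmust{[a]\psi}= a.\Tmust{\psi} + \tau.\omega.0$. From $p \mustsatisfy$ this test, $p\Downarrow$ as before, and whenever $p \Trans{a} p'$ the residual $p'|\Tmust{\psi}$ is reachable, so $p' \mustsatisfy \Tmust{\psi}$ and by induction $p' \in \sem{\psi}$; thus $p \in \bbox{\cdot a\cdot}\sem{\psi}$. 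The $\tau.\omega.0$ summand guarantees that if $p$ has \emph{no} weak $a$-derivative this does not spoil must-satisfaction, matching the semantics of $[a]$.
\item $\phi = \phi_1 \wedge \phi_2$: if the test is $\omega.0$ then $\phi \equiv \ttt$ and $\sem\phi = S$. Otherwise $\Tmust{\phi}= \tau.\Tmust{\phi_1} + \tau.\Tmust{\phi_2}$; from $p \mustsatisfy$ this test, both $p|\Tmust{\phi_1}$ and $p|\Tmust{\phi_2}$ are reachable after one $\tau$, so $p \mustsatisfy \Tmust{\phi_i}$ for $i=1,2$, and induction gives $p \in \sem{\phi_1}\cap\sem{\phi_2}$.
\item $\phi = \lfp X \psi$: if $\psi$ is closed then $\Tmust{\phi}=\Tmust{\psi}$; by Lemma~\ref{lem:statespaceformulae}-type reasoning $\lfp X\psi$ with $\psi$ closed is just $\sem{\psi}$, and the inner induction applies. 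If $\psi$ is not closed, $\Tmust{\phi}=\mu X.\Tmust{\psi}$; then $p \mustsatisfy \mu X.\Tmust\psi$ gives, via Lemma~\ref{lem:testprops} (first bullet, and the fact that any such $p$ converges since $\mu X.t \nottrans\omega$), $p \mustsatisfy \Tmust{\psi}\{\mu X.\Tmust\psi / X\}$, which I would identify with $\Tmust{\psi\{\lfp X\psi/X\}}$; combined with the outer induction on $k$ and $(\lfp X\psi)^{(k+1)} = (\psi\{min(X,\psi)/X\})^k$, this yields $p \in \sem{(\lfp X\psi)^{(k+1)}}$.
\end{itemize}

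The main obstacle is the bookkeeping in the fixpoint case: one has to match the \emph{syntactic} substitution performed by $\Tmust{\cdot}$ on tests ($\mu X.\Tmust\psi$ unfolding to $\Tmust\psi\{\mu X.\Tmust\psi/X\}$) with the substitution performed on formulae ($\lfp X\psi$ unfolding to $\psi\{\lfp X\psi/X\}$), and to check these commute, i.e. $\Tmust{\psi\{\lfp X \psi /X\}} = \Tmust{\psi}\{\mu X.\Tmust\psi/X\}$, which requires $\Tmust{X}=X$ and a compositionality lemma for $\Tmust{\cdot}$ under substitution. Feeding this into the approximation induction, so that the number of test-unfoldings is controlled by the approximation index $k$, is what makes the induction well-founded; the essential input here is the comment after the transition rules that recursive tests unfold by a $\tau$-step, letting us do induction on computation length, together with Lemma~\ref{lem:testprops} and Proposition~\ref{cor:continuity}. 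A secondary recurring point, used in almost every case, is that a divergent process $p$ cannot $\mustsatisfy$ any test $t$ with $t \nottrans{\omega}$ — this is what forces $p \Downarrow$ and thereby licenses the convergence conjunct built into our interpretation of $\bbox{\cdot}$.
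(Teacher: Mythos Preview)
Your ingredients are right but the induction is set up incorrectly. The statement you propose to prove by induction on $k$ --- ``for all $k$ and all $\phi$, $p \mustsatisfy \Tmust{\phi}$ implies $p \in \sem{\phi^k}$'' --- is already false at $k=0$, since $\phi^0 = \fff$ for every $\phi$ while, say, every process must-satisfies $\Tmust{\ttt} = \omega.0$. More generally, the approximation index is not something that \emph{decreases} along your argument: in the fixpoint case you want to derive membership in $\sem{(\lfp X\psi)^{(k+1)}}$ from membership in $\sem{(\psi\{\lfp X\psi/X\})^k}$, so $k$ is an \emph{output} of the argument, not a well-founded induction variable. Nothing in your scheme bounds how many times the fixpoint unfolds.

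The paper instead inducts on the maximal length $|p,\Tmust\phi|$ of a successful computation prefix of $p \,|\, \Tmust\phi$. This number exists and is finite precisely because both the process LTS and the test LTS are finite branching (K\"onig's Lemma) --- and this is where the finite-branching hypothesis actually enters, not in the convergence observations you list. With that measure, the unfolding $\mu X.\Tmust\psi \trans{\tau} \Tmust\psi\{\mu X.\Tmust\psi/X\}$ strictly decreases it, so in the fixpoint case Lemma~\ref{lem:testprops} together with the outer IH yields $p \in \sem{(\psi\{\lfp X\psi/X\})^k}$ for some $k$, hence $p \in \sem{(\lfp X\psi)^{(k+1)}}$; an inner structural induction on $\phi$ then handles the remaining constructors along the lines of your case analysis. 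You mention ``induction on computation length'' in your final paragraph, but it must be the \emph{primary} induction, with the approximation index merely recording how many unfoldings were consumed.
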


\begin{bproof} Suppose $p \mustsatisfy \Tmust \phi$; As both the LTS of
  processes (by assumption) and the LTS of tests (Proposition
  \ref{prop:Tbf}) are finite branching, the
  maximal length of a successful computation $|p,t|$\ is defined and
  finite.  This is a direct consequence of Konig's Lemma
  \cite{Boolos}.  Thus it is possible to perform an induction over
  $|p,\Tmust \phi|$ to prove that $p \in \sem{\phi^k}$.  The result
  will then follow from Proposition \ref{cor:continuity}.
\begin{itemize}
\item If $|p, \Tmust \phi| = 0$\ then $\Tmust \phi \trans{\omega}$, and hence for each $p \in S\; p \mustsatisfy\ \Tmust \phi$. Further it is not difficult to show that $\phi$\ is logically equivalent to $\ttt$, hence $p \in \sem{\phi}$.
\item If $|p, \Tmust \phi| = n+1$\ then the validity of the Theorem follows from an application of an inner induction on $\phi$. We show only the most interesting case, which is $\phi = \lfp X \psi$. There are two possible cases.
\begin{enumerate}[(a)]
\item If $X$\ is not free in $\psi$\ then the result follows by the inner induction, as $\lfp X \psi$\ is logically equivalent to $\psi$, and $\Tmust{\lfp X \psi} \equiv \Tmust \psi$\ by definition.
\item If $X$\ is free in $\psi$\ then, by Lemma \ref{lem:testprops}\ $p \mustsatisfy \Tmust \psi \{\mu X.\Tmust{\psi}/X\}$, which is syntactically equal to $\Tmust {\psi\{ \lfp X \psi / X\}}$.\\
Since $|p, \Tmust {\psi\{\lfp X \psi / X\}}| < |p, \Tmust \phi|$, by inductive hypothesis we have \\$p \in \sem{\psi \{\lfp X \psi /X\}^k}$\ for some $k$, hence $p \in \sem{\phi^{(k+1)}}$.\qed
\end{enumerate}
\end{itemize}
\end{bproof}

To prove the converse of Proposition~\ref{prop:oneway} we use the following concept:
\begin{defi}[Satisfaction Relation]
Let $R \subseteq S \times \musthml$ and for any $\phi$\ let
$(R\; \phi) = \{ s \barra s\; R\;\phi\}$.\\
Then $R$\ is a satisfaction relation if it satisfies
\begin{eqnarray*}
(R\;\ttt) &=& S\\
(R\;\fff) &=&\emptyset\\
 (R\; \Acc A) &=&\setof{s}{s \Downarrow, s \Trans{\tau} s' \mbox{ implies } S(s') \cap A \neq \emptyset}\\
(R \;[\alpha]\phi) &\subseteq& [\cdot \alpha \cdot] (R\; \phi)\\
(R\; \phi_1 \wedge \phi_2) &\subseteq& (R\;\phi_1) \cap (R\; \phi_2)\\
(R \; \phi\{\lfp X \phi /X\}) &\subseteq& (R\; \lfp X \phi)
\end{eqnarray*}\qed
\end{defi}

Satisfaction relations are defined to agree with the interpretation $\sem \cdot$. Indeed, all implications required for 
satisfaction relations are satisfied by $\models$. Further, as $\sem{\lfp X \phi}$\ is defined to be the least solution to the recursive 
equation $X = \phi$, we expect it to be the smallest satisfaction relation.

\begin{prop}\label{prop:satisfaction}
The relation $\models$\ is a satisfaction relation. Further, it is the smallest satisfaction relation.\qed
\end{prop}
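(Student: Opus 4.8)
The plan is to prove the two claims separately: first that $\models$ is a satisfaction relation, then that it is contained in every satisfaction relation.

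For the first claim, I would simply go through the six defining clauses and verify each against Table~\ref{tab:interpr}. The clauses for $\ttt$, $\fff$ are immediate from the semantics. The clause for $\Acc A$ is exactly the definition $\sem{\Acc A} = \{s \mid s\Downarrow,\ s\Trans{\tau}s' \text{ implies } \exists a\in A.\,s'\Trans{a}\}$, noting that $\Succ{a,s'}\neq\emptyset$ iff $s'\Trans{a}$ after accounting for weak moves; here one must be slightly careful that the satisfaction relation uses $S(s')\cap A \neq \emptyset$ with $S(\cdot)$ the strong successors, so I would spell out that this matches the weak-action formulation used in the table (this is the one mildly fiddly point, and it is really just unwinding notation). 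The clauses for $[\alpha]\phi$ and $\phi_1\wedge\phi_2$ hold with equality, hence a fortiori with $\subseteq$, directly from $\sem{\bbox{\alpha}\phi} = \bbox{\cdot\alpha\cdot}\sem{\phi}$ and $\sem{\phi_1\wedge\phi_2} = \sem{\phi_1}\cap\sem{\phi_2}$. The clause $\sem{\phi\{\lfp X\phi/X\}} \subseteq \sem{\lfp X\phi}$ is the unfolding/fixpoint property: since $\sem{\lfp X\phi}$ is the least pre-fixpoint of $P \mapsto \sem{\phi}\rho[X\mapsto P]$, it is in fact a fixpoint (Knaster--Tarski, or Theorem~\ref{thm:fixpointprop}), so the two sides are equal.

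For the second claim, let $R$ be an arbitrary satisfaction relation; I must show $(R\;\phi) \supseteq \sem{\phi}$ for every $\phi\in\musthml$, i.e.\ $p\models\phi$ implies $p\mathrel{R}\phi$. The natural route is induction on the structure of $\phi$, but the fixpoint case is where the work lies, so I would instead argue as follows. For a formula $\lfp X\psi$, consider the operator $F(P) = \sem{\psi}\rho[X\mapsto P]$; then $\sem{\lfp X\psi}$ is $\bigcap\{P \mid F(P)\subseteq P\}$. It therefore suffices to exhibit a pre-fixpoint of $F$ that is contained in $(R\;\lfp X\psi)$. Here I would use the finite-approximation machinery: by Proposition~\ref{cor:continuity}, $\sem{\lfp X\psi} = \bigcup_k \sem{(\lfp X\psi)^k}$, and $(\lfp X\psi)^{(k+1)} = (\psi\{\lfp X\psi/X\})^k$ by definition. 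So it is enough to show, by induction on $k$, that $\sem{\phi^k}\subseteq(R\;\phi)$ for every $\phi\in\musthml$; the base case $\phi^0=\fff$ is trivial since $\sem{\fff}=\emptyset$, and the inductive step uses the structural clauses of $R$ together with the closure clause $(R\;\psi\{\lfp X\psi/X\}) \subseteq (R\;\lfp X\psi)$ to push the approximation index down. Concretely, if $p\models(\lfp X\psi)^{(k+1)} = (\psi\{\lfp X\psi/X\})^k$, then by the inner induction (structural, on the recursion-free approximant) $p\mathrel{R}\psi\{\lfp X\psi/X\}$, and hence $p\mathrel{R}\lfp X\psi$ by the last clause.

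The main obstacle is organizing the induction correctly: a plain structural induction on $\phi$ does not close in the fixpoint case, because unfolding $\lfp X\psi$ produces a syntactically larger formula. The fix is the two-level argument above — an outer induction on the approximation index $k$, with an inner structural induction on recursion-free formulae — and one must check that substitution commutes suitably with approximation, namely that $\phi\{\lfp X\psi/X\}$'s approximants behave as needed; this is routine given the definition of $\phi^k$ and has effectively already been used in the proof of Proposition~\ref{prop:oneway}. Everything else is bookkeeping against Table~\ref{tab:interpr} and the defining clauses of a satisfaction relation.
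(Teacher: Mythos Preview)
Your proposal is correct and follows essentially the same route as the paper: verify the clauses directly from the semantics (the fixpoint one via the unfolding identity), then for minimality invoke the approximations of Proposition~\ref{cor:continuity} and argue by induction on $k$ with an inner structural pass. The paper makes the inner induction slightly more explicit by introducing a height measure $h(\phi)$ to handle the $[\alpha]$ and $\wedge$ cases (whose $(k{+}1)$-th approximants do not decrease $k$), which is precisely the ``bookkeeping'' you allude to.
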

\leaveout{
This Proposition can be exploited to prove properties for couples $(p, \phi)$ such that $p \models \phi$, 
for $\phi \in \musthml$.\\
Let $\pi$\ be a property over $S \times \musthml$, and suppose the relation $R = \{(s, \phi)\barra \pi_(s,\phi)\}$ is a satisfaction relation. 
We obtain, by Proposition \ref{prop:satisfaction}, that $p \models \phi$ implies $\pi(p, \phi)$.\\
}
Proposition \ref{prop:satisfaction} ensures that, 
for any satisfaction relation $R$, $\models$ is 
included in $R$; in other words, if  
$p \models \phi$ then $p \;R\;\phi$.
Next we consider the relation $R_{\scriptstyle{must}}$\ such that $p\; R_{\scriptstyle{must}}\;\phi$ whenever $p \mustsatisfy \Tmust \phi$, 
and show that it is a satisfaction relation.

\begin{prop}\label{prop:must.satisfaction}
The relation $R_{\text{must}}$  is a satisfaction relation.
\end{prop}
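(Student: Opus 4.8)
The plan is to verify each of the six defining clauses of a satisfaction relation for $R_{\text{must}}$, that is, to show that the set $(R_{\text{must}}\;\phi) = \setof{p}{p \mustsatisfy \Tmust\phi}$ satisfies all the required (in)equalities from the definition. The first three clauses ($\ttt$, $\fff$, $\Acc A$) are exact equalities and follow directly by unfolding $\Tmust\ttt = \omega.0$, $\Tmust\fff = 0$ and $\Tmust{\Acc A} = \sum_{a\in A} a.\omega.0$ and inspecting the resulting computations: every process must-satisfies $\omega.0$; no process with a non-successful (indeed, any) computation must-satisfies $0$ except none at all since $0\nottrans{\omega}$, so $(R_{\text{must}}\;\fff)=\emptyset$; and for $\Acc A$ the experiment $p\mid \sum_{a\in A}a.\omega.0$ is successful on all computations exactly when $p\Downarrow$ and every $\tau$-derivative of $p$ can perform some $a\in A$, which is precisely the stated set. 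Here I would lean on Lemma~\ref{lem:testprops}-style reasoning about computations and on the fact (Proposition~\ref{prop:Tbf}) that these tests are finite-state and finite-branching.

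Next I would handle the three inclusion clauses. For $(R_{\text{must}}\;[\alpha]\phi) \subseteq [\cdot\alpha\cdot](R_{\text{must}}\;\phi)$: suppose $p \mustsatisfy \Tmust{[\alpha]\phi}$. When $\alpha = a$, $\Tmust{[a]\phi} = a.\Tmust\phi + \tau.\omega.0$; the $\tau.\omega.0$ summand guarantees every computation that takes it is successful, and any $a$-synchronisation $p\Trans{a}p'$ leads to an experiment $p'\mid\Tmust\phi$ all of whose computations must be successful, giving $p'\in (R_{\text{must}}\;\phi)$; I also need $p\Downarrow$, which follows because a diverging $p$ would give an infinite $\tau$-only computation of $p\mid\Tmust{[a]\phi}$ that never reaches $\omega$. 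When $\alpha=\tau$, $\Tmust{[\tau]\phi} = \tau.\Tmust\phi$ and the argument is similar but simpler. For the conjunction clause, from $p\mustsatisfy \Tmust{\phi_1\wedge\phi_2}$, in the non-trivial case $\tau.\Tmust{\phi_1} + \tau.\Tmust{\phi_2}$, each initial $\tau$ move leads to $p\mid\Tmust{\phi_i}$, and since that experiment inherits all-successful from the original, $p\in(R_{\text{must}}\;\phi_i)$ for both $i$; in the trivial ($\equiv\ttt$) case $\Tmust{\phi_1\wedge\phi_2}=\omega.0$ so $p$ is in every set, in particular the intersection. For the fixpoint clause $(R_{\text{must}}\;\phi\{\lfp X\phi/X\}) \subseteq (R_{\text{must}}\;\lfp X\phi)$: if $\phi$ is closed this is immediate since $\Tmust{\lfp X\phi}\equiv\Tmust\phi$ and $\phi\{\lfp X\phi/X\}\equiv\phi$; otherwise $\Tmust{\lfp X\phi} = \mu X.\Tmust\phi$, and I would use the second item of Lemma~\ref{lem:testprops} together with the syntactic identity $\Tmust\phi\{\mu X.\Tmust\phi/X\} \equiv \Tmust{\phi\{\lfp X\phi/X\}}$ — but that lemma requires the convergence hypothesis $p\Downarrow$, so I must first argue it: if $p\mustsatisfy\Tmust{\phi\{\lfp X\phi/X\}}$ then, using Lemma~\ref{lem:statespaceformulae} (that $\lfp X\psi$ with $\psi$ non-closed is never interpreted as $S$) to rule out triviality, the test eventually demands a $\tau$-move and a diverging $p$ would yield a non-successful computation, forcing $p\Downarrow$.

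I expect the fixpoint clause to be the main obstacle, for two reasons. First, it is the only clause where the convergence side-condition of Lemma~\ref{lem:testprops} bites, and establishing $p\Downarrow$ cleanly requires carefully combining Lemma~\ref{lem:divergence}, Lemma~\ref{lem:statespaceformulae} and an analysis of the shape of $\Tmust\phi$ for non-closed $\phi$ — essentially showing every such test is "guarded" by a $\tau$ before any $\omega$. Second, it hinges on the syntactic substitution identity $\Tmust{\phi\{\lfp X\phi/X\}} \equiv \Tmust\phi\{\mu X.\Tmust\phi/X\}$, which, while intuitively clear from the compositional definition of $\Tmust{\cdot}$ and the clause $\Tmust X = X$, needs a small structural induction on $\phi$ that interacts with the special-casing for closed sub-formulae; I would state and prove this as an auxiliary substitution lemma before the main argument. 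The $[\alpha]\phi$ and $\Acc A$ clauses are routine computation-tracking, and the $\ttt,\fff$ clauses are immediate. Once all six clauses are checked, $R_{\text{must}}$ is a satisfaction relation, and combined with Proposition~\ref{prop:satisfaction} (that $\models$ is the smallest one) this yields $p\models\phi \Rightarrow p\mustsatisfy\Tmust\phi$, which together with Proposition~\ref{prop:oneway} gives the full characterisation.
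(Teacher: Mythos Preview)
Your proposal is correct and follows essentially the same route as the paper: verify each clause of the definition, with the fixpoint clause as the crux, handled via the case split on whether $X$ is free, the substitution identity $\Tmust{\psi\{\lfp X\psi/X\}}\equiv\Tmust\psi\{\mu X.\Tmust\psi/X\}$, and Lemma~\ref{lem:testprops} once $p\Downarrow$ is established using Lemma~\ref{lem:statespaceformulae}. The only minor difference is that the paper obtains $p\Downarrow$ more directly by combining Lemma~\ref{lem:statespaceformulae} with Lemma~\ref{lem:divergence} (implicitly via Proposition~\ref{prop:oneway}, which gives $p\in\sem{\psi\{\phi/X\}}=\sem\phi$), whereas you sketch an operational argument that $\Tmust{\psi\{\phi/X\}}\nottrans{\omega}$; both are valid, and your explicit flagging of the substitution lemma as an auxiliary result is exactly what the paper relies on tacitly.
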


\begin{bproof}
We proceed by induction on formula $\phi$. Again, we only check the most interesting case.\\
Suppose $\phi = \lfp X \psi$. We have to show $p \mustsatisfy \Tmust {\psi\{\phi/X\}}$\ 
implies $p \mustsatisfy \Tmust \phi$.\\
We distinguish two cases:
\begin{enumerate}[(a)]
\item $X$\ does not appear free in $\psi$. then 
$\Tmust \phi = \Tmust \psi$, and $\psi\{\phi/X\} = \psi$. This case is trivial.
\item $X$\ does appear free in $\phi$: in this case $\Tmust \phi = \mu X.\Tmust \psi$, 
and $\Tmust {\psi\{\phi/X\}}$\ has the form\\
$\Tmust \psi \{\mu X.\Tmust \psi /X\}$. 
By Lemma \ref{lem:statespaceformulae} $\sem \phi \neq S$; therefore Lemma \ref{lem:divergence}\ 
ensures that $p \Downarrow$, and hence by Lemma \ref{lem:testprops}\ it follows 
$p \mustsatisfy \Tmust \phi$.\qed
\end{enumerate}
\end{bproof}
\noindent
Combining all these results we now obtain our result on the testability of \musthml.
\begin{thm}
\label{thm:musthml}
Suppose the LTS of processes is finite-branching. Then for every 
$\phi \in \musthml$, there exists a   test $\Tmust \phi$\ such that $\phi$ \emph{must}-represents the test $\Tmust \phi$.
\end{thm}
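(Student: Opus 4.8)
The plan is to combine the two directions already established as propositions into a single biconditional. Fix $\phi \in \musthml$ and take the test $\Tmust{\phi}$ built by the inductive definition. We must show that for every process $p$ in the (finite-branching) LTS of processes, $p \mustsatisfy \Tmust{\phi}$ if and only if $p \models \phi$, i.e. $p \in \sem{\phi}$.

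The forward direction, $p \mustsatisfy \Tmust{\phi}$ implies $p \in \sem{\phi}$, is exactly Proposition~\ref{prop:oneway}, which is available under the finite-branching hypothesis on the process LTS (it used König's Lemma on the product of two finite-branching LTSs, the test side being finite-branching by Proposition~\ref{prop:Tbf}). So this half requires no further work beyond citing it.

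For the converse, $p \in \sem{\phi}$ implies $p \mustsatisfy \Tmust{\phi}$, the idea is to use the characterisation of $\models$ as the least satisfaction relation (Proposition~\ref{prop:satisfaction}). Consider the relation $R_{\text{must}}$ defined by $p \mathrel{R_{\text{must}}} \phi$ iff $p \mustsatisfy \Tmust{\phi}$. By Proposition~\ref{prop:must.satisfaction}, $R_{\text{must}}$ is a satisfaction relation; hence by minimality of $\models$ we get $\models\,\subseteq\,R_{\text{must}}$, which is precisely the implication wanted. Combining the two inclusions yields $p \mustsatisfy \Tmust{\phi} \iff p \models \phi$, i.e. $\phi$ \emph{must}-represents $\Tmust{\phi}$.

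The genuine content of the argument therefore lives entirely inside the two cited propositions, so there is essentially no remaining obstacle at the level of the theorem statement itself — it is a short assembly. If anything needs a word of care it is making sure the finite-branching hypothesis is threaded through: it is needed for Proposition~\ref{prop:oneway} (via König) and was also used to obtain the continuous interpretation of $\bbox{\cdot\alpha\cdot}$ (Proposition~\ref{prop:dmndcontinuous}) underpinning the approximation machinery that Proposition~\ref{prop:oneway} relies on, whereas the converse direction via satisfaction relations does not need it. So the proof is just: quote Proposition~\ref{prop:oneway} for one inclusion, quote Propositions~\ref{prop:satisfaction} and~\ref{prop:must.satisfaction} for the other, and conclude.
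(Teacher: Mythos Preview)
Your proposal is correct and matches the paper's own proof essentially verbatim: one direction is Proposition~\ref{prop:oneway}, and the converse combines Proposition~\ref{prop:satisfaction} (minimality of $\models$) with Proposition~\ref{prop:must.satisfaction} ($R_{\text{must}}$ is a satisfaction relation). Your additional remarks about where the finite-branching hypothesis is actually used are accurate and do not deviate from the paper's argument.
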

\begin{proof}
We have to show that for any process $p$, $p \mustsatisfy \Tmust \phi$ if and only if $p \in \sem{\phi}$.
  One direction follows from Proposition~\ref{prop:oneway}.  Conversely suppose $p \in \sem{\phi}$. 
By Proposition \ref{prop:satisfaction}\ it follows that for all satisfaction relations 
$R$ it holds $p\; R\; \phi$; hence, by Proposition \ref{prop:must.satisfaction}, $p \;R_{\text{must}}\;\phi$, 
or equivalently $p \mustsatisfy \Tmust \phi$.
\end{proof}

We now turn our attention to the second result, namely that every test $t$ is \emph{must}-representable by some formula 
in \musthml. Let us for the moment assume a branching finite LTS of tests in which the state space $T$ is finite.
\begin{defi}\label{def:tests}
Assume we have a test-indexed set of variables $\{X_t\}$.
For each test $t \in T$\ define $\varphi_t$\ as below:
\begin{eqnarray}
\label{eq:must1}
\varphi_t &\triangleq&\ttt \hspace{139pt} \mbox{if } t\trans{\omega}\\
\label{eq:must2}
\varphi_t &\triangleq&\fff \hspace{139pt} \mbox{if } t \nottrans{\;}\\
\label{eq:must3}
\varphi_t &\triangleq& (\displaystyle{
\bigwedge_{\scriptstyle{a,t': t \trans{a} t'}}} [a] X_{t'})
 \;\wedge\; \Acc{\{a | t \trans{a}\}}
 \hspace{20pt} \mbox{if } t \nottrans{\omega}, t \nottrans{\tau}, 
 t \longrightarrow\\
\label{eq:must4}
\varphi_t &\triangleq& (\displaystyle{
\bigwedge_{t': t \trans{\tau}t'}} [\tau]X_{t'})
 \;\wedge\; (\displaystyle{\bigwedge_{a,t': t \trans{a} t'}} 
 [a] X_{t'}) \hspace{20pt}\mbox{if } t \nottrans{\omega}, t\trans{\tau}
\end{eqnarray}

\leaveout{
\begin{displaymath}
\begin{array}{lclcr}
\varphi_t &\triangleq&\ttt & \mbox{if}& t\trans{\omega}\\
\varphi_t &\triangleq&\fff & \mbox{if}& t \nottrans{\;}\\
\varphi_t &\triangleq& (\displaystyle{\bigwedge_{\scriptstyle{a,t': t \trans{a} t'}}} [a] X_{t'}) \;\wedge\; \Acc{\{a | t \trans{a}\}}& \mbox{if} & t \nottrans{\omega}, t \nottrans{\tau}, t \longrightarrow\\
\varphi_t &\triangleq& (\displaystyle{\bigwedge_{t': t \trans{\tau}t'}} [\tau]X_{t'}) \;\wedge\; (\displaystyle{\bigwedge_{a,t': t \trans{a} t'}} [a] X_{t'}) &\mbox{if}& t \nottrans{\omega}, t\trans{\tau}
\end{array}
\end{displaymath}
}
Take $\phi_t$\ to be the extended formula $\slfp t {\overline{X_T}} {\overline{\varphi_T}}$, using  the simultaneous least fixed points
introduced in Section~\ref{sec:recursivehml}.\qed
\end{defi}

Notice that we have a finite set of variables $\{X_t\}$ and that the conjunctions in Definition \ref{def:tests} are finite, as the LTS of tests 
is finite state and finite branching. These two conditions are needed  for $\phi_t$ to be well defined.

Formula $\phi_t$\ captures the properties required by a process to \mustsatisfy\ test $t$. The first two clauses of 
the definition are straightforward. If $t$\ cannot make an internal action or cannot report a success, but can perform a visible action $a$ to 
evolve in $t'$, then a process should be able to perform a $\Trans{a}$\ transition and evolve in a process $p'$ such that $p' \mustsatisfy t'$. 
The requirement $\Acc{\{a \barra t\trans{a}\}}$\ is needed because a synchronisation between the process $p$\ and the test $t$\ is required 
for $p \mustsatisfy t$\ to be true.\\
In the last clause, the test $t$\ is able to perform at least a $\tau$-action. In this case there is no need for a synchronisation 
between a process and the test, so there is no term of the form $\Acc{\{a \barra t\trans{a}\}}$\ in the definition of $\phi_t$. 
However, it is possible that a process $p$\ will never synchronise with such test, instead $t$\ will perform a transition $t \trans{\tau}t'$\ after $p$\ has executed an arbitrary number of 
internal actions. Thus, we require that for each transition $p \Trans{\tau} p'$, $p' \mustsatisfy t'$.

We now supply the formal details which lead to state that formula $\phi_t$\ characterises the test $t$. 
Our immediate aim is to show that the two environments, defined by
\begin{alignat*}{2}
  \rhomin(X_t) &\triangleq \sem{ \phi_t} &\qquad\qquad\qquad
  \rhomust(X_t) &\triangleq \{ p \barra p \mustsatisfy t\}
\end{alignat*}
are identical. This is achieved in the following two propositions. 
\begin{prop}
\label{thm:minsubsetmust}
For all $t \in T$ it holds that $\rhomin(X_t) \subseteq \rhomust(X_t)$.
\end{prop}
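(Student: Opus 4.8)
The plan is to show $\rhomin(X_t) \subseteq \rhomust(X_t)$ for all $t \in T$ simultaneously, by exploiting the characterisation of $\sem{\slfp t {\overline{X_T}}{\overline{\varphi_T}}}$ as a least simultaneous fixpoint (Definition~\ref{def:tests} and Theorem~\ref{thm:fixpointprop}). Concretely, I would define the vector of sets $\overline{P} = \langle P_t \rangle_{t \in T}$ by $P_t \triangleq \rhomust(X_t) = \{ p \barra p \mustsatisfy t\}$, and aim to verify the hypothesis of Theorem~\ref{thm:fixpointprop}(\ref{thm:minfixprop}), namely that $\sem{\varphi_t}\rho[\overline{X_T}\mapsto\overline{P}] \subseteq P_t$ for every $t \in T$. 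Once this is established, Theorem~\ref{thm:fixpointprop}(\ref{thm:minfixprop}) gives $\sem{\slfp t {\overline{X_T}}{\overline{\varphi_T}}} \subseteq P_t$, i.e. $\rhomin(X_t) \subseteq \rhomust(X_t)$, which is exactly the claim.

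So the real work is the inclusion $\sem{\varphi_t}\rho[\overline{X_T}\mapsto\overline{P}] \subseteq P_t$, which I would prove by a case analysis following the four clauses defining $\varphi_t$. If $t \trans{\omega}$ then $\varphi_t = \ttt$ and indeed every process $p$ satisfies $p \mustsatisfy t$, since every computation of $p \barra t$ passes through a state $p' \barra t$ with $t \trans{\omega}$ (more precisely, the first configuration already reports success), so $P_t = S = \sem{\ttt}$. If $t \nottrans{}$ then $\varphi_t = \fff$ and we must check $P_t = \emptyset$: since $t$ has no transitions and $t \nottrans{\omega}$, the experiment $p \barra t$ has the single unsuccessful computation consisting of the $\tau$-moves of $p$ alone, so no $p$ must-satisfies $t$. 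In the third clause ($t \nottrans{\omega}$, $t \nottrans{\tau}$, $t \longrightarrow$), suppose $p \in \sem{(\bigwedge_{a,t': t\trans{a}t'} [a]X_{t'}) \wedge \Acc{\{a\barra t\trans{a}\}}}$ under the environment sending $X_{t'}$ to $P_{t'}$. I would then argue that any maximal computation of $p \barra t$ is successful: the $\Acc{}$-conjunct forces $p \Downarrow$ and forces every $\tau$-derivative of $p$ to offer some $a$ with $t \trans{a}$, so the computation cannot get stuck before a synchronisation; at the point of synchronisation $t \trans{a} t'$ and $p \trans{a} p'$ with $p' \in \sem{[a]X_{t'}}$ applied appropriately — here I need that $[a]X_{t'}$ forces $p' \in P_{t'}$, i.e. $p' \mustsatisfy t'$, so the remaining computation from $p' \barra t'$ is successful; combined with an outer induction on the (finite) maximal length $|p,t|$ of computations, which is legitimate because both LTSs are finite branching (Proposition~\ref{prop:Tbf} and the standing assumption), this closes the case. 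The fourth clause ($t \nottrans{\omega}$, $t \trans{\tau}$) is similar but now $t$ itself can move: either $p \barra t$ proceeds by a $\tau$-move of $p$ (and $p$ still satisfies the same formula, so we may keep following the computation), or by $t \trans{\tau} t'$ in which case the $[\tau]X_{t'}$-conjunct ensures every $\tau$-derivative $p'$ of $p$ (including $p$ itself) satisfies $X_{t'}$, hence lies in $P_{t'}$, hence must-satisfies $t'$; again an induction on $|p,t|$ finishes.

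The main obstacle I anticipate is making the induction bookkeeping precise in the third and fourth clauses: the box modalities $[a]$ and $[\tau]$ only constrain derivatives of a \emph{converging} process, so I must be careful to extract convergence (from the $\Acc{}$-conjunct in clause three, and from the fact that $\mu$-free box-formulae are non-trivial plus the finite-branching assumption in clause four — or more directly from the observation that a diverging $p$ yields an infinite unsuccessful $\tau$-only computation, so a diverging $p$ can never be in $P_t$ when $t \nottrans{\omega}$) before appealing to the semantics of $[\alpha]$. A secondary subtlety is that a computation of $p \barra t$ may interleave arbitrarily many $\tau$-moves of $p$ with moves of $t$; the induction on $|p,t|$ handles this uniformly, but one must verify that the formula satisfied by the process component is preserved under $p \trans{\tau} p'$ — this follows since in clauses three and four $\varphi_t$ is built from $[\alpha]$-formulae and $\Acc{}$, both of which are preserved (or their defining conditions are) along $\tau$-derivatives, using that $\Succ{\tau,p'} \subseteq \{p'' \barra p \Trans{\tau} p''\}$ and $p' \Downarrow$ when $p \Downarrow$. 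Modulo these care points, the argument is a routine induction driven by Theorem~\ref{thm:fixpointprop}(\ref{thm:minfixprop}).
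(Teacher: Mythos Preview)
Your high-level strategy coincides with the paper's: verify $\sem{\varphi_t}\rhomust \subseteq \rhomust(X_t)$ for each $t$ by case analysis on the defining clauses of $\varphi_t$, and then invoke the minimal fixpoint property (Theorem~\ref{thm:fixpointprop}(\ref{thm:minfixprop})). The case split and the key observations you extract in clauses three and four (convergence from the box/$\Acc{}$ conjuncts, eventual synchronisation, and $p' \in P_{t'}$ at the moment the test component first changes) are exactly the ones the paper uses.

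Where you diverge is the ``outer induction on $|p,t|$'' you invoke to finish clauses three and four. This is both unnecessary and ill-founded as stated. It is ill-founded because $|p,t|$ is defined only when $p \mustsatisfy t$ already holds, which is precisely the conclusion you are trying to reach; you cannot use it as an induction measure here (that induction belongs to the \emph{converse} inclusion, Proposition~\ref{thm:minsubmust}). It is unnecessary because once the computation reaches a configuration $p' \barra t'$ with $p' \in \rhomust(X_{t'})$, the very definition of $\rhomust(X_{t'})$ says $p' \mustsatisfy t'$, so the remainder of that particular computation is successful---no recursion is required. The paper therefore argues directly: in clause three, conditions $p \Downarrow$ and $\Acc{\{a \mid t \trans{a}\}}$ rule out both infinite and finite computations in which the test component never moves, so every computation must contain a first step $p_n \barra t \shortrightarrow p' \barra t'$, and then the $[a]$-conjunct (evaluated in $\rhomust$) gives $p' \mustsatisfy t'$ outright. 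Clause four is handled the same way, convergence coming straight from the semantics of $[\tau]$. Drop the induction and make this direct appeal to the definition of $\mustsatisfy$, and your argument matches the paper's.

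Two smaller points. In clause two you only need $\sem{\fff}\rhomust = \emptyset \subseteq P_t$, which is trivial; arguing $P_t = \emptyset$ is over-proving. And your worry about preserving the formula along $p \trans{\tau} p'$ is well placed but resolves immediately: all conjuncts of $\varphi_t$ in clauses three and four are built from weak modalities, so any $\tau$-derivative of $p$ has only fewer weak derivatives and hence still satisfies them.
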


\begin{proof}
We just need to show that 
$\sem{\varphi_t} \rhomust \subseteq \rhomust(X_t)$: 
the result follows from an application of the 
\textit{minimal fixpoint property}, 
Theorem \ref{thm:fixpointprop} (\ref{thm:minfixprop}). 
The proof is carried out by performing a case 
analysis on $t$. We will only consider Case \eqref{eq:must3}, 
as cases \eqref{eq:must1} and \eqref{eq:must2} 
are trivial and Case \eqref{eq:must4} is handled similarly.

Assume $p \in \sem{\varphi_t}\rhomust$. We have
\begin{enumerate}[(a)]
\item \label{prf:cond1}$p \Downarrow$,
\item \label{prf:cond2}whenever $p \Trans{\tau} p'$\ there exists an action $a \in Act$\ such that $t \trans{a}$\ and $p' \Trans{a}$,
\item \label{prf:cond3}whenever $p \Trans{a} p'$\ and $t \trans{a} t'$, $p' \in \rhomust(X_{t'})$, i.e. $p' \mustsatisfy t'$.
\end{enumerate}

Conditions (\ref{prf:cond1}) and (\ref{prf:cond2}) are met since $p \in \sem{\Acc{\{a \;|\; t \trans{a}}}$ and $t \trans{a}$\ for some $a \in Act$, while (\ref{prf:cond3}) is true because of $p \in \sem{\bigwedge_{a, t':\; t \trans{a}t'}[a]X_{t'}}$.\\\\
To prove that $p \in \rhomust(X_t)$\ we have to show that every computation of $p\;|\;t$\ is successful. To this end, consider an arbitrary computation of $p\;|\; t$; condition (\ref{prf:cond2}) ensures that such a computation cannot have the finite form
\begin{equation}
\label{eq:nonmaximalcomp}
p \barra t \shortrightarrow p_1 \barra t \shortrightarrow \cdots p_k \barra t \shortrightarrow p_{k+1} \barra t \shortrightarrow \cdots \shortrightarrow p_n \barra t
\end{equation}

For such a computation we have that $p_n \Trans{\tau} p'$, and there exists $p''$\ with $p' \trans{a} p''$\ for some action $a$\ and test $t'$\ such that $t \trans{a} t'$. Therefore we have a computation prefix of the form
\[
p \barra t \shortrightarrow p_1 \barra t \shortrightarrow \cdots p_n \barra t \shortrightarrow \cdots \shortrightarrow p' \barra t \shortrightarrow p'' \barra t',
\]
hence the maximality of computation \eqref{eq:nonmaximalcomp}\ does not hold.\\

Further, condition (\ref{prf:cond1})\ ensures that a computation of $p \barra t$\ cannot have the form
\[
p \barra t \shortrightarrow p_1 \barra t \shortrightarrow \cdots \shortrightarrow  p_k \barra t \shortrightarrow p_{k+1} \barra t \shortrightarrow \cdots
\]

Therefore all computations of $p \barra t$\ have the form
\[
p \barra t \shortrightarrow p_1 \barra t \shortrightarrow \cdots \shortrightarrow p_n \barra t \shortrightarrow p' \barra t'
\]

with $p' \mustsatisfy t'$\ by condition (\ref{prf:cond3}); then for each computation of $p \barra t$ there exist $p'', t''$\ such that 
\[
p \barra t \shortrightarrow \cdots \shortrightarrow p' \barra t' \shortrightarrow \cdots \shortrightarrow p'' \barra t'',
\]

and $t''\trans{\omega}$. Hence, every computation from $p \barra t$\ is successful.
\end{proof}
\begin{prop}
\label{thm:minsubmust}
Assume the LTS of processes is branching finite. For every $ t \in T$,  $\rhomust(X_t) \subseteq \rhomin(X_t)$.
\end{prop}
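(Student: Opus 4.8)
The plan is to prove $\rhomust(X_t) \subseteq \rhomin(X_t)$ by showing that the environment $\rhomust$ satisfies the defining inequations of the simultaneous least fixpoint, so that the minimal fixpoint property (Theorem~\ref{thm:fixpointprop}(\ref{thm:minfixprop})) gives the inclusion directly. Concretely, it suffices to establish, for every $t \in T$,
\[
\rhomust(X_t) \subseteq \sem{\varphi_t}\rhomust,
\]
since then Theorem~\ref{thm:fixpointprop}(\ref{thm:minfixprop}) yields $\rhomin(X_t) = \sem{\slfp t {\overline{X_T}}{\overline{\varphi_T}}} \supseteq$ \dots wait, that goes the wrong way; rather, the analogue of Proposition~\ref{thm:minsubsetmust}'s argument applied with the reverse inclusion: since $\rhomust$ is a \emph{pre-fixpoint} from above is not what we want. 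Instead I would argue that $\rhomust$ is a post-fixpoint in the appropriate sense, and invoke that $\rhomin$ is the \emph{least} pre-fixpoint — no. The clean route is: show $\rhomust(X_t) \subseteq \sem{\varphi_t}\rhomust$ fails to suffice, so instead I show the opposite containment characterising $\rhomust$ as \emph{a} fixpoint, hence above the least one. Let me restate the actual plan.

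First I would prove that $\rhomust$ is a (pre-)fixpoint, i.e. $\sem{\varphi_t}\rhomust \supseteq \rhomust(X_t)$ is not needed; what is needed is that $\rhomust$ lies below every pre-fixpoint, which follows if $\rhomust$ itself equals $\sem{\varphi_t}\rhomust$ — combined with Proposition~\ref{thm:minsubsetmust} this gives equality. So the key lemma is: $p \mustsatisfy t$ implies $p \in \sem{\varphi_t}\rhomust$. This is proved by a case analysis on the shape of $t$ exactly matching clauses \eqref{eq:must1}--\eqref{eq:must4}. The cases $t \trans{\omega}$ and $t \nottrans{\;}$ are immediate. For clause \eqref{eq:must3} ($t \nottrans{\omega}$, $t \nottrans{\tau}$, $t \longrightarrow$): assuming $p \mustsatisfy t$, I must show $p \Downarrow$ (else the computation $p|t$ could diverge with $t$ idle, since $t \nottrans{\tau}$ and $t \nottrans{\omega}$, contradicting success), that every $\tau$-derivative $p'$ of $p$ can perform some $a$ with $t \trans{a}$ (else $p'|t$ is a deadlocked unsuccessful configuration, since the only transitions of $t$ are visible and $p'$ offers none of them), and that whenever $p \Trans{a} p'$ with $t \trans{a} t'$ then $p' \mustsatisfy t'$ (any computation of $p'|t'$ extends, via the prefix $p|t \shortrightarrow^* p'|t'$ obtained from $p \Trans{a} p'$, to a computation of $p|t$, which is successful; success must occur after the synchronisation point since $t \nottrans{\omega}$, so $p'|t'$ reaches success too). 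Clause \eqref{eq:must4} ($t \trans{\tau}$) is similar but drops the $\Acc{\cdot}$ requirement: here I use that for $p \Trans{\tau} p'$ the configuration $p|t$ reaches $p'|t$, then $t \trans{\tau} t'$ gives $p'|t'$, whence $p' \mustsatisfy t'$; and for $p \trans{a} p'$ with $t \trans{a} t'$ the synchronisation step gives $p'|t'$ reachable, so $p' \mustsatisfy t'$.

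Having shown $\rhomust(X_t) \subseteq \sem{\varphi_t}\rhomust$, together with Proposition~\ref{thm:minsubsetmust}'s proof (which actually established $\sem{\varphi_t}\rhomust \subseteq \rhomust(X_t)$) we get $\sem{\varphi_t}\rhomust = \rhomust(X_t)$ for all $t$, i.e. $\rhomust$ is a fixpoint of the vector of operators $\overline{\varphi_T}$. Since $\rhomin$ restricted to the $X_t$ is the \emph{least} such fixpoint (indeed the least pre-fixpoint, by Theorem~\ref{thm:fixpointprop}(\ref{thm:minfixprop}) applied with $\overline{P} = \rhomust(\overline{X_T})$, which is a pre-fixpoint since it is even a fixpoint), we conclude $\rhomin(X_t) \subseteq \rhomust(X_t)$ — wait, that is again the wrong direction. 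So the honest statement is: $\rhomin$ is the \emph{least} fixpoint, hence $\rhomin(X_t) \subseteq \rhomust(X_t)$ would be the \emph{opposite} of what this proposition claims. This signals that in fact the argument must go the other way: I should prove $\rhomust(X_t) \subseteq \sem{\varphi_t}\rhomust$ is false as a route, and instead use finite approximations.

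The correct plan, exploiting the branching-finiteness hypothesis, is therefore: for $p \mustsatisfy t$, the finite-branching assumption on processes together with Proposition~\ref{prop:Tbf} makes $p \,|\, t$ finite-branching, and every computation is successful; by K\"onig's Lemma there is a uniform bound $N$ on the length needed to reach $\omega$, so one can define a rank $\mathrm{rk}(p,t)$ as the least such $N$. I would then prove by induction on $\mathrm{rk}(p,t)$ that $p \in \sem{\varphi_t}\rhomin$, whence $p \in \rhomin(X_t)$ by the fixpoint property Theorem~\ref{thm:fixpointprop}(\ref{thm:fixprop}), $\rhomin(X_t) = \sem{\varphi_t}\rhomin$. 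In the inductive step, for each clause of Definition~\ref{def:tests} one checks that the relevant derivatives $(p', t')$ — obtained either from a synchronising step $p \trans{a} p'$, $t \trans{a} t'$, or from $t \trans{\tau} t'$ after some $p \Trans{\tau} p'$ — satisfy $p' \mustsatisfy t'$ with strictly smaller rank, so $p' \in \rhomin(X_{t'})$ by induction; the $\Acc{\cdot}$ conjunct in clause \eqref{eq:must3} and convergence $p \Downarrow$ are forced exactly as sketched above, because otherwise $p \,|\, t$ would admit a diverging or deadlocked unsuccessful computation. The main obstacle is the bookkeeping for clause \eqref{eq:must4}: since $t$ can perform $\tau$, the rank need not decrease along $p \trans{\tau} p'$ with $t$ idle, so one must be careful to phrase the induction on the rank of $(p,t)$ as a whole and argue that reaching $p'|t'$ via $p \Trans{\tau} p'$ then $t \trans{\tau} t'$ strictly decreases it — which it does, because the $t \trans{\tau} t'$ step is itself a computation step, so the successful continuations of $p'|t'$ are strictly shorter than the worst-case successful computation of $p|t$. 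Once this is in place the proposition follows, and combined with Proposition~\ref{thm:minsubsetmust} it gives $\rhomin(X_t) = \rhomust(X_t)$, which is what is needed to conclude that $\phi_t$ must-represents $t$.
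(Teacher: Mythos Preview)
Your final paragraph contains the correct argument, and it is exactly the paper's proof: induction on the maximal length $|p,t|$ of a successful computation prefix (finite and well-defined by K\"onig's Lemma, since both the process LTS and the test LTS are finite branching), showing in each case that $p \in \sem{\varphi_t}\rhomin$ and then invoking the Fixpoint Property, Theorem~\ref{thm:fixpointprop}(\ref{thm:fixprop}), to conclude $p \in \rhomin(X_t)$. Your handling of the rank decrease in clause~\eqref{eq:must4} is also right: the step $t \trans{\tau} t'$ is itself a move of the experiment, so any successful prefix from $p'\,|\,t'$ extends to a strictly longer one from $p\,|\,t$.

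Everything before that paragraph, however, is not a proof but a transcript of dead ends. You correctly diagnose that establishing $\rhomust$ as a (pre-)fixpoint of $\overline{\varphi_T}$ can only deliver $\rhomin \subseteq \rhomust$ via Theorem~\ref{thm:fixpointprop}(\ref{thm:minfixprop}) --- which is Proposition~\ref{thm:minsubsetmust}, not the present one --- and that this direction genuinely needs a well-founded induction exploiting finite branching. That self-correction is sound as scratch work, but in a finished write-up it should be excised: start directly with the rank $|p,t|$ and the induction, and the argument fits in a few clean paragraphs.
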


\begin{proof}

We have to show $p \mustsatisfy t$\ implies $p \in \sem{\phi_t}$.\\
Suppose $p \mustsatisfy t$; since we are assuming that the set $T$, as well as the set $S$, contains only finite branching tests (processes), the maximal length of a successful computation fragment $|p, t|$\ is defined and finite.\\
Therefore we proceed by induction on $|p, t|$; the main technical property used is the Fixpoint Property \ref{thm:fixpointprop}(\ref{thm:fixprop}).
\begin{itemize}
\item $k = 0$: In this case, $t \trans{\omega}$, and hence for all $p \in S$\ we have $ p \mustsatisfy t$. 
Moreover, $\varphi_t = \ttt$, and hence for all $p \in S\; p \in \sem{\phi_t}\rhomin$,
\item $k > 0$. There are several cases to consider, according to the structure of the test $t$:
\begin{enumerate}
\item $t \nottrans{\omega}, t \nottrans{\tau}, t \longrightarrow$: we first show that $p \in \sem{\Acc {\{a | t \trans{a}}} \rhomin$.\\
Since $p \mustsatisfy t$, we have $p \Downarrow$. Consider a computation fragment of the form
\[
p \barra t \shortrightarrow \cdots \shortrightarrow p^n \barra t
\]

As $p \Downarrow$, we require that all computations rooted in $p^n \barra t$\ will eventually contain a term of the form $p^k \barra t'$, where $t' \neq t$. Further, as $t \nottrans{\tau}$, such a test should follow from a synchronisation between $p^{k-1}$\ and $t$. We have that then that, whenever $p\Trans{\tau} p^n$, there exists an action $a$\ such that $t \trans{a} t'$\ and $p^n \Trans{a} p^k$, which combined with the constraint $p \Downarrow$\ is equivalent to $p \in \sem{\Acc {\{a | t \trans{a}}}$.\\
We also have to show that $p \in \sem{[a]X_{t'}} \rhomin$. Let $p\trans{a}p'$. Then $p \mustsatisfy t$\ implies $p' \mustsatisfy t'$. Moreover, we have $|p', t'| < k$. By inductive hypothesis, we have that $p' \in \sem{\phi_{t'}}$, that is $p' \in \rhomin(X_{t'})$. Then the result $p \in \sem{[a]X_{t'}} \rhomin$\ holds.

\item $t \nottrans{\omega}, t \trans{\tau}$: A similar analysis as in the case above can be carried out.
\end{enumerate}
\end{itemize}
\end{proof}
\noindent
Combining these two propositions we get our second result. Let us say that a test $t$ 
from a LTS of tests $\mathcal{T} = \langle T, Act_{\tau}^{\omega}, \rightarrow\rangle$ 
is finitary if the derived LTS consisting of all states in $\mathcal{T}$ accessible from $t$ 
is finite.
\begin{thm}
\label{thm:musttest}
Assuming the LTS of processes is finite branching, every finitary test $t$ is\\ \emph{must}-representable. 
\end{thm}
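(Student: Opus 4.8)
The plan is to reduce Theorem~\ref{thm:musttest} to the case already handled by Propositions~\ref{thm:minsubsetmust} and~\ref{thm:minsubmust}, which together establish that for a finite-state, finite-branching LTS of tests the two environments $\rhomin$ and $\rhomust$ coincide, i.e. $\sem{\phi_t} = \{p \mid p \mustsatisfy t\}$ for every $t$ in that LTS. So the real content of the theorem is that the finitary hypothesis on a single test $t$ suffices to invoke this machinery. First I would observe that, given a finitary test $t$, the derived LTS $\mathcal{T}_t$ consisting of exactly those states reachable from $t$ is by definition finite, hence trivially finite-state; and it is finite-branching because the original LTS of tests was (or, if $t$ comes from the grammar~\eqref{eq:tests}, this is Proposition~\ref{prop:Tbf}). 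Thus $\mathcal{T}_t$ meets all the standing assumptions under which Definition~\ref{def:tests} and the two propositions were stated.

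Next I would note that the notion of a test being \mustsatisfy-ed is insensitive to restricting the test LTS to its reachable part: in any computation of an experiment $p \barra t$, every test-component that appears is reachable from $t$, so $p \mustsatisfy t$ holds in $\mathcal{L} \barra \mathcal{T}$ if and only if it holds in $\mathcal{L} \barra \mathcal{T}_t$. Consequently the set $\{p \in S \mid p \mustsatisfy t\}$ is the same whether we compute it against the original test LTS or against $\mathcal{T}_t$. Now apply Definition~\ref{def:tests} inside $\mathcal{T}_t$ to obtain the finite family $\{\varphi_{t'}\}_{t' \in \mathcal{T}_t}$ and the extended formula $\phi_t \triangleq \slfp t {\overline{X_T}}{\overline{\varphi_T}}$, which is well-defined precisely because $\mathcal{T}_t$ is finite-state and finite-branching. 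By Bek\'ic's Theorem~\ref{thm:becik}, $\phi_t$ is logically equivalent to a genuine \rechml formula, so it is a legitimate witness. Moreover each $\varphi_{t'}$ is built only from $\ttt$, $\fff$, $\Acc{\cdot}$, finite conjunctions, box modalities and variables, hence $\phi_t$ lies in $\musthml$ (up to the Bek\'ic translation), which also yields the completeness remark from the Introduction as a by-product.

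Finally I would combine Propositions~\ref{thm:minsubsetmust} and~\ref{thm:minsubmust}, the latter using the finite-branching hypothesis on the LTS of processes, to conclude $\rhomust(X_t) = \rhomin(X_t)$, i.e. $p \mustsatisfy t \iff p \in \sem{\phi_t}$ for all $p \in S$. This is exactly the statement that $\phi_t$ \emph{must}-represents $t$, so $t$ is \emph{must}-representable. The only genuinely delicate point is the reachability/locality argument of the second paragraph — one must check that neither $\mustsatisfy$ nor the clauses of Definition~\ref{def:tests} ``see'' anything outside $\mathcal{T}_t$, so that passing to the finite sub-LTS changes neither side of the equivalence; everything else is an immediate appeal to results already proved. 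I would also remark explicitly that no finiteness assumption on the test beyond finitariness is needed here, since finitariness already delivers both finite-state and (via the ambient LTS) finite-branching for $\mathcal{T}_t$.
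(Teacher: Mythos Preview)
Your proposal is correct and follows essentially the same approach as the paper: restrict to the finite reachable sub-LTS $\mathcal{T}_t$, apply Definition~\ref{def:tests} there, invoke Propositions~\ref{thm:minsubsetmust} and~\ref{thm:minsubmust} to get $\sem{\phi_t}=\{p\mid p\mustsatisfy t\}$, and then use Bek\'ic (Theorem~\ref{thm:becik}) to land in \rechml, observing that the operators used place the result in \musthml. You are in fact more explicit than the paper on two points it leaves implicit --- the locality argument that $\mustsatisfy$ and the clauses of Definition~\ref{def:tests} only depend on $\mathcal{T}_t$, and the appeal to finite branching of the test sub-LTS --- but these are elaborations of the same proof, not a different route.
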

\begin{proof}
  Consider any test $t$. We can apply Definition~\ref{def:tests} to the finite LTS of tests reachable from $t$ to obtain a formula $\phi_t$\ which \emph{must}-represents test $t$. Notice that this formula is not contained in $\rechml$, as it uses simultaneous least fixpoints. However, by Theorem \ref{thm:becik}\ there exists a formula $\phimust t \in \rechml$\ such that $\sem{\phi_t} = \sem{\phimust t}$, thus $t$\ is \emph{must}-representable. Further, since each operator used in Definition \ref{def:tests} to define $\varphi_t$\ belongs to \musthml, it is ensured that $\phimust t \in \musthml$.
\end{proof}
As  a Corollary we are able to show that \musthml is actually the largest language (up to
logical equivalence) of \emph{must}-testable formulae.
\begin{corollary}\label{cor:mustlargest}
Suppose $\phi$ is a formula in \rechml which is \emph{must}-testable. Then 
there exists some $\psi$ in $\musthml$ which is logically equivalent to it.
\end{corollary}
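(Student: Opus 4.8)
The plan is to combine the two main theorems of the \emph{must} case—Theorem~\ref{thm:musthml} (every formula in \musthml\ is \emph{must}-testable) and Theorem~\ref{thm:musttest} (every finitary test is \emph{must}-representable by a formula in \musthml)—and to exploit the fact that the process LTS is finite branching so that the construction of tests from \musthml\ formulae is available. First I would take an arbitrary $\phi \in \rechml$ which is \emph{must}-testable. By definition of \emph{must}-testability, there is a test $t$ such that for all processes $p$, $p \mustsatisfy t$ iff $p \models \phi$; that is, $\phi$ \emph{must}-represents $t$.

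The next step is to get hold of a \musthml\ formula attached to this $t$. The obstacle here is that Theorem~\ref{thm:musttest} and Definition~\ref{def:tests} are stated for \emph{finitary} tests only, whereas the test $t$ witnessing \emph{must}-testability of $\phi$ is an arbitrary state in an arbitrary test LTS. I would handle this exactly as the paper already does implicitly: the results announced in the Introduction and the statement of Theorem~\ref{thm:musthml}/\ref{thm:musttest} fix the default LTS of tests given by grammar~\eqref{eq:tests}, and by Proposition~\ref{prop:Tbf} every such test is finite state and finite branching, hence finitary. So I would read the hypothesis ``\,$\phi$ is \emph{must}-testable\,'' as ``\,there is a test $t$ from the default test LTS which $\phi$ \emph{must}-represents\,,'' apply Theorem~\ref{thm:musttest} to this finitary $t$, and obtain $\phimust t \in \musthml$ with $p \mustsatisfy t$ iff $p \models \phimust t$ for every process $p$.

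Finally I would close the loop by transitivity of the biconditionals: for every process $p$,
\[
p \models \phi \iff p \mustsatisfy t \iff p \models \phimust t,
\]
so $\sem{\phi} = \sem{\phimust t}$, i.e. $\phi$ is logically equivalent to the formula $\psi := \phimust t \in \musthml$, which is what the corollary asserts. I would remark that the ``\,up to logical equivalence\,'' qualification is exactly what makes the statement clean: $\phi$ itself need not syntactically lie in the \musthml\ fragment, but its interpretation does, and since by Theorem~\ref{thm:musthml} the formula $\phimust t$ in turn is realised by the concrete test $\Tmust{\phimust t}$, the fragment \musthml\ is both sound and complete for \emph{must}-testability. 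The only genuinely delicate point to flag is the quantifier on the class of tests: if one allows $\phi$ to be \emph{must}-represented by a test drawn from some other LTS of tests (not the default one, and not finitary), then Theorem~\ref{thm:musttest} does not directly apply and one would first need a normalisation step expressing an arbitrary \emph{must}-representable behaviour by a finitary test—so I would either restrict the statement to the default test LTS or invoke such a normalisation lemma; within the scope set up in this paper the former is the intended reading and the proof is the three-line chain above.
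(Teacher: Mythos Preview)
Your proof is correct and follows the same route as the paper: obtain a test $t$ witnessing the \emph{must}-testability of $\phi$, invoke Theorem~\ref{thm:musttest} to produce $\phimust t \in \musthml$, and chain the two biconditionals. You are in fact slightly more careful than the paper's own write-up, which obtains $t$ by writing ``$t = \Tmust\phi$'' and citing Theorem~\ref{thm:musthml}---a construction that is only defined for $\phi \in \musthml$; as you correctly observe, the test $t$ should simply come from the definition of \emph{must}-testable, and your discussion of the finitary hypothesis (via Proposition~\ref{prop:Tbf} for the default test LTS) makes explicit a point the paper leaves implicit.
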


\begin{proof}
Suppose $\phi$\ is \emph{must}-testable. By theorem \ref{thm:musthml} there exists a finite test $t = \Tmust \phi$\ which \emph{must}-represents $\phi$. Further, by theorem \ref{thm:musttest}\ there exists a formula $\psi = \phimust {t} \in \musthml$\ which \emph{must}-tests for $t$. Therefore
\[
p \in \sem{\phi} \Leftrightarrow p \mustsatisfy \Tmust \phi \Leftrightarrow p \in \sem{\psi}
\]
\end{proof}

\section{The may case}\label{sec:may}

In this paper we simply state the corresponding theorems for \emph{may} testing:
\begin{thm}
\label{thm:mayhml}
Suppose the LTS of processes is finite branching. Then for every 
$\phi \in \mayhml$, there exists a  test $\Tmay \phi$\ such that, $\phi$ \emph{may}-represents the test $\Tmay \phi$.
\end{thm}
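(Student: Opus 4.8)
The plan is to mirror, in the \emph{may} setting, the development carried out for \emph{must} in Section~\ref{sec:must}, but with the dualities one expects from the two testing relations. Since \mayhml uses $\ttt,\fff,X,\dmnd\alpha\phi,\phi_1\vee\phi_2$ and $\lfp X\phi$, I would first give an explicit inductive definition of $\Tmay\phi$: set $\Tmay\ttt=\omega.0$, $\Tmay\fff=0$, $\Tmay X=X$, $\Tmay{\dmnd a\phi}=a.\Tmay\phi$, $\Tmay{\dmnd\tau\phi}=\tau.\Tmay\phi$, $\Tmay{\phi_1\vee\phi_2}=\tau.\Tmay{\phi_1}+\tau.\Tmay{\phi_2}$, and $\Tmay{\lfp X\phi}=\mu X.\Tmay\phi$ (with the usual simplification $\Tmay{\lfp X\phi}=\Tmay\phi$ when $X$ is not free in $\phi$). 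Note that, unlike the \emph{must} case, no convergence subtleties arise here, so no case split on logical equivalence to $\ttt$ is needed, and Proposition~\ref{prop:Tbf} guarantees $\Tmay\phi$ is finite-state and finite-branching.

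I would then prove the two inclusions separately. For the direction $p\maysatisfy\Tmay\phi\implies p\in\sem\phi$, I would use that finite-branching of both the process LTS (by hypothesis) and the test LTS lets me induct on the length of a \emph{successful} computation for $p\barra\Tmay\phi$ (this is where finite branching is used, via K\"onig's Lemma, exactly as in Proposition~\ref{prop:oneway}), with an inner induction on the structure of $\phi$, and conclude $p\in\sem{\phi^k}$ for the natural finite approximants $\phi^k$ of \mayhml formulae; then $\bigcup_k\sem{\phi^k}=\sem\phi$ (continuity of the \mayhml operators in $2^S$ — here the $\dmnd\alpha$ operator is easily continuous, even more easily than $\bbox\alpha$ was). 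The interesting case is again $\phi=\lfp X\psi$, handled through a test-level unfolding lemma analogous to Lemma~\ref{lem:testprops}: $p\maysatisfy\mu X.t$ iff $p\maysatisfy t\{\mu X.t/X\}$ (for \emph{may} this is a clean biconditional, with no convergence side condition, since a success reachable from $\mu X.t$ is reachable after the first $\tau$-unfolding and conversely).

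For the converse $p\in\sem\phi\implies p\maysatisfy\Tmay\phi$, I would introduce a ``\emph{may}-satisfaction relation'' dual to the satisfaction relation of Section~\ref{sec:must}: a relation $R\subseteq S\times\mayhml$ with $(R\,\ttt)=S$, $(R\,\fff)=\emptyset$, $(R\,\dmnd\alpha\phi)\supseteq\dmnd{\cdot\alpha\cdot}(R\,\phi)$, $(R\,\phi_1\vee\phi_2)\supseteq(R\,\phi_1)\cup(R\,\phi_2)$, and $(R\,\lfp X\phi)\supseteq(R\,\phi\{\lfp X\phi/X\})$. One shows $\models$ is the \emph{largest} such relation (dually to $\models$ being the smallest satisfaction relation, reflecting that \mayhml uses least fixpoints but the inclusions now point the other way because \emph{may} is an existential/positive notion), and then that $p\Rmay\phi \defeq p\maysatisfy\Tmay\phi$ is a \emph{may}-satisfaction relation; the only non-routine clause is $\lfp X\phi$, dispatched by the unfolding lemma above. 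Combining the two inclusions yields that $\Tmay\phi$ \emph{may}-represents $\phi$, which is Theorem~\ref{thm:mayhml}.

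The main obstacle I anticipate is getting the direction of the fixpoint inclusions right: \mayhml is built from least fixpoints, yet the structural induction for the ``hard'' inclusion must be organised so that finitely many unfoldings suffice, which works precisely because \emph{may}-success is witnessed by a single finite computation and because the finite-branching hypothesis makes the approximants $\phi^k$ exhaust $\sem\phi$. A secondary point requiring care is the treatment of $\Tmay{\phi_1\vee\phi_2}=\tau.\Tmay{\phi_1}+\tau.\Tmay{\phi_2}$: one must check that a \emph{may}-success through the left $\tau$-branch corresponds exactly to $p\in\sem{\phi_1}$ and symmetrically, i.e.\ that the two $\tau$-prepended summands do not spuriously create or destroy successful computations — this is immediate but is the place where the argument would go wrong if disjunction were replaced by conjunction, matching the negative Example~\ref{ex:c}.
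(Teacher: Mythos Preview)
Your definition of $\Tmay\phi$ matches the paper's (hidden) one, and your overall architecture---prove each inclusion separately, with a structural/length induction for one direction and a ``closure relation'' argument for the other---is the right shape. But there is a genuine error in your second direction.

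For $p\in\sem\phi\implies p\maysatisfy\Tmay\phi$ you define a \emph{may}-satisfaction relation by the clauses $(R\,\dmnd\alpha\phi)\supseteq\dmnd{\cdot\alpha\cdot}(R\,\phi)$, $(R\,\phi_1\vee\phi_2)\supseteq(R\,\phi_1)\cup(R\,\phi_2)$, $(R\,\lfp X\phi)\supseteq(R\,\phi\{\lfp X\phi/X\})$, and then claim that $\models$ is the \emph{largest} such relation. This is false: those $\supseteq$-clauses are closure conditions (they only force pairs \emph{into} $R$), so the class of such relations is closed under arbitrary intersections and has a \emph{least} element, not a greatest one; in fact nothing prevents, e.g., $(R\,\dmnd\alpha\fff)=S$, so there are may-satisfaction relations strictly larger than $\models$. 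Worse, even if $\models$ were largest, from ``$\Rmay$ is a may-satisfaction relation'' you would only get $\Rmay\subseteq{\models}$, which is the \emph{other} inclusion. What you actually need---and what the paper uses---is that $\models$ (equivalently its inductive reformulation $\maymodels$) is the \emph{smallest} such relation; then ${\models}\subseteq\Rmay$ follows once you check $\Rmay$ satisfies the clauses. The ``duality'' you invoke flips the direction of the per-clause inclusions (from $\subseteq$ in the \emph{must} case to $\supseteq$ here), but it does \emph{not} flip smallest to largest: both logics use least fixpoints, so in both cases the semantic relation is the least closed one.

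A secondary point: in the direction $p\maysatisfy\Tmay\phi\implies p\in\sem\phi$ you invoke K\"onig's Lemma and the finite-branching hypothesis ``exactly as in Proposition~\ref{prop:oneway}''. That machinery is unnecessary here. A \emph{may}-success is witnessed by a single successful computation, whose success prefix is already finite; you can induct directly on the length of the \emph{shortest} such prefix, with no appeal to K\"onig and no finite-branching assumption on processes. Your approximant/continuity route also works (and $\dmnd{\cdot\alpha\cdot}$ is continuous on $2^S$ without any hypothesis), but the detour through K\"onig is misplaced.
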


\begin{thm}
\label{thm:maytest}
Assuming the LTS of processes is finite branching, every test $t$ is \emph{may}-representable.
\end{thm}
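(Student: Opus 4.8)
The plan is to develop, dually, the construction used in Section~\ref{sec:must} for the \must\ case. For a given test $t$ I would work with the (finite, finitely branching --- Proposition~\ref{prop:Tbf}) LTS of tests reachable from $t$ and a test-indexed family of variables $\{X_t\}$, defining for each reachable test an $\rechml^{+}$ formula $\varphi_t$ by cases on the immediate capabilities of $t$:
\[
\varphi_t \;\triangleq\;
\begin{cases}
\ttt & \text{if } t \trans{\omega},\\[3pt]
\fff & \text{if } t \nottrans{\;},\\[3pt]
\bigl(\bigvee_{t' : t \trans{\tau} t'} X_{t'}\bigr) \;\vee\; \bigl(\bigvee_{a, t' : t \trans{a} t'} \dmnd{a} X_{t'}\bigr) & \text{if } t \nottrans{\omega} \text{ and } t \longrightarrow,
\end{cases}
\]
(empty disjunctions read as $\fff$), and then taking $\phi_t \triangleq \slfp t {\overline{X_T}} {\overline{\varphi_T}}$, the $t$-component of the \emph{least} simultaneous fixpoint over the finitely many reachable tests. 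By Theorem~\ref{thm:becik} there is an ordinary $\phimay{t} \in \rechml$ with $\sem{\phimay{t}} = \sem{\phi_t}$, and since $\varphi_t$ only uses $\ttt$, $\fff$, variables, $\dmnd{\cdot}$, $\vee$ and (simultaneous, hence by Theorem~\ref{thm:becik} reducible to nested) least fixpoints, $\phimay{t}$ actually lies in $\mayhml$. The intended reading of the third clause is that, to \emph{may}-pass $t$, a process must either let $t$ take one of its own internal steps $t \trans{\tau} t'$ and then \emph{may}-pass $t'$ in the \emph{same} state --- hence a bare variable $X_{t'}$, with no modality in front --- or weakly perform a visible $a$ on which $t$ can synchronise, moving $t$ to $t'$, and then \emph{may}-pass $t'$, hence $\dmnd{a}X_{t'}$.

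What remains is to show that, for all $p$ and all reachable $t$, $p \maysatisfy t$ exactly when $p \in \sem{\phi_t}$; equivalently, that the environments $\rhomin(X_t) \triangleq \sem{\phi_t}$ and $\rho_{\mathrm{may}}(X_t) \triangleq \{\, p \mid p \maysatisfy t \,\}$ coincide, which I would establish by two inclusions. For $\rhomin(X_t) \subseteq \rho_{\mathrm{may}}(X_t)$ it suffices, by the minimal fixpoint property (Theorem~\ref{thm:fixpointprop} (\ref{thm:minfixprop})), to check $\sem{\varphi_t}\rho_{\mathrm{may}} \subseteq \rho_{\mathrm{may}}(X_t)$ for each $t$; this is a short case analysis, using that a successful computation of $p \barra t'$ can be prefixed by the single step $p \barra t \shortrightarrow p \barra t'$ when $t \trans{\tau} t'$, and by a fragment $p \barra t \Trans{\tau} p_1 \barra t \shortrightarrow p_2 \barra t' \Trans{\tau} p' \barra t'$ when $p \Trans{a} p'$ and $t \trans{a} t'$, without destroying maximality or success.

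The converse inclusion $\rho_{\mathrm{may}}(X_t) \subseteq \rhomin(X_t)$ is the substantive part. Given $p \maysatisfy t$, fix a successful computation of $p \barra t$ and let $n$ be the least length of a successful computation \emph{prefix} of $p \barra t$ --- finite, since success is reached at finite depth. I would induct on $n$, simultaneously over all $(p, t)$, using the fixpoint identity $\sem{\phi_t} = \sem{\varphi_t}\rhomin$ (Theorem~\ref{thm:fixpointprop} (\ref{thm:fixprop})). If $n = 0$ then $t \trans{\omega}$, so $\varphi_t \equiv \ttt$ and $p \in S = \sem{\phi_t}$. If $n \geq 1$ then $t \nottrans{\omega}$; in the chosen prefix let $i \geq 1$ be the first position at which the test component changes, so steps $1, \dots, i-1$ are autonomous $\tau$-moves of the process and $p \Trans{\tau} p_{i-1}$ with the test still $t$. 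If step $i$ is an internal move $t \trans{\tau} t'$ of the test, I would push the block of process $\tau$-moves across it: replaying $p \Trans{\tau} p_{i-1}$ from $p \barra t'$ and then appending the remainder of the original prefix produces a successful computation prefix of $p \barra t'$ of length $n-1$, so by induction $p \in \sem{\phi_{t'}} = \rhomin(X_{t'}) = \sem{X_{t'}}\rhomin \subseteq \sem{\varphi_t}\rhomin = \sem{\phi_t}$. If step $i$ is instead a synchronisation $p_{i-1} \trans{a} p_i$, $t \trans{a} t'$, then $p \Trans{a} p_i$ while the remainder of the prefix is a successful computation prefix of $p_i \barra t'$ of length $n - i < n$, so by induction $p_i \in \sem{\phi_{t'}} = \rhomin(X_{t'})$ and hence $p \in \sem{\dmnd{a}X_{t'}}\rhomin \subseteq \sem{\phi_t}$. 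Putting the two inclusions together yields that $p \maysatisfy t$ iff $p \in \sem{\phimay{t}}$ with $\phimay{t} \in \mayhml$, which is the claim; combined with Theorem~\ref{thm:mayhml} this also shows, exactly as in Corollary~\ref{cor:mustlargest}, that $\mayhml$ is the largest (up to logical equivalence) \emph{may}-testable fragment of $\rechml$.

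I expect the main obstacle to be fixing the precise shape of $\varphi_t$ --- above all the realisation that an internal move of the test must be recorded by a bare propositional variable, not by a $\dmnd{\tau}$ modality, because such a move leaves the process state untouched, whereas a synchronisation is recorded by $\dmnd{a}(\cdot)$. The second delicate point is the well-foundedness of the induction for the converse inclusion: pushing a block of process $\tau$-transitions across a test $\tau$-transition is exactly what makes the prefix length strictly decrease in that case, and it is what allows one to avoid a separate lemma asserting that $\sem{\phi_t}$ is closed under $\Trans{\tau}$-predecessors (which, though true, would otherwise be needed).
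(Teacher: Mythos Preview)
Your argument is correct and follows the route the paper itself indicates: dualise the \emph{must}-case construction of Section~\ref{sec:must}, define $\varphi_t$ by cases on the immediate moves of $t$, take the $t$-component of the simultaneous least fixpoint over the finitely many reachable tests, and establish the two inclusions using Theorem~\ref{thm:fixpointprop} together with an induction on the length of a shortest successful prefix.

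There is one small divergence worth flagging. A literal dual of Definition~\ref{def:tests} gives, in the third clause, the uniform disjunction $\bigvee_{\alpha,t':\,t\trans{\alpha}t'}\dmnd{\alpha}X_{t'}$, i.e.\ $\dmnd{\tau}X_{t'}$ for internal test moves rather than your bare $X_{t'}$; this is the formula the paper works with. Both choices are sound: in this logic $\dmnd{\tau}$ is interpreted via the reflexive relation $\Trans{\tau}$, and more generally every closed $\phi\in\mayhml$ satisfies $\sem{\phi}=\sem{\dmnd{\tau}\phi}$. The paper appeals to exactly this $\tau$-closure fact in the converse inclusion, at the step where the first move of the successful prefix is a process-$\tau$ with the test unchanged. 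Your trick of pushing the initial block of process $\tau$-moves past the test's $\tau$-step is a clean way to avoid that auxiliary lemma and make the induction self-contained; but your remark that the test's internal move \emph{must} be recorded by a bare variable, ``because such a move leaves the process state untouched'', is too strong --- since $\Trans{\tau}$ is reflexive, $\dmnd{\tau}X_{t'}$ already allows the process to stay put, and it is the choice a straight dualisation produces.
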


\begin{corollary}\label{cor:maylargest}
Suppose $\phi$ is a formula in \rechml which is \emph{may}-testable. Then 
there exist some $\psi$ in $\mayhml$ which is logically equivalent to it.
\end{corollary}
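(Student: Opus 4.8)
The plan is to establish this exactly as its \emph{must}-testing counterpart, Corollary~\ref{cor:mustlargest}, by composing the two main \emph{may}-theorems. First I would unfold the hypothesis: since $\phi \in \rechml$ is \emph{may}-testable, by definition there is a test $t$ such that, for every process $p$, $p \models \phi$ iff $p \maysatisfy t$. Here I would lean on the standing assumption recorded in the Introduction, namely that in the \emph{may} setting all tests are drawn from a finite-state finite-branching LTS; this guarantees that $t$ is of the kind to which Theorem~\ref{thm:maytest} applies.

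Next I would apply Theorem~\ref{thm:maytest} to $t$, obtaining a formula $\psi := \phimay{t}$ that \emph{may}-represents $t$, i.e.\ $p \maysatisfy t$ iff $p \models \psi$ for every process $p$. The key extra observation — the analogue of the remark in the proof of Theorem~\ref{thm:musttest} that every connective occurring in $\phimust{t}$ belongs to \musthml — is that the formula $\phimay{t}$ produced by the construction of Section~\ref{sec:may} only ever uses the operators $\ttt, \fff, X, \dmnd{\alpha}\phi, \phi_1 \vee \phi_2, \lfp{X}{\phi}$, hence lies in \mayhml. Chaining the two biconditionals then yields, for every process $p$,
\[
p \in \sem{\phi} \Leftrightarrow p \maysatisfy t \Leftrightarrow p \in \sem{\psi},
\]
so that $\psi \in \mayhml$ is logically equivalent to $\phi$, as required.

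I expect the real work to be concentrated in the second step: Theorem~\ref{thm:maytest} as stated only asserts \emph{may}-representability in \rechml, so the fact that the witnessing formula can in fact be taken inside the fragment \mayhml genuinely depends on the shape of the construction carried out in Section~\ref{sec:may}, and I would record this as an explicit remark accompanying that construction, parallel to the \emph{must} case. A secondary point worth flagging is that the test $t$ supplied by the hypothesis must be one that Theorem~\ref{thm:maytest} covers; under the paper's blanket finiteness assumption on the LTS of tests this is automatic, but without such an assumption one would first have to argue that any \emph{may}-testable formula is already \emph{may}-represented by a test from a finite-state finite-branching LTS.
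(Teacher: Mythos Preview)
Your argument is correct and follows exactly the route the paper indicates (``Similar to that of Corollary~\ref{cor:mustlargest}''): unfold may-testability to obtain a witnessing test $t$, apply Theorem~\ref{thm:maytest} to get $\psi=\phimay{t}\in\mayhml$, and chain the two biconditionals. Your explicit remarks that $\phimay{t}$ lands in \mayhml and that $t$ must meet the finiteness hypotheses of Theorem~\ref{thm:maytest} are precisely the side conditions the paper relies on implicitly.
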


\begin{proof}
  Similar to that of Corollary~\ref{cor:mustlargest}.
\end{proof}

Our proofs for Theorem~\ref{thm:maytest} and Theorem~\ref{thm:mayhml}
are similar in style to the corresponding results for \emph{must}
testing, namely, namely Theorem~\ref{thm:musttest} and
Theorem~\ref{thm:musthml}.  Also , as we point out in the Conclusion,
they can be recovered by dualising the proofs of the corresponding
Theorems in \cite{aceto}.

\leaveout{
We now turn to the characterization of the \maysatisfy testing relation in terms of \rechml formulae.\\
First we will prove that each formula in \mayhml\\
\emph{may}-represents some test $t$ in grammar \ref{eq:tests}; then we show 
that if the LTS generated by a test $t$\ is finitely branching and finite state, 
then there exists a formula $\phi$\ which \emph{may}-represents $t$. In this case we do not require for the 
LTS of processes to be branching finite; this is not surprising, as informally speaking 
the definition of the \emph{may}-testing relation gives rise to a linear theory.

To prove that the power of tests defined in grammar \ref{eq:tests}\ can be captured (with respect to 
the \maysatisfy testing relation) by the language \mayhml, we will exploit the proof techniques used in \cite{aceto}.\\
To this end, we define the concept of weak satisfaction relation:

\begin{defi}
\label{def:wsatrel}
Let $R \subseteq S \times \mayhml$. Then $R$\ is a weak satisfaction relation if, and only if, it satisfies 
the following implications:
\begin{eqnarray*}
s \;R\; \ttt && \forall s \in S\\
s \;R\; \fff && \mbox{ for no } s \in S\\
s \;R\; \dmnd \alpha \phi &\Leftarrow& s\Trans{\alpha}s', s' \;R\; \phi \mbox{ for some } s' \in S\\
s \;R\; \phi_1 \vee \phi_2 &\Leftarrow& s\Trans{\tau}s', s' \;R\; \phi_1 \mbox{ or } s' \;R\; \phi_2
 \mbox{ for some } s' \in S\\
s \;R\; \lfp X \phi &\Leftarrow& s\Trans{\tau}s', s' \;R\; \phi\{\lfp X \phi /X \}\mbox{ for some } s' \in S
\end{eqnarray*}
\end{defi}

$\maymodels$ Is defined as the smallest weak satisfaction relation.\\
If the satisfaction relation $\models$\ is restricted to \mayhml, then it coincides with $\maymodels$. To prove 
this result, we will need to prove several properties the $\maymodels$\ and $\models$ relations.\\
First we prove the following result for $\maymodels$:
\begin{prop}
\label{prop:tauallowed}
Let $s \in S, \phi \in \mayhml$. Then $s \maymodels \phi$ if and only if 
there exists $s' \in S$ such that \\
$s\Trans{\tau}s', s' \maymodels \phi$.
\end{prop}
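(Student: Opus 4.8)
The plan is to prove the two implications separately, using the characterisation of $\maymodels$ as the \emph{smallest} weak satisfaction relation. The "if" direction is the easy one: suppose $s \Trans{\tau} s'$ and $s' \maymodels \phi$. I would argue by induction on the structure of $\phi$ (the cases being $\ttt$, $\fff$, $\dmnd{\alpha}\psi$, $\psi_1 \vee \psi_2$, $\lfp X \psi$, with $\fff$ vacuous and $\ttt$ immediate). In each non-trivial case the defining implication for a weak satisfaction relation already has the shape "there exists $s''$ with $s \Trans{\tau} s''$ and $s''\;R\;(\text{something})$"; since $s' \maymodels \phi$ unfolds (using that $\maymodels$ is \emph{a} weak satisfaction relation, hence satisfies the reverse implications one needs here — more carefully, using the inductive structure and the shape of the clauses) to give some $s''$ with $s' \Trans{\tau} s''$ and the appropriate sub-property, transitivity of $\Trans{\tau}$ yields $s \Trans{\tau} s''$, and we conclude $s \maymodels \phi$ directly from the defining clause. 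Actually the cleanest route here is to observe that for each clause, the right-hand side is preserved under prefixing a $\Trans{\tau}$ step to $s$, because $\Trans{\tau}$ is transitively closed; so no induction is even needed beyond peeling one clause.

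The "only if" direction is the one requiring real work, and it is where I expect the main obstacle. The statement to prove is that the relation
\begin{equation*}
R \;=\; \{ (s,\phi) \mid \exists s'.\; s \Trans{\tau} s' \text{ and } s' \maymodels \phi \}
\end{equation*}
contains $\maymodels$; by minimality of $\maymodels$ it suffices to show $R$ is itself a weak satisfaction relation, i.e.\ $R$ satisfies all five implications of Definition~\ref{def:wsatrel}. Checking $R\;\ttt = S$ and $R\;\fff = \emptyset$ is immediate (for the latter, $s' \maymodels \fff$ never holds). For the three recursive clauses, say $\dmnd{\alpha}\psi$: assume $s \Trans{\alpha} s_1$ with $s_1\;R\;\psi$, so $s_1 \Trans{\tau} s_1'$ with $s_1' \maymodels \psi$. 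Then $s \Trans{\alpha} s_1 \Trans{\tau} s_1'$, and since $\Trans{\alpha}$ absorbs trailing $\tau$-steps, $s \Trans{\alpha} s_1'$; taking $s' = s$ (note $s \Trans{\tau} s$ reflexively) we must conclude $s \maymodels \dmnd{\alpha}\psi$, which holds because $\maymodels$ is a weak satisfaction relation and $s \Trans{\alpha} s_1'$, $s_1' \maymodels \psi$. The cases for $\vee$ and $\lfp X \phi$ are handled the same way, using that $\Trans{\tau}$ composes with $\Trans{\tau}$. The subtlety to get right is the quantifier bookkeeping — in the clause for, e.g., $\psi_1 \vee \psi_2$ the witness $s'$ in "$s\;R\;(\psi_1\vee\psi_2)$" and the witness in "$s' \Trans{\tau} s'', s''\maymodels \psi_i$" must be threaded together correctly so that a single $\Trans{\tau}$-derivative of $s$ does the job.

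The main obstacle, then, is purely the second direction's verification that $R$ is a weak satisfaction relation: one must be careful that the definition of weak satisfaction relation uses one-directional implications ($\Leftarrow$), so showing $R$ satisfies them requires, in each recursive case, producing a $\maymodels$-witness from an $R$-witness, which is exactly where the reflexivity of $\Trans{\tau}$ (choosing $s' = s$) and the absorption laws $\Trans{\alpha}\!\circ\!\Trans{\tau} \subseteq \Trans{\alpha}$, $\Trans{\tau}\!\circ\!\Trans{\tau}\subseteq\Trans{\tau}$ are used. Once $R$ is shown to be a weak satisfaction relation, minimality of $\maymodels$ gives $\maymodels\; \subseteq\; R$, which is precisely the "only if" statement, and combined with the "if" direction completes the proof.
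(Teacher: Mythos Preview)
Your approach is essentially identical to the paper's: structural case analysis for the ``if'' direction and, for the ``only if'' direction, showing that the relation $R = \{(s,\phi) \mid \exists s'.\, s\Trans{\tau}s' \text{ and } s'\maymodels\phi\}$ is a weak satisfaction relation so that $\maymodels\,\subseteq R$ by minimality. Two small remarks. First, in the ``if'' direction you write that the reverse implications hold ``because $\maymodels$ is a weak satisfaction relation''; that is not the reason---being a weak satisfaction relation only gives the $\Leftarrow$ clauses, and the $\Rightarrow$ direction you need to unfold $s'\maymodels\phi$ comes from $\maymodels$ being the \emph{smallest} such relation (the paper states this explicitly). Second, you (and the paper) work harder than necessary on the ``only if'' direction: since $\Trans{\tau}$ is reflexive, taking $s'=s$ already gives the witness, so that direction is immediate and the real content lies entirely in the ``if'' direction.
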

A similar result holds for the standard satisfaction relation $\models$.
\begin{prop}
\label{prop:mayhmltauclosed}
Let $p \in S$, $\phi \in \mayhml$. Then $p \models \phi$ if and only if there exists $p' : p \Trans{\tau} p'$\ and $p' \models \phi$.
\end{prop}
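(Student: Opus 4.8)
The $(\Rightarrow)$ implication is immediate: $\Trans{\tau}$ is reflexive, so we may take $p' = p$. Hence the real content of the statement is the $(\Leftarrow)$ implication, which asserts that $\sem{\phi}$ is \emph{backward closed} under weak internal computation. Since $\Trans{\tau}$ is the reflexive transitive closure of $\trans{\tau}$, a trivial induction on the length of the derivation $p \Trans{\tau} p'$ reduces this to its single-step form: if $p \trans{\tau} p'$ and $p' \in \sem{\phi}$ then $p \in \sem{\phi}$.

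Call a set $Q \subseteq S$ \emph{$\tau$-closed} when $p \trans{\tau} p'$ and $p' \in Q$ imply $p \in Q$. The plan is to prove, by induction on the structure of $\phi$, the slightly stronger statement that $\sem{\phi}\rho$ is $\tau$-closed for every $\phi \in \mayhml$ (now allowing $\phi$ to be open) and every environment $\rho$ assigning $\tau$-closed sets to variables; the Proposition is the instance with $\phi$ closed, where the environment is irrelevant. The base cases $\ttt$, $\fff$ and $X$ are immediate. For $\dmnd{\alpha}\psi$ no induction hypothesis is even needed, since $p \trans{\tau} p' \Trans{\alpha} p''$ always entails $p \Trans{\alpha} p''$, so $\sem{\dmnd{\alpha}\psi}\rho$ is $\tau$-closed outright. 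For $\psi_1 \vee \psi_2$ the claim follows from the induction hypotheses on $\psi_1,\psi_2$ and the fact that a union of $\tau$-closed sets is $\tau$-closed.

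The one delicate case is $\lfp X \psi$, and this is where I expect the main (mild) obstacle to lie, since our interpretation does not come with a continuity assumption for the operators involved. First I would record that the $\tau$-closed subsets of $S$ form a complete sublattice of $2^S$: the collection contains $\emptyset$ and $S$ and is closed under arbitrary unions and arbitrary intersections. By the induction hypothesis applied to the open subformula $\psi$, the monotone operator $Q \mapsto \sem{\psi}\rho[X \mapsto Q]$ sends $\tau$-closed sets to $\tau$-closed sets. Computing its least fixpoint as the union of its (possibly transfinite) iterates started from $\emptyset$ — each of which is $\tau$-closed, by the induction hypothesis at successor stages and by closure under unions at limit stages — shows that $\sem{\lfp X \psi}\rho$ is $\tau$-closed. (Equivalently, one can argue via finite approximants of $\lfp X \psi$ in the style of Section~\ref{sec:must}, doing an additional induction on the approximation index.) This completes the induction and hence the Proposition. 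The result is the exact $\models$-counterpart of Proposition~\ref{prop:tauallowed} for $\maymodels$; together these two facts are what let $\models$ and $\maymodels$ be identified on $\mayhml$.
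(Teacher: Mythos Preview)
Your argument is correct. The $(\Rightarrow)$ direction via reflexivity matches the paper exactly, and your $(\Leftarrow)$ direction by structural induction on $\phi$, showing that $\sem{\phi}\rho$ is $\tau$-closed whenever $\rho$ takes values in $\tau$-closed sets, goes through as written; in particular your treatment of $\lfp X \psi$ via transfinite iterates (using that $\tau$-closed sets are preserved by the body operator and closed under unions) is sound.

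The paper takes a much terser route: it simply observes that the semantics of \mayhml is given in terms of weak transitions and records the identity $\sem{\dmnd{\alpha}\phi} = \sem{\dmnd{\tau}\dmnd{\alpha}\phi}$, leaving the reader to see that disjunction and least fixpoints preserve this $\tau$-absorption. Your proof unfolds exactly that implicit induction and, crucially, makes the fixpoint case explicit --- something the paper's sketch does not do at all. What you gain is a self-contained argument that does not rely on the reader supplying the recursion case; what the paper gains is brevity, at the cost of leaving the only nontrivial step (why $\tau$-closure survives $\lfp$) unjustified. Your alternative suggestion of using finite approximants in the style of Section~\ref{sec:must} would also work, but would need a continuity argument for $\dmnd{\cdot\alpha\cdot}$ that you avoid by going transfinite.
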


\begin{proof}
For the only if implication notice that for all $p \in S$\ $p\Trans{\tau} p$.\\
For the only if implication, notice that the semantics of \mayhml is defined on weak actions, and that $\sem{\dmnd \alpha \phi} = \sem{\dmnd \tau \dmnd \alpha \phi}$.
\end{proof}
Thus, for each closed formula $\phi in mayhml$, we have $\sem\phi = \sem{\dmnd \tau \phi}$.

\begin{prop}
\label{prop:mayeq}
$p \models \phi \mbox{ iff } p \maymodels \phi$.
\end{prop}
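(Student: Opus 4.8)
The plan is to prove the two inclusions separately, both relations being understood as restricted to closed formulae of $\mayhml$: first $\maymodels\,\subseteq\,\models$, then $\models\,\subseteq\,\maymodels$.

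For $\maymodels\,\subseteq\,\models$ I would show that the relation $R=\{(s,\phi)\mid s\in S,\ \phi\in\mayhml,\ s\models\phi\}$ satisfies all the closure conditions of Definition~\ref{def:wsatrel}, i.e. is itself a weak satisfaction relation; minimality of $\maymodels$ then gives $\maymodels\,\subseteq\,R$, which is the desired implication $s\maymodels\phi\Rightarrow s\models\phi$. The $\ttt$ and $\fff$ clauses hold because $\sem{\ttt}=S$ and $\sem{\fff}=\emptyset$; the $\dmnd{\alpha}$ clause is immediate from the definition of $\dmnd{\cdot\alpha\cdot}$; for the $\vee$ and $\lfp{X}{}$ clauses I would use, respectively, that $\sem{\phi_1\vee\phi_2}=\sem{\phi_1}\cup\sem{\phi_2}$ and that the least fixpoint unfolds, $\sem{\lfp X\phi}=\sem{\phi\{\lfp X\phi/X\}}$, together with the $\tau$-closure of $\models$ on $\mayhml$ supplied by Proposition~\ref{prop:mayhmltauclosed}: if $s\Trans{\tau}s'$ and $s'\models\psi$ with $\psi\in\mayhml$, then $s\models\psi$.

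For the converse inclusion $\models\,\subseteq\,\maymodels$ I would argue by finite approximation. One introduces for $\mayhml$ the recursion-free approximants $\phi^{k}$ defined exactly as for $\musthml$ ($\phi^{0}=\fff$, $(\dmnd\alpha\phi)^{(k+1)}=\dmnd\alpha\,\phi^{(k+1)}$, $(\phi_1\vee\phi_2)^{(k+1)}=\phi_1^{(k+1)}\vee\phi_2^{(k+1)}$, $(\lfp X\phi)^{(k+1)}=(\phi\{\lfp X\phi/X\})^{k}$). Since every operator of $\mayhml$ is continuous on $(2^{S},\subseteq)$ — note that $\dmnd{\cdot\alpha\cdot}$ distributes over arbitrary unions and hence needs no finite-branching assumption — the analogue for $\mayhml$ of Proposition~\ref{cor:continuity} yields $\sem{\phi}=\bigcup_{k\geq0}\sem{\phi^{k}}$, so it suffices to prove that $s\in\sem{\phi^{k}}$ implies $s\maymodels\phi$ for every $k\geq0$ and every closed $\phi\in\mayhml$. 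This I would establish by induction on $k$ with an inner structural induction on $\phi$. The base case $k=0$ is vacuous since $\sem{\phi^{0}}=\emptyset$. In the step, $\phi=\ttt$ and $\phi=\fff$ are immediate; if $\phi=\dmnd\alpha\psi$ and $s\in\sem{\dmnd\alpha\,\psi^{(k+1)}}$, pick $s'$ with $s\Trans{\alpha}s'$ and $s'\in\sem{\psi^{(k+1)}}$, apply the inner hypothesis to the closed formula $\psi$ to get $s'\maymodels\psi$, and close by the $\dmnd{\alpha}$ clause of Definition~\ref{def:wsatrel}; if $\phi=\psi_1\vee\psi_2$ then $s$ lies in $\sem{\psi_1^{(k+1)}}$ or $\sem{\psi_2^{(k+1)}}$, the inner hypothesis gives $s\maymodels\psi_1$ or $s\maymodels\psi_2$, and the $\vee$ clause closes the case using $s\Trans{\tau}s$; finally, if $\phi=\lfp X\psi$ then $\phi^{(k+1)}=(\psi\{\lfp X\psi/X\})^{k}$, the outer hypothesis at level $k$ applied to the closed formula $\psi\{\lfp X\psi/X\}$ gives $s\maymodels\psi\{\lfp X\psi/X\}$, and the $\lfp{X}{}$ clause closes the case, again via $s\Trans{\tau}s$.

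The main obstacle is the least-fixpoint case of the second inclusion. It is what forces the two-level (approximant index, then formula structure) induction, and one must verify that this ordering is well-founded — the $\dmnd\alpha$ and $\vee$ cases keep the index $k$ fixed and strictly decrease the formula, whereas the $\lfp{X}{}$ case decreases $k$ — and that the inner structural induction never leaves the class of closed formulae (the immediate subformula of a closed $\dmnd\alpha\psi$ or $\psi_1\vee\psi_2$ is closed, and the open body of a fixpoint is re-closed by the substitution before the outer hypothesis is invoked). The remaining work is routine unfolding of the definitions together with the two quoted propositions; Proposition~\ref{prop:tauallowed}, incidentally, is not needed for this particular statement.
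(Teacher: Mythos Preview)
Your proof is correct. The first inclusion $\maymodels\,\subseteq\,\models$ is handled exactly as in the paper: one shows that $\models$ (restricted to $\mayhml$) is a weak satisfaction relation, invoking Proposition~\ref{prop:mayhmltauclosed} for the $\tau$-closure in the $\vee$ and $\lfp{X}{}$ clauses, and concludes by minimality of $\maymodels$.

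For the converse inclusion $\models\,\subseteq\,\maymodels$, your route genuinely differs from the paper's. The paper proceeds by structural induction on $\phi$ directly; in the fixpoint case $\phi=\lfp X\psi$, it uses Proposition~\ref{prop:tauallowed} to obtain that $p\maymodels\lfp X\psi$ iff $p\maymodels\psi\{\lfp X\psi/X\}$, argues that $P=\{p\mid p\maymodels\lfp X\psi\}$ is therefore a solution of the equation $X=\psi$, and then appeals to Tarski's theorem to conclude $\sem{\lfp X\psi}\subseteq P$. Your approach instead introduces finite approximants $\phi^{k}$ for $\mayhml$, observes that all $\mayhml$ constructors are continuous on $(2^S,\subseteq)$ (correctly noting that $\dmnd{\cdot\alpha\cdot}$ distributes over arbitrary unions, so no finite-branching hypothesis is needed), and runs a lexicographic induction on $(k,\phi)$.

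Both arguments are sound. Your approach is somewhat more elementary: it sidesteps the fixpoint-theoretic step and, as you note, does not need Proposition~\ref{prop:tauallowed}. The paper's argument is shorter once those tools are in hand, and makes the role of the least-fixpoint property more explicit. Your remark that the well-foundedness of the induction must be checked (index decreases at $\lfp{X}{}$, formula decreases at $\dmnd\alpha$ and $\vee$) is exactly the point where the naive structural induction breaks down and is the reason the paper reaches for Tarski instead.
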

We are now ready to show that each formula of \mayhml \emph{may}-represents some test $t$.\\
For each formula $\phi$\ in \mayhml, the test $\Tmay \phi$\ is defined as below:
\begin{eqnarray*}
\Tmay \ttt &=& \omega.0\\
\Tmay \fff &=& 0\\
\Tmay X &=& X\\
\Tmay {\phi_1 \vee \phi_2} &=& \tau.\Tmay {\phi_1} + \tau.\Tmay{\phi_2}\\
\Tmay {\dmnd \alpha \phi} &=& \alpha.\Tmay \phi\\
\Tmay {\lfp X \phi} &=& \mu X.\Tmay{\phi}
\end{eqnarray*}

\begin{prop}
\label{prop:maywsr}
The relation $\Rmay = \{\;(s, \phi) \barra s \maysatisfy \Tmay \phi\}$\ is a weak satisfaction relation.
\end{prop}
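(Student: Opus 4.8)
The plan is to show that $\Rmay = \{(s,\phi) \mid s \maysatisfy \Tmay\phi\}$ satisfies each of the defining clauses of a weak satisfaction relation (Definition~\ref{def:wsatrel}). We proceed by cases on the structure of $\phi$, checking each implication in turn. First the base cases: for $\phi = \ttt$ we have $\Tmay\ttt = \omega.0$, so every process $s$ has the successful computation consisting of just performing $\omega$ (after possibly idling), hence $s \maysatisfy \Tmay\ttt$ for all $s$; for $\phi = \fff$ we have $\Tmay\fff = 0$, a test with no transitions and in particular never reaching an $\omega$, so no computation of $s \barra 0$ is successful and $(s,\fff) \notin \Rmay$ for any $s$.

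For the inductive clauses we must establish the right-to-left implications. Consider $\phi = \dmnd\alpha\psi$ with $\Tmay{\dmnd\alpha\psi} = \alpha.\Tmay\psi$. Suppose $s \Trans{\alpha} s'$ and $s' \maysatisfy \Tmay\psi$. From $s' \maysatisfy \Tmay\psi$ we get a successful computation fragment of $s' \barra \Tmay\psi$. We prepend to it: $s \Trans{\alpha} s'$ means $s \Trans{\tau} s_1 \trans{a} s_2 \Trans{\tau} s'$ when $\alpha = a$ (or simply $s \Trans{\tau} s'$ when $\alpha = \tau$, handled via the internal rules), and correspondingly $\alpha.\Tmay\psi \trans{\alpha} \Tmay\psi$; using the parallel-composition rules (the first two for the $\tau$-moves of $s$, the synchronisation rule for the $a$-step, or just the first rule in the $\tau$ case) we assemble a computation $s \barra \alpha.\Tmay\psi \shortrightarrow \cdots \shortrightarrow s' \barra \Tmay\psi \shortrightarrow \cdots$ whose tail is the given successful fragment. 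Hence $s \maysatisfy \Tmay{\dmnd\alpha\psi}$, i.e. $(s, \dmnd\alpha\psi) \in \Rmay$. The cases $\phi = \phi_1 \vee \phi_2$ (using $\Tmay{\phi_1\vee\phi_2} = \tau.\Tmay{\phi_1} + \tau.\Tmay{\phi_2}$, where from $s \Trans{\tau} s'$ and $s' \maysatisfy \Tmay{\phi_i}$ we idle $s$ along to $s'$, let the test perform its $\tau$-branch into $\Tmay{\phi_i}$, and continue with the witnessing fragment) and $\phi = \lfp X\psi$ (using $\Tmay{\lfp X\psi} = \mu X.\Tmay\psi$ together with the test transition $\mu X.\Tmay\psi \trans{\tau} \Tmay\psi\{\mu X.\Tmay\psi/X\}$ and the syntactic identity $\Tmay\psi\{\mu X.\Tmay\psi/X\} \equiv \Tmay{\psi\{\lfp X\psi/X\}}$) are entirely analogous: in each case one prepends a short prefix of $\tau$-moves to the assumed successful computation.

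I expect the main obstacle to be purely bookkeeping rather than conceptual: one has to be careful that the syntactic substitution commutes with $\Tmay{\cdot}$ in the fixpoint case, i.e. that $\Tmay{\psi\{\lfp X\psi/X\}}$ really is $\Tmay\psi\{\mu X.\Tmay\psi/X\}$, which follows by a straightforward structural induction on $\psi$ from the clauses defining $\Tmay{\cdot}$ and the fact that $\Tmay X = X$. One also needs to note that a successful computation \emph{fragment} of $s' \barra \Tmay\psi$ extends to a maximal computation that is still successful (success is preserved under extension, since once the test component has performed $\omega$ the computation is successful regardless of how it continues), so that the prepended computation of $s \barra \Tmay\phi$ is genuinely a successful computation in the sense of the definition. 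With these observations the verification of all clauses is routine, establishing that $\Rmay$ is a weak satisfaction relation.
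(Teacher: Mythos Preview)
Your proof is correct and follows essentially the same approach as the paper's: a clause-by-clause verification of Definition~\ref{def:wsatrel}, in each inductive case prepending a short computation prefix (built from the weak $\tau$/$\alpha$-moves of $s$ and the head transition of $\Tmay\phi$) to the assumed successful computation of $s' \barra \Tmay\psi$. The paper additionally splits the $\lfp X\psi$ case according to whether $X$ actually occurs free in $\psi$, but your uniform treatment via the unfolding $\mu X.\Tmay\psi \trans{\tau} \Tmay\psi\{\mu X.\Tmay\psi/X\}$ together with the substitution identity $\Tmay{\psi\{\lfp X\psi/X\}} \equiv \Tmay\psi\{\Tmay{\lfp X\psi}/X\}$ covers both sub-cases at once.
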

\begin{proof}
We prove that $\Rmay$\ satisfies the implications of definition \ref{def:wsatrel}.
\begin{itemize}
\item $\Tmay \ttt = \omega.0$. It is trivial to check that each process in $S$\ \maysatisfy such a test.
\item $\Tmay \fff = 0$. Again, it is straightforward to show that for no process $p \in S$\ we have $p \maysatisfy \Tmay \fff$.
\item Suppose $p \Trans{a} p'$, and $p' \Rmay \phi$. Then, we have the computation prefix
\[
p \barra \alpha.\Tmay \phi \shortrightarrow \cdots \shortrightarrow p'' \barra \alpha.\Tmay \phi \shortrightarrow p' \barra \Tmay \phi\footnote{where $p'' = p'$\ in the case $\alpha = \tau$}.
\]

Since $p' \maysatisfy \Tmay \phi$ by the definition of $\Rmay$, the experiment $p \barra \Tmay {\dmnd a \phi}$\ has a successful computation, hence $p \Rmay \dmnd{\alpha}\phi$.
\item Suppose $p \Trans{\tau} p'$, and $p' \Rmay \phi_1$. Given an arbitrary formula $\phi_2$, consider the experiment\\
$p \barra \tau.\Tmay{\phi_1} + \tau.\Tmay{\phi_2}$, which has the computation fragment
\[
p \barra \tau.\Tmay{\phi_1} + \tau.\Tmay{\phi_2} \shortrightarrow p \barra \Tmay{\phi_1} \shortrightarrow \cdots \shortrightarrow p' \barra \Tmay{\phi_1}
\]

As $p' \maysatisfy \Tmay{\phi_1}$, we have $p \maysatisfy \Tmay{\phi_1 \vee \phi_2}$.

\item Suppose $p \Trans{\tau} p'$, with $p' \Rmay \psi \{\lfp X \psi/X\}$. We have two different cases to consider:
\begin{itemize}
\item If $X$\ does not appear free in $\psi$\ then $\Tmay{\lfp X \psi} = \Tmay \psi$. Moreover, we have $\psi\{\lfp X \psi/X\} = \psi$. Consider the computation fragment
\[
p \barra \Tmay \psi \shortrightarrow \cdots \shortrightarrow p' \barra \Tmay \psi
\]

Again, we have $p \Rmay \Tmay \psi$, which in this case is equivalent to $p \Rmay \Tmay{\lfp X \psi}$

\item If $X$ appears free in $\psi$, then $\Tmay{\lfp X \psi} = \mu X.\Tmay{\psi}$. In this case we have the computation
\[
p \barra \mu X.\Tmay{\psi}\shortrightarrow \cdots \shortrightarrow p' \barra \mu X.\Tmay{\psi} \shortrightarrow p' \barra \Tmay{\psi\{\mu X.\psi/X\}}
\]

and hence $p \Rmay \lfp X \psi$.
\end{itemize}
\end{itemize}
\end{proof}

\begin{prop}
\label{prop:maylwsr}
Let $p \in S$ and let $\phi \in \mayhml$. If $p \maysatisfy \Tmay \phi$\ then $p \maymodels \phi$.
\end{prop}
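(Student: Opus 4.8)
The plan is to prove the statement by induction on the length of a successful computation witnessing $p \maysatisfy \Tmay{\phi}$, performing a case analysis on the structure of $\phi$, which through the definition of $\Tmay{\cdot}$ completely determines the transition behaviour of the test. To conclude $p \maymodels \phi$ in each case I would use only that $\maymodels$, being a weak satisfaction relation, validates the implications of Definition~\ref{def:wsatrel} read from right to left; so every inductive case must yield a $\maymodels$-fact about a strictly simpler computation, which is precisely what dictates the induction measure.

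Two preliminary facts would come first. (a)~A routine substitution lemma for the test translation: $\Tmay{\psi\{\chi/X\}} \equiv (\Tmay{\psi})\{(\Tmay{\chi})/X\}$, proved by structural induction on $\psi$ using $\Tmay{X} = X$; its only use is the instance $\Tmay{\psi\{\lfp{X}{\psi}/X\}} \equiv (\Tmay{\psi})\{(\mu X.\Tmay{\psi})/X\}$, which identifies the test reached after one unfolding step of $\mu X.\Tmay{\psi} = \Tmay{\lfp{X}{\psi}}$. (b)~Inspecting the clauses defining $\Tmay{\cdot}$, one has $\Tmay{\phi} \trans{\omega}$ only when $\phi = \ttt$, since every other clause guards the test with a prefix $\alpha.$, $\tau.$ or $\mu X.$; hence a length-$0$ successful computation forces $\phi = \ttt$, and the induction is well founded.

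For the inductive step I would fix a successful computation of $p \barra \Tmay{\phi}$ of length $n \geq 1$. In an experiment every transition is a $\tau$-move, and in each non-trivial case the test $\Tmay{\phi}$ has a single reactive capability: a synchronisation on $\alpha$ when $\phi = \dmnd{\alpha}{\psi}$, an internal unfolding into $\Tmay{\psi_i}$ for some $i$ when $\phi = \psi_1 \vee \psi_2$, and an internal unfolding into $\Tmay{\psi\{\lfp{X}{\psi}/X\}}$ when $\phi = \lfp{X}{\psi}$. Consequently the computation must factor as a (possibly empty) block of internal moves of $p$ with the test frozen at $\Tmay{\phi}$, followed by the unique test move, followed by a strictly shorter successful computation from some $p' \barra t'$ with $t'$ one of $\Tmay{\psi}$, $\Tmay{\psi_i}$, $\Tmay{\psi\{\lfp{X}{\psi}/X\}}$; this test move is forced to occur before the first $\omega$-enabled state because $\Tmay{\phi} \nottrans{\omega}$. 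The frozen block, together with the synchronising move in the $\dmnd{\alpha}{\psi}$ case, exhibits a weak transition $p \Trans{\alpha} p'$ (with $\alpha = \tau$ in the $\psi_1 \vee \psi_2$ and $\lfp{X}{\psi}$ cases). The inductive hypothesis applied to the shorter sub-computation then gives $p' \maymodels \psi$ (resp.\ $p' \maymodels \psi_i$, $p' \maymodels \psi\{\lfp{X}{\psi}/X\}$), and the matching clause of Definition~\ref{def:wsatrel} yields $p \maymodels \phi$. The base cases are immediate: $p \maymodels \ttt$ for every $p$, and for $\phi = \fff$ the hypothesis is vacuous because $\Tmay{\fff} = 0$ has no successful computation ($0 \nottrans{\omega}$ and $0 \nottrans{\;}$).

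The main obstacle will be the factoring argument: one must verify that along the successful prefix the test component stays syntactically equal to $\Tmay{\phi}$ until exactly one reactive move fires, that this move is forced to happen strictly before any $\omega$-enabled state is reached, and that the frozen portion of the prefix genuinely constitutes a weak $\tau$- (or, combined with the synchronising move, $\alpha$-) transition of $p$. Note that, in contrast with the \emph{must} case, no finite-branching hypothesis on the process LTS is needed here, since we are handed an actual successful computation and merely dissect it.
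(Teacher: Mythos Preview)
Your proposal is correct and follows essentially the same approach as the paper: induction on the (minimal) length of a successful computation prefix of $p \barra \Tmay{\phi}$, with a case analysis driven by the shape of $\Tmay{\phi}$, the substitution lemma for $\Tmay{\cdot}$ handling the fixpoint case. The only organisational difference is that the paper peels off one transition at a time and invokes Proposition~\ref{prop:tauallowed} when the first step is a $\tau$-move of $p$, whereas you absorb the initial block of $p$-moves directly into the weak-transition premise of the relevant clause of Definition~\ref{def:wsatrel}; this is a cosmetic variation of the same argument.
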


\begin{thm}
\label{thm:mayhml}
Let $\phi \in \mayhml$, $p \in S$. Then $p \models \phi$\ if and only if  $p \may \Tmay \phi$.
\end{thm}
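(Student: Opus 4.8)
The plan is to establish the biconditional $p \models \phi \iff p \maysatisfy \Tmay{\phi}$ by chaining together the characterisation results already assembled for the \emph{may} case. First I would invoke Proposition~\ref{prop:mayeq}, which identifies the restriction of $\models$ to \mayhml with the smallest weak satisfaction relation $\maymodels$; this reduces the goal to showing $p \maymodels \phi \iff p \maysatisfy \Tmay{\phi}$, i.e.\ to showing that the relation $\Rmay = \{(s,\phi) \mid s \maysatisfy \Tmay{\phi}\}$ coincides with $\maymodels$.

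One inclusion is the content of Proposition~\ref{prop:maywsr}: $\Rmay$ is a weak satisfaction relation, and since $\maymodels$ is by definition the \emph{smallest} such relation, we get $\maymodels\; \subseteq\; \Rmay$; that is, $p \maymodels \phi$ implies $p \maysatisfy \Tmay{\phi}$. The reverse inclusion $\Rmay\; \subseteq\; \maymodels$ is exactly Proposition~\ref{prop:maylwsr}: if $p \maysatisfy \Tmay{\phi}$ then $p \maymodels \phi$. Putting the two inclusions together with Proposition~\ref{prop:mayeq} yields the theorem.

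The one subtlety worth flagging is that, unlike the \emph{must} case, no finiteness hypothesis on the LTS of processes is needed here — the weak-satisfaction-relation machinery handles recursion coinductively on the test side, and a single successful computation witnesses membership in $\Rmay$ without any appeal to König's Lemma. The main obstacle is really packaged inside Proposition~\ref{prop:maylwsr}: proving $\Rmay \subseteq \maymodels$ requires showing that a successful computation of $p \mid \Tmay{\phi}$ can be analysed by induction on its (finite) length, with a case analysis on the structure of $\phi$, and at the fixpoint case one must unfold $\mu X.\Tmay{\psi}$ and match it against $\Tmay{\psi\{\lfp X \psi / X\}}$, using Proposition~\ref{prop:tauallowed} to absorb the leading $\tau$-steps. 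Given these lemmas, the proof of the theorem itself is a short assembly, so I would keep it to the three-line chain above.

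\begin{proof}
By Proposition~\ref{prop:mayeq} it suffices to show that $p \maymodels \phi$ if and only if $p \maysatisfy \Tmay{\phi}$, that is, that the relation $\Rmay = \{(s,\phi) \mid s \maysatisfy \Tmay{\phi}\}$ coincides with $\maymodels$ on \mayhml. By Proposition~\ref{prop:maywsr}, $\Rmay$ is a weak satisfaction relation; since $\maymodels$ is the smallest such relation, $p \maymodels \phi$ implies $p\,\Rmay\,\phi$, i.e.\ $p \maysatisfy \Tmay{\phi}$. Conversely, Proposition~\ref{prop:maylwsr} gives that $p \maysatisfy \Tmay{\phi}$ implies $p \maymodels \phi$. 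Combining these two implications with Proposition~\ref{prop:mayeq} yields $p \models \phi$ if and only if $p \maysatisfy \Tmay{\phi}$.
\end{proof}
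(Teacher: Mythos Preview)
Your proof is correct and follows essentially the same route as the paper's own proof: both reduce to $\maymodels$ via Proposition~\ref{prop:mayeq}, use Proposition~\ref{prop:maywsr} (that $\Rmay$ is a weak satisfaction relation, hence contains the smallest one $\maymodels$) for the forward direction, and Proposition~\ref{prop:maylwsr} for the converse. Your remark that no finite-branching hypothesis on processes is required here is also in line with the paper's treatment of the \emph{may} case.
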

\begin{proof}
For each $\phi \in \mayhml$. If $p \models \phi$, then $p \maymodels \phi$ (\ref{prop:mayeq}). Moreover, since $\maymodels$\ is the smallest weak satisfaction relation, we have $p \Rmay \phi$ (\ref{prop:maywsr}), therefore $p \may \Tmay \phi$.\\
For the opposite implication, suppose $p \may \Tmay{\phi}$. By Proposition \ref{prop:maylwsr}\ we have $p \maymodels \phi$, hence $p \models \phi$\ by Proposition \ref{prop:mayeq}.
\end{proof}

Next, we show that if the LTS of tests generated by a test is finite state, then the notion of \maysatisfy testing relation can be captured logically by formulae in \mayhml.\\
To show this result, we will use the same proof techniques involved in the proof of Theorem \ref{thm:musttest}.\\
First, assume to have a test indexed set of test variables $\{X_t\}$. Then, for each test $t$\ define the formula $\phi_t$\ as
\begin{eqnarray*}
\phi_t &= \ttt &\mbox {if }t\trans{\omega}\\
\phi_t &= \fff &\mbox {if }t \nottrans{\;}\\
\phi_t &= \displaystyle{\bigvee_{\alpha, t': t \trans{\alpha} t'}} \dmnd{\alpha} X_{t'} & \mbox{if } t \nottrans{\omega}, t\trans{\;}
\end{eqnarray*}

and take $\phimay t$\ to be $\rechml^+$\ formula $\min_t(X_T, \phi_T)$.\\
Next we define the following environments:
\begin{eqnarray*}
\rhomin &=& \sem{\phimay t}\\
\rhomay &=& \{ p \barra p \may t\}
\end{eqnarray*}

In the same style of section \ref{sec:must}, we will prove that the two environments above coincide.
\begin{prop}
\label{prop:minsubmay}
For each test $t, \rhomin(X_t) \subseteq \rhomay(X_t)$.
\end{prop}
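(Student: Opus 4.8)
The plan is to follow the template of the \emph{must} case, Proposition~\ref{thm:minsubsetmust}. By the minimal fixpoint property, Theorem~\ref{thm:fixpointprop}(\ref{thm:minfixprop}), it suffices to check that the vector of sets with components $\rhomay(X_t) = \{\, p \barra p \maysatisfy t\,\}$ (for $t \in T$) is a pre-fixpoint of the equation system $X_t = \phi_t$ underlying $\phimay{t}$; that is,
\[
\sem{\phi_t}\rhomay \;\subseteq\; \rhomay(X_t) \qquad \text{for every } t \in T .
\]
Granting this, Theorem~\ref{thm:fixpointprop}(\ref{thm:minfixprop}) gives $\sem{\phimay{t}} \subseteq \rhomay(X_t)$, i.e.\ $\rhomin(X_t) \subseteq \rhomay(X_t)$, which is the claim. (The formulae $\phi_t$, and hence $\phimay{t}$, are well defined since the LTS of tests is finite state and finite branching by Proposition~\ref{prop:Tbf}, so each disjunction defining $\phi_t$ ranges over a finite set.)

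I would prove $\sem{\phi_t}\rhomay \subseteq \rhomay(X_t)$ by a case analysis following the three clauses in the definition of $\phi_t$. If $t \trans{\omega}$ then $\phi_t = \ttt$ and $\sem{\phi_t}\rhomay = S$; but also $p \maysatisfy t$ for every $p$, because the trivial length-$0$ computation of $p \barra t$ already witnesses $t \trans{\omega}$, so the inclusion is in fact an equality. If $t \nottrans{\;}$ then $\phi_t = \fff$, $\sem{\phi_t}\rhomay = \emptyset$, and there is nothing to prove. The only substantial case is $t \nottrans{\omega}$, $t \trans{\;}$, where $\phi_t = \bigvee_{\alpha, t' : t \trans{\alpha} t'} \dmnd{\alpha} X_{t'}$: here $p \in \sem{\phi_t}\rhomay$ means there exist $\alpha$ and $t'$ with $t \trans{\alpha} t'$, $p \Trans{\alpha} p'$, and $p' \in \rhomay(X_{t'})$, i.e.\ $p' \maysatisfy t'$.

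From a successful computation of $p' \barra t'$ I would manufacture a successful computation of $p \barra t$ by prepending a suitable prefix. If $\alpha = \tau$, the move $t \trans{\tau} t'$ yields $p \barra t \shortrightarrow p \barra t'$ and then $p \Trans{\tau} p'$ yields $p \barra t' \shortrightarrow \cdots \shortrightarrow p' \barra t'$. If $\alpha = a \in Act$, writing $p \Trans{\tau} p_1 \trans{a} p_2 \Trans{\tau} p'$, the initial internal moves of the process, the synchronisation $p_1 \barra t \shortrightarrow p_2 \barra t'$ (using $t \trans{a} t'$), and the remaining internal moves yield a prefix $p \barra t \shortrightarrow \cdots \shortrightarrow p' \barra t'$. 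Concatenating this prefix with the successful computation of $p' \barra t'$ gives a computation prefix of $p \barra t$ along which some test component performs $\omega$; extending it to a maximal path --- always possible, and it preserves successfulness --- gives a successful computation of $p \barra t$, hence $p \maysatisfy t$ and $p \in \rhomay(X_t)$.

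The only delicate point is this last manipulation of computation prefixes and of the weak transitions $\Trans{\alpha}$, i.e.\ the closure of successful computations under prepending $\tau$-steps and a single synchronisation. I expect no conceptual obstacle here: the bookkeeping is routine and entirely parallel to the reasoning used in the \emph{must} case (Proposition~\ref{thm:minsubsetmust}).
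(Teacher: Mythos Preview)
Your proposal is correct and follows essentially the same approach as the paper: reduce to showing $\sem{\phi_t}\rhomay \subseteq \rhomay(X_t)$ via the minimal fixpoint property, then do a case analysis on the three clauses defining $\phi_t$, constructing in the main case a successful computation of $p \barra t$ from one of $p' \barra t'$ by prepending the appropriate prefix. Your treatment is in fact slightly more careful than the paper's, which does not explicitly separate the $\alpha = \tau$ and $\alpha = a$ sub-cases nor mention the extension to a maximal path.
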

\begin{proof}
Suppose the LTS generated by a test $t$\ is finite state. We just need to show $\sem{\phi}\rhomay \subseteq \rhomay(X_t)$. The result will then follow from an application of the Minimal Fixpoint Property \ref{thm:fixpointprop}(\ref{thm:minfixprop}). We perform a case analysis on the structure of the test $t$.
\begin{itemize}
\item $t \trans{\omega}$. In this case we have $\rhomay(X_t) = S$, so the statement trivially holds.
\item $t \nottrans{\;}$. W have $\phi_t = \fff$, hence $\sem{\phi_t} \rhomay = \emptyset$. Again, the statement is trivial.
\item $t \nottrans{\omega}, t \trans{\;}$. Suppose $p \in \sem{\phi}\rhomay$. We have that there exists at least one action $\alpha$\ such that $t \trans{\alpha} t'$ there exists a process $p'$\ such that $p \Trans{\alpha} p'$\ and $p' \maysatisfy t'$. Hence we have the computation fragment
\[
p \barra t \shortrightarrow \cdots \shortrightarrow p'' \barra t \shortrightarrow p' \barra t',
\]
so that $p \maysatisfy t$.
\end{itemize}
\end{proof} 

\begin{prop}
\label{thm:maytest}
For each test $t, \rhomay(X_t) \subseteq \rhomin(X_t)$.
\end{prop}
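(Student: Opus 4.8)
The plan is to mirror the proof of Proposition~\ref{thm:minsubsetmust}, taking advantage of the fact that \emph{may}-satisfaction is witnessed by a \emph{single} computation, which by definition reaches an $\omega$-labelled state after finitely many $\tau$-steps. So, unlike the \emph{must} direction (Proposition~\ref{thm:minsubmust}), I expect the argument to go through by a plain induction on the length of a witnessing successful computation of $p \barra t$, with no appeal to K\"onig's Lemma and, notably, no finite-branching hypothesis on the process LTS; the only finiteness used is that the test LTS is finite-state and finite-branching, which is exactly what makes the (finite) disjunction $\phi_t$ well defined over the (finite) family $\{X_t\}$. Throughout I would use the identity $\rhomin(X_t) = \sem{\phi_t}\rhomin$, which is immediate from the Fixpoint Property, Theorem~\ref{thm:fixpointprop}(\ref{thm:fixprop}), applied to $\rhomin(X_t) = \sem{\phimay t} = \sem{\min_t(X_T,\phi_T)}$; hence it suffices to prove that $p \maysatisfy t$ implies $p \in \sem{\phi_t}\rhomin$.

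First I would clear away the degenerate shapes of $t$. If $t \trans{\omega}$ then $\phi_t = \ttt$ and $\sem{\phi_t}\rhomin = S \ni p$. If $t \nottrans{\omega}$ and $t \nottrans{\;}$ then $t$ is inert, no computation of $p \barra t$ can be successful, and the implication is vacuous (consistently, $\phi_t = \fff$). So assume $t \nottrans{\omega}$ and $t \trans{\;}$, so that $\phi_t = \bigvee_{\alpha,t':\,t\trans{\alpha}t'}\dmnd{\alpha}X_{t'}$ and any witnessing successful computation of $p \barra t$ has length $n \geq 1$. I then induct on $n$. Consider the first step $p \barra t \trans{\tau} p_1 \barra t_1$ of such a computation; its suffix is a successful computation of $p_1 \barra t_1$ of length $n-1$, so by induction $p_1 \in \rhomin(X_{t_1}) = \sem{\phi_{t_1}}\rhomin$. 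According to which SOS rule produced the step: if $p \trans{a} p_1$ with $t \trans{a} t_1$ (synchronisation), then $p \Trans{a} p_1$ and $\dmnd{a}X_{t_1}$ is a disjunct of $\phi_t$, so $p \in \sem{\dmnd{a}X_{t_1}}\rhomin \subseteq \sem{\phi_t}\rhomin$; if $t \trans{\tau} t_1$ with $p_1 = p$, then $\dmnd{\tau}X_{t_1}$ is a disjunct of $\phi_t$ and, since $p \Trans{\tau} p$ and $p \in \rhomin(X_{t_1})$, again $p \in \sem{\phi_t}\rhomin$; if $p \trans{\tau} p_1$ with $t_1 = t$, then $p_1 \in \sem{\phi_t}\rhomin$ already, and I conclude $p \in \sem{\phi_t}\rhomin$ from the fact that $\sem{\phi_t}\rhomin$ is closed under $\trans{\tau}$-predecessors. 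In all cases $p \in \sem{\phi_t}\rhomin = \rhomin(X_t)$, closing the induction.

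The only step that needs a careful word is the last sub-case: $\sem{\phi_t}\rhomin$ absorbs internal moves of the process because $\phi_t$ is a disjunction of diamond-guarded formulae $\dmnd{\alpha}X_{t'}$ whose interpretation is built from the \emph{weak} transition relation, so that $p \trans{\tau} p_1$ and $p_1 \Trans{\alpha} q$ give $p \Trans{\alpha} q$; this is the open-formula, $\rhomin$-relative counterpart of Proposition~\ref{prop:mayhmltauclosed}. Everything else is a routine unwinding of the definitions of $\maysatisfy$, of $\phi_t$, and of $\rhomin$, so I do not anticipate a genuine obstacle here --- the whole point is that for \emph{may} testing one computation already carries all the information, whereas the \emph{must} case had to quantify over all of them. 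Finally, as on the \emph{must} side, combining this proposition with Proposition~\ref{prop:minsubmay} gives $\rhomin(X_t) = \rhomay(X_t)$, i.e.\ $\phimay t$ \emph{may}-represents $t$; eliminating the simultaneous fixpoint by Theorem~\ref{thm:becik} then produces a genuine \rechml formula --- in fact a \mayhml formula, since every operator used in $\phi_t$ lies in \mayhml --- witnessing that every finitary test is \emph{may}-representable.
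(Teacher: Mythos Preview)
Your proposal is correct and follows essentially the same approach as the paper: induction on the (minimal) length of a successful computation prefix of $p\barra t$, using the Fixpoint Property, Theorem~\ref{thm:fixpointprop}(\ref{thm:fixprop}), to reduce to $\sem{\phi_t}\rhomin$, then a case split on the first step of the computation, with the process-$\tau$ case handled by closure of diamond-formulae under weak $\tau$-predecessors (the paper's Proposition~\ref{prop:mayhmltauclosed}). Your observation that no finite-branching assumption on processes is needed here is also in agreement with the paper.
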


\begin{proof}
Again, assume the LTS generated by a test $t$\ is finite state.
Let $p$\ be a process such that $p \maysatisfy t$. We proceed by induction on the minimal length of a successful computation prefix $|p, t|$\ to show $p \in \rhomin(X_t)$. We will make use of the Fixpoint Property \ref{thm:fixpointprop}(\ref{thm:fixprop}), thus showing that $p \in \sem{\phi_t}\rhomin$.
\begin{itemize}
\item $|p, t| = 0$. In this case we have $t \trans{\omega}$. By definition, $\phi_t = \ttt$, so that we have $\sem{\phi_t}\rhomin = S$. This case is trivial.
\item $|p, t| > 0$. Let
\[
p \barra t \shortrightarrow p' \barra t' \shortrightarrow \cdots \shortrightarrow p_n \barra t_n
\]

be a successful computation prefix of length $|p, t|$. We distinguish several cases according to the structure of the computation. Since $p' \maysatisfy t'$ and $|p',t'|<|p,t|$, in each case we have $p' \in \sem{\phimay {t'}}\rhomin$\ by inductive hypothesis.
\begin{itemize}
\item $p = p'$, $t \trans{\tau} t'$; we have $p \in \sem{X_t}\rhomin = \sem{\dmnd{\tau}\phi_{t'}}\rhomin$. Then $p \in \sem{\bigvee_{\alpha, t':t\trans{\alpha}t'}\dmnd{\tau}X_{t'}}\rhomin$.
\item $p \trans{\tau} p'$, $t = t'$; we have $p' \in \sem{X_t}\rhomin$, and therefore $p \in \sem{X_t}\rhomin$ by Proposition \ref{prop:mayhmltauclosed}.
\item $p \trans{a} p'$, $t \trans{a} t'$; in this case $p \in \sem{\dmnd{a} X_{t'}}\rhomin$, and hence $p \in \sem{\phi_t}\rhomin$.
\end{itemize}
\end{itemize}
\end{proof}

\begin{corollary}\label{cor:maylargest}
Suppose $\phi$ is a formula in \rechml which is \emph{may}-testable. Then 
there exists some $\psi$ in $\mayhml$ which is logically equivalent to it.
\end{corollary}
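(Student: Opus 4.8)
The plan is to dualise the proof of Corollary~\ref{cor:mustlargest}. Suppose $\phi \in \rechml$ is \emph{may}-testable. By definition there is a test $t$ which $\phi$ \emph{may}-represents, so that for every process $p$ we have $p \maysatisfy t$ if and only if $p \models \phi$; we take $t$ from our default LTS of tests of grammar~\eqref{eq:tests}, which by Proposition~\ref{prop:Tbf} is finite state and finite branching, so that $t$ is in particular a finitary test.

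Next I would apply Theorem~\ref{thm:maytest} to $t$ to obtain a \rechml formula $\psi = \phimay{t}$ which \emph{may}-represents $t$. As in the proof of Theorem~\ref{thm:musttest}, the point is that the representing formula is first built as a $\rechml^{+}$ formula $\phi_t$ whose components are the evident dual of the clauses of Definition~\ref{def:tests} — using only the operators $\ttt$, $\fff$, variables, the diamond modalities $\dmnd{\alpha}$ and disjunction $\vee$, together with simultaneous least fixpoints — and is then converted into an equivalent \rechml formula by Bek\'ic's Theorem~\ref{thm:becik}. Since every operator appearing here belongs to the \mayhml grammar fragment, $\psi \in \mayhml$. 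Combining the two steps, for every process $p$ we get
\[
p \models \phi \ \Leftrightarrow\ p \maysatisfy t \ \Leftrightarrow\ p \models \psi ,
\]
so $\sem{\phi} = \sem{\psi}$ with $\psi \in \mayhml$, which is exactly the claim.

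The step I expect to need the most care is the assertion that $\psi$ genuinely lies in \mayhml: this amounts to checking that the \mayhml fragment is closed under the elimination of simultaneous least fixpoints carried out by Theorem~\ref{thm:becik} — precisely the closure property that is implicitly invoked for \musthml at the end of the proof of Theorem~\ref{thm:musttest}. Everything else is the routine chain of biconditionals above, identical in form to the \emph{must} case.
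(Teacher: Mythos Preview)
Your argument is correct and follows exactly the route the paper intends: the paper's own proof is simply ``Similar to that of Corollary~\ref{cor:mustlargest}'', i.e.\ take a test $t$ witnessing \emph{may}-testability of $\phi$, apply Theorem~\ref{thm:maytest} to obtain $\psi=\phimay{t}\in\mayhml$, and chain the two biconditionals. Your extra remark about Bek\'ic's elimination preserving membership in \mayhml is a legitimate point, but it belongs to (and is already absorbed by) the proof of Theorem~\ref{thm:maytest} rather than this corollary.
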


\begin{proof}
Suppose $\phi$\ is \emph{may}-testable. By theorem \ref{thm:mayhml} there exists a finite test $t = \Tmay \phi$\ which \emph{may}-represents $\phi$. Further, by theorem \ref{thm:maytest}\ there exists a formula $\psi = \phimay {t} \in \mayhml$\ which \emph{may}-tests for $t$. Therefore
\[
p \in \sem{\phi} \Leftrightarrow p \maysatisfy \Tmay \phi \Leftrightarrow p \in \sem{\psi}
\]
\end{proof}
}

\section{Conclusions}\label{sec:end}

We have investigated the relationship between properties of processes as expressed 
in a recursive version of Hennessy-Milner logic, \rechml, and \emph{extensional} tests as 
defined in \cite{dhn}. In particular we have shown that both \emph{may} and \emph{must}
tests can be captured in the logic, and we have isolated logically complete sub-languages
of \rechml which can be captured by \emph{may} testing and \emph{must} testing. One 
consequence of these results is that the \emph{may} and \emph{must} testing preorders
of \cite{dhn} are determined by the logical properties in these sub-languages \mayhml and
\musthml respectively; however this is already a well-known result, \cite{Hennessy85}.

However these results come at the price of modifying the satisfaction relation; 
to satisfy a box formula a process is required to converge. One consequence of this 
change is that the language \rechml no longer characterises the standard notion
of \emph{weak bisimulation equivalence}, as this equivalence is insensitive to
divergence. But there are variations on \emph{bisimulation equivalence} which do
take divergence into account; see for example \cite{walker,cbl}.

The research reported here was initiated after reading \cite{aceto};
there a notion of testing was used which is different from both
\emph{may} and \emph{must} testing. They define $s$ \emph{passes} the
test $t$ whenever no computation from $s \;|\; t$ can perform the
success action $\omega$, and give a sub-language which characterises this form of testing. 
It is easy to check that $s$ \emph{passes} $t$ if and only if, in our terminology, 
$s$ \emph{may} $t$ is not true. So their notion of testing is dual to \emph{may} testing,
and therefore, not surprisingly, our results on \emph{may} testing are simply dual versions
of theirs. However we believe our results on \emph{must} testing, specifically Theorem~\ref{thm:musthml}
and Theorem~\ref{thm:musttest},  are new.  

We have concentrated on properties associated essentialy with the 
behavioural theory based on extensional testing. 
However there are a large number of other behavioural theories; 
see \cite{rob} for an extensive survey, including their characterisation in terms of
\emph{observational} properties.

\nocite{*}
\bibliographystyle{alpha}
\bibliography{tests}
\leaveout{
\appendix
\section{Proofs of the propositions}
In this appendix we provide the proofs of all the propositions and theorems stated, but not proved, in the paper. Further, other well known results which are involved in the details of many proofs are stated, together with their proofs.

Below proofs of results contained in the paper are labelled with a reference to the proposition they refer to.
\begin{description}
 \item \begin{prop}\qquad
\label{prop:syntlemma}
\begin{enumerate}[(i)]
\item Let $\phi, \psi$\ be formulae such that $Y$\ does not occur free in $\psi$, and let $\rho$\ be an environment and $P \subseteq 2^S$. Then
\[
\sem{\phi}\rho[X \mapsto \sem{\psi} \rho][Y \mapsto P] = \sem{\phi}\rho[Y \mapsto P][X \mapsto \sem{\psi}\rho[Y \mapsto P]\,]
\]
\label{prop:substlemma}
\item Let $\phi, \psi \in \rechml$, and $\rho$\ be an environment: then 
\[
\sem{\phi\{\psi/X\}}\rho = \sem{\phi}\rho[X \mapsto \sem{\psi}\rho].
\]
\label{prop:envlemma}
\end{enumerate}
\end{prop}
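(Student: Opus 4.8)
The plan is to prove the two parts in order: part~(ii) is a structural induction on $\phi$ whose fixpoint cases are exactly where part~(i) is needed, and part~(i) itself drops out of a preliminary \emph{coincidence} property. I read part~(i) under the implicit hypothesis that $X$ and $Y$ are distinct (this is the only way it gets invoked, namely with the bound variable of a fixpoint renamed apart from $X$). So first I would establish the coincidence lemma: if two environments $\rho,\rho'$ agree on every variable free in $\phi$, then $\sem{\phi}\rho = \sem{\phi}\rho'$. This is a routine structural induction on $\phi$: the cases $\ttt,\fff,\Acc A$ are immediate since their interpretations do not depend on the environment, the variable case is trivial, and the cases $\phi_1\vee\phi_2$, $\phi_1\wedge\phi_2$, $\dmnd{\alpha}\phi'$, $[\alpha]\phi'$ follow from the induction hypothesis because the semantic operators $\cup,\cap,\dmnd{\cdot\alpha\cdot},\bbox{\cdot\alpha\cdot}$ depend only on the interpretations of the immediate subformulae. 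For $\lfp{Y}{\phi'}$ (and dually $\gfp{Y}{\phi'}$) one uses that every variable free in $\phi'$ is free in $\lfp{Y}{\phi'}$ or equals $Y$, so $\rho[Y\mapsto P]$ and $\rho'[Y\mapsto P]$ agree on the free variables of $\phi'$ for every $P\subseteq S$; the induction hypothesis then makes the families $\{P \mid \sem{\phi'}\rho[Y\mapsto P]\subseteq P\}$ and $\{P\mid\sem{\phi'}\rho'[Y\mapsto P]\subseteq P\}$ equal, hence so are their intersections.

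Part~(i) then follows with no further induction on $\phi$. Since $Y$ is not free in $\psi$, the coincidence lemma gives $\sem{\psi}\rho = \sem{\psi}\rho[Y\mapsto P]$. Assuming $X\neq Y$, the two environments $\rho[X\mapsto\sem{\psi}\rho][Y\mapsto P]$ and $\rho[Y\mapsto P][X\mapsto\sem{\psi}\rho[Y\mapsto P]]$ therefore assign the same set to every variable: both send $Y$ to $P$, both send $X$ to $\sem{\psi}\rho$, and both agree with $\rho$ elsewhere. Being equal as functions, they yield the same value when $\sem{\phi}$ is applied.

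For part~(ii), $\sem{\phi\{\psi/X\}}\rho = \sem{\phi}\rho[X\mapsto\sem{\psi}\rho]$, I would induct on the structure of $\phi$. The cases $\ttt,\fff,\Acc A$ hold because substitution does not change them and their interpretation ignores the environment; if $\phi\equiv X$ then $\phi\{\psi/X\}\equiv\psi$ and both sides are $\sem{\psi}\rho$, while if $\phi\equiv Z$ with $Z\neq X$ both sides are $\rho(Z)$. For $\phi_1\vee\phi_2$, $\phi_1\wedge\phi_2$, $\dmnd{\alpha}\phi'$, $[\alpha]\phi'$ one pushes the substitution through the outermost operator (which is how capture-avoiding substitution is defined on these), applies the induction hypothesis to the strictly smaller subformulae, and reassembles with the corresponding semantic operator. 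The substantive case is $\phi\equiv\lfp{Y}{\phi'}$ (the case $\gfp{Y}{\phi'}$ being symmetric, with $\bigcap\{P\mid\sem{\cdot}\subseteq P\}$ replaced by $\bigcup\{P\mid P\subseteq\sem{\cdot}\}$). By the capture-avoiding convention choose $Y$ so that $Y\neq X$ and $Y$ is not free in $\psi$; then $(\lfp{Y}{\phi'})\{\psi/X\}\equiv\lfp{Y}{(\phi'\{\psi/X\})}$, and unfolding the interpretation gives
\[
\sem{\lfp{Y}{(\phi'\{\psi/X\})}}\rho = \bigcap\{\,P \mid \sem{\phi'\{\psi/X\}}\,\rho[Y\mapsto P]\subseteq P\,\}.
\]
For each $P\subseteq S$, the induction hypothesis at the environment $\rho[Y\mapsto P]$ yields $\sem{\phi'\{\psi/X\}}\rho[Y\mapsto P] = \sem{\phi'}\rho[Y\mapsto P][X\mapsto\sem{\psi}\rho[Y\mapsto P]]$, and part~(i) (applicable because $Y$ is not free in $\psi$) rewrites the right-hand side as $\sem{\phi'}\rho[X\mapsto\sem{\psi}\rho][Y\mapsto P]$. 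Hence the displayed set equals $\bigcap\{P\mid\sem{\phi'}\rho[X\mapsto\sem{\psi}\rho][Y\mapsto P]\subseteq P\}$, which is by definition $\sem{\lfp{Y}{\phi'}}\rho[X\mapsto\sem{\psi}\rho]$. (Should one prefer not to rename and thereby encounter $Y\equiv X$, then $X$ is not free in $\lfp{Y}{\phi'}$, so $\phi\{\psi/X\}\equiv\phi$ and the two sides agree by the coincidence lemma.)

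The one genuinely delicate step — delicate in that it demands care rather than a new idea — is this fixpoint case of part~(ii): one must be scrupulous about the interaction between capture-avoiding substitution and environment updates, which is exactly the bookkeeping that part~(i) packages. A secondary point worth recording is that part~(i) really does need $X\neq Y$ (it is false for $X\equiv Y$ in general), consistent with the only situation in which it is used. Everything else is routine.
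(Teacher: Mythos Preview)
Your argument is correct and, for part~(ii), essentially identical to the paper's: a structural induction on $\phi$ whose fixpoint case invokes part~(i) to swap the order of the environment updates, after renaming the bound variable apart from $X$ and the free variables of $\psi$.

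Where you diverge is in part~(i). The paper proves it by a second structural induction on $\phi$, splitting the variable case into the three subcases $Z\equiv X$, $Z\equiv Y$, $Z\not\equiv X,Y$. You instead isolate a coincidence lemma (environments agreeing on the free variables of a formula yield equal interpretations), use it once to collapse $\sem{\psi}\rho[Y\mapsto P]$ to $\sem{\psi}\rho$, and then observe that the two updated environments are \emph{equal as functions}, so no induction on $\phi$ is needed at all. This is a cleaner factoring: the coincidence lemma is reusable, and part~(i) becomes a one-line consequence rather than a parallel induction. The paper's route avoids stating the coincidence lemma separately but pays for it by repeating essentially the same induction twice. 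Your explicit remark that $X\neq Y$ is required (and that this matches the only use of part~(i)) is also a point the paper leaves implicit.
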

\begin{proof}
Both proofs can be performed by induction on the structure of the formula $\phi$. For (\ref{prop:substlemma}) three different sub cases should be handled when dealing with the case $\phi \equiv Z$ (namely $Z \equiv X;\; Z\equiv Y$\ and $Z \not\equiv X, Z \not\equiv Y$).\\ This proposition is needed to prove (\ref{prop:envlemma}) when considering a fixpoint formula: specifically, let $\phi \equiv \lfp Y {\phi_1}$: without loss of generality we can assume $Y \not\equiv X$, so that $\lfp Y {\phi_1}\{\psi/X\} \equiv$\\
$\equiv \lfp Y {\phi_1\{\psi/X\}}$. By inductive hypothesis we have
\[
\sem{\phi_1\{\psi/X\}}\rho = \sem{\phi_1}\rho[X \mapsto \sem{\psi}\rho]
\]
and, therefore,
\begin{eqnarray*}
\sem{\lfp Y {\phi_1\{\psi/X\}}}\rho &=& \bigcap \{P : \sem{\phi_1\{\psi/X\}}\rho[Y \mapsto P] \subseteq P\}\\
&\iheq& \bigcap \{P : \sem{\phi_1}\rho[Y\mapsto P][X \mapsto \sem{\psi}\rho[Y\mapsto P]] \subseteq P\}\\
&\stackrel{(\scriptstyle{\ref{prop:substlemma}})}{=}& \bigcap \{P : \sem{\phi_1}\rho[X\mapsto \sem{\psi}\rho][Y \mapsto P] \subseteq P\}\\
&=& \sem{\lfp Y {\phi_1}}\rho[X \mapsto \sem{\psi}\rho]
\end{eqnarray*}
\end{proof}
\item[Theorem \ref{thm:becik}.]\begin{aproof}
\begin{enumerate}[(i)]
\item By straightforward computations: we will show only the case for $\slfp{1}{\overline{X}}{\overline{\phi}}$, as the other one is obtained by symmetry:
\begin{eqnarray*}
&\sem{\lfp {X_1} {\phi_1\{\lfp {X_2}{\phi_2}/X_2\}}}\rho &=\\
&\bigcap \{P:\ \sem{\phi_1\{\lfp {X_2}{\phi_2}/X_2\}}\rho[X \mapsto P] \subseteq P\}&\stackrel{\scriptstyle{\eqref{prop:syntlemma}}}{=}\\
&\bigcap\{P: \sem{\phi_1}\rho[X_1 \mapsto P][X_2 \mapsto \sem{\lfp {X_2}{\phi_2}}\rho[X_1 \mapsto P]] \subseteq P\}&=\\
&\bigcap \{P: \sem{\phi_1}\rho[X_1 \mapsto P][X_2 \mapsto \bigcap Q: \sem{\phi_2}\rho[X_1 \mapsto P][X_2 \mapsto Q] \subseteq Q\}] \subseteq P\}&=\\
&\pi_1 (\bigcap \{ \langle P, Q \rangle: \sem{\phi_2}\rho[X_1 \mapsto P][X_2 \mapsto Q] \subseteq Q, \sem{\phi_1}\rho[X_1 \mapsto P][X_2 \mapsto Q] \subseteq P\})&
\end{eqnarray*}
\item Let $n \geq 2$, and let $\slfp i {\overline{X}} {\overline{\phi}}$\ be a 
simultaneous fixpoint formula with $|\overline{X}| = |\overline{\phi}| = n$.\\
The proof of (\ref{prop:becik2}) can be used to 
obtain a logically equivalent formula of the form
$\slfp i {\overline{Y}} {\overline{\phi}}$, where\\ 
$|\overline{Y}| = |\overline{\psi}| = n-1$.\\
This procedure can thus be iterated until obtaining 
an equivalent formula $\slfp 1 {\langle Z\rangle} {\langle \varphi \rangle}$, 
which is equivalent to $\lfp Z \varphi \in \rechml$.
\end{enumerate}
\end{aproof}
\item[Theorem \ref{thm:fixpointprop}.]\begin{aproof}\qquad
\begin{enumerate}[(i)]
\item This follows from the definition of $\sem{\lfp {\overline{X}} {\overline{\phi}}}$. Let $\overline{P}$ be a vector of sets from $2^S$\ such that\\
$\sem{\phi_i} \rho[\overline{X} \mapsto \overline{P}] \subseteq P_i$. Then
\begin{eqnarray*}
\sem{\lfp {\overline{X}} {\overline{\phi}}}\rho &=& \bigcap \{ \overline{P} \;|\; \sem{\phi_i}\rho[\overline{X}\mapsto\overline{P}] \subseteq P_i,\; 1\leq i \leq n\}\\
&=& \overline{P} \cap \bigcap \{ \overline{P} \;|\; \sem{\phi_i}\rho[\overline{X}\mapsto\overline{P}] \subseteq P_i,\; 1\leq i \leq n\}
\end{eqnarray*}
we have therefore that
\[
\sem{\slfp i {\overline{X}} {\overline{P}}} = P_i \cap  \pi_i (\bigcap \{ \overline{P} \;|\; \sem{\phi_i}\rho[\overline{X}\mapsto\overline{P}] \subseteq P_i,\; 1\leq i \leq n\}) \subseteq P_i
\]

\item Let $1 \leq i \leq n$. By the definition of $\sem{\slfp i {\overline{X}} {\overline{\phi}}}$\ it holds 
\begin{eqnarray*}
\sem{\phi_i}\rhomin&=& \sem{\phi_i}\rho[\overline{X} \mapsto \sem{\lfp {\overline{X}} {\overline{\phi}}}\rho ]\\
&\subseteq& \sem{\slfp i {\overline{X}} {\overline{\phi}}}\rho\\
&=&\rhomin(X_i)
\end{eqnarray*}

The inclusion shows that $\sem{\phi_i}\rhomin \subseteq \sem{\slfp i {\overline{X}} {\overline{\phi}}}\rho$. Moreover, since $\sem{\phi_i}\rhomin \subseteq \rhomin$, the converse inclusion follows from (\ref{thm:minfixprop})

\end{enumerate}
\end{aproof}

\item \begin{corollary}
\label{cor:minsubst}
Let $\phi \equiv \lfp X \psi$ be a formula in \rechml. Then $\phi$\ is logically equivalent to $\psi\{\lfp X \psi/X\}$.
\end{corollary}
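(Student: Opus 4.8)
The plan is to derive Corollary~\ref{cor:minsubst} as an immediate consequence of two facts already available: the fixpoint property of the interpretation, Theorem~\ref{thm:fixpointprop}(\ref{thm:fixprop}), and the semantic substitution lemma, Proposition~\ref{prop:syntlemma}(\ref{prop:envlemma}). Concretely, for an arbitrary environment $\rho$ I would establish the chain of equalities
\[
\sem{\lfp X \psi}\rho \;=\; \sem{\psi}\,\rho[X \mapsto \sem{\lfp X \psi}\rho]
\;=\; \sem{\psi\{\lfp X \psi/X\}}\rho ,
\]
the first expressing that $\sem{\lfp X \psi}\rho$ is a genuine \emph{fixpoint} (not merely the least pre-fixpoint) of the monotone map $P \mapsto \sem{\psi}\rho[X \mapsto P]$, and the second being an instance of the substitution lemma. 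Since $\phi \equiv \lfp X \psi$ and $\psi\{\lfp X \psi/X\}$ are both closed formulae of \rechml, having these hold for every $\rho$ is precisely the asserted logical equivalence.

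For the first equality I would specialise Theorem~\ref{thm:fixpointprop}(\ref{thm:fixprop}) to the single-variable case, taking the vectors $\overline{X}$ and $\overline{\phi}$ to have length $n = 1$ with $X_1 \equiv X$ and $\phi_1 \equiv \psi$. Here one first observes that, comparing the interpretation clause for $\lfp X \psi$ with that for simultaneous fixpoints in Section~\ref{sec:recursivehml}, the formula $\slfp{1}{\langle X\rangle}{\langle\psi\rangle}$ denotes the same set as the ordinary $\lfp X \psi$ of the base grammar. Letting $\rhomin$ be the environment $\rho[X \mapsto \sem{\lfp X \psi}\rho]$, Theorem~\ref{thm:fixpointprop}(\ref{thm:fixprop}) then yields $\sem{\lfp X \psi}\rho = \sem{\psi}\rhomin$, which is exactly the first displayed equality. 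This step inherits the monotonicity of every \rechml operator over the complete lattice $2^S$ --- including the convergence-sensitive box $\bbox{\cdot\alpha\cdot}$, whose side condition $s\Downarrow$ does not depend on the argument set --- so nothing in our non-standard interpretation interferes with it.

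For the second equality I would invoke Proposition~\ref{prop:syntlemma}(\ref{prop:envlemma}) with $\psi$ in place of $\phi$ and $\lfp X \psi$ in place of the substituted formula, obtaining $\sem{\psi\{\lfp X \psi/X\}}\rho = \sem{\psi}\,\rho[X \mapsto \sem{\lfp X \psi}\rho]$; composing with the first equality closes the argument. I do not anticipate a real obstacle here: the statement essentially repackages standard lattice-theoretic facts. The only points calling for a little care are the bookkeeping around binding-sensitive substitution --- $\psi\{\lfp X \psi/X\}$ captures no variable, since $\lfp X \psi$ is closed --- and checking that the single-variable instance of the simultaneous-fixpoint machinery of Section~\ref{sec:recursivehml} really does collapse onto the $\lfp X \psi$ of the base grammar; both are routine.
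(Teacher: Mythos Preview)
Your proposal is correct and follows essentially the same route as the paper: apply Theorem~\ref{thm:fixpointprop}(\ref{thm:fixprop}) (specialised to a single variable) to obtain $\sem{\lfp X \psi}\rho = \sem{\psi}\,\rho[X \mapsto \sem{\lfp X \psi}\rho]$, then invoke the substitution lemma, Proposition~\ref{prop:syntlemma}(\ref{prop:envlemma}), to rewrite the right-hand side as $\sem{\psi\{\lfp X \psi/X\}}\rho$. Your additional remarks on the single-variable collapse of the simultaneous fixpoint and on binding hygiene are sound but go beyond what the paper spells out.
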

\begin{proof}
Given a closed formula $\phi \equiv \lfp X \psi$\ and an arbitrary environment $\rho$, we have\\
$\sem{\lfp X \psi} = \sem {\psi}\rho[X \mapsto \sem{\lfp X \psi}]$\ 
by an application of Theorem \ref{thm:fixpointprop}(\ref{thm:fixprop}). 
Further,\\ $\sem{\psi}\rho[X \mapsto \sem{\lfp X \psi}] = \sem{\psi\{\lfp X \phi\}}$\ 
by Proposition \ref{prop:syntlemma}(\ref{prop:envlemma}).
\end{proof}

\item \begin{thm}[\cite{becik}]
\label{thm:tarski}
Let $\phi \equiv \lfp X \psi$\ a formula in \rechml. Then $\sem{\phi}$\ is the least solution of the equation
\[
X = \psi
\]
\end{thm}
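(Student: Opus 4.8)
The plan is to obtain this statement directly from the fixpoint properties already recorded in Theorem~\ref{thm:fixpointprop}, specialised to the degenerate case of vectors of length one ($n=1$), where the simultaneous least fixpoint reduces to the ordinary one $\lfp X \psi$. By definition $\sem{\lfp X \psi}\rho$ is the intersection $\bigcap\{P \mid \sem{\psi}\rho[X\mapsto P]\subseteq P\}$ of all pre-fixpoints of the operator $P \mapsto \sem{\psi}\rho[X\mapsto P]$ on the complete lattice $2^S$; what must be shown is that this intersection is itself a fixpoint of that operator, and that it is the least one — i.e. the least solution of $X = \psi$ relative to the fixed interpretation $\rho$ of the remaining free variables.

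First I would verify that $\sem{\lfp X \psi}\rho$ is a solution. Instantiating Theorem~\ref{thm:fixpointprop}(\ref{thm:fixprop}) with the length-one vectors $\overline{X} = \langle X\rangle$, $\overline{\phi} = \langle\psi\rangle$ and taking $\rhomin \triangleq \rho[X\mapsto\sem{\lfp X \psi}\rho]$ yields $\sem{\lfp X \psi}\rho = \sem{\psi}\rhomin = \sem{\psi}\rho[X\mapsto\sem{\lfp X \psi}\rho]$, which is exactly the assertion that $\sem{\lfp X \psi}\rho$ satisfies the equation $X = \psi$ (this is also essentially the content of Corollary~\ref{cor:minsubst} together with the substitution lemma). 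Next I would establish minimality using Theorem~\ref{thm:fixpointprop}(\ref{thm:minfixprop}): if $Q\subseteq S$ is any solution, then $\sem{\psi}\rho[X\mapsto Q] = Q$, so in particular $\sem{\psi}\rho[X\mapsto Q]\subseteq Q$, whence $\sem{\lfp X\psi}\rho\subseteq Q$. Thus $\sem{\lfp X\psi}\rho$ is a solution contained in every solution, i.e. the least one.

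There is no real obstacle here beyond keeping the bookkeeping straight about what ``solution of $X = \psi$'' means (a subset of $S$ assigned to $X$, with the remaining free variables interpreted by a fixed $\rho$, and ordered by inclusion). If one wanted a self-contained argument not appealing to Theorem~\ref{thm:fixpointprop}, the standard Knaster--Tarski route works: one first shows by structural induction on $\psi$ that $P\mapsto\sem{\psi}\rho[X\mapsto P]$ is monotone — which holds because \rechml contains no negation, so every connective is interpreted by a monotone operation on $2^S$ — and then the routine two-line computation shows that the intersection of all pre-fixpoints is a fixpoint of the operator and lies below every fixpoint. That monotonicity lemma is the only step requiring genuine work, and it is in any case what underlies Theorem~\ref{thm:fixpointprop} itself.
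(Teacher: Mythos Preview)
Your proposal is correct and essentially mirrors the paper's own proof: the paper invokes Corollary~\ref{cor:minsubst} to show that $\sem{\lfp X\psi}$ is a solution, and then observes directly from the definition $\sem{\lfp X\psi}=\bigcap\{P\mid\sem{\psi}[X\mapsto P]\subseteq P\}$ that any solution $P$ (being in particular a pre-fixpoint) contains it. Your route via the $n=1$ instances of Theorem~\ref{thm:fixpointprop}(\ref{thm:fixprop}) and~(\ref{thm:minfixprop}) is exactly this argument phrased through the simultaneous-fixpoint machinery rather than the raw definition, so there is no substantive difference.
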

\begin{proof}

Corollary \ref{cor:minsubst}\ ensures that $\sem{\phi}$\ is a solution of the equation $X = \psi$. Moreover, let $P$\ be a solution to such an equation; we have
\[
\sem{\psi}[X \mapsto P] = P,
\]

therefore $P \in \{ P \;|\; \sem{\psi}[X \mapsto P] \subseteq P\}$. Now it is trivial to notice $\sem{\lfp X \psi} \subseteq P$.
\end{proof}

\leaveout{\item \begin{lem}
\label{lem:konig}
Suppose both the LTS of processes and the LTS of tests are branching finite; for each process $p$\ and test $t$\ we have:
\begin{enumerate}[(a)]
\item The LTS of experiments $\mathcal{E} = \langle P \barra T, Act_{\tau} \cup {\omega}, \shortrightarrow \rangle$\ is branching finite.
\label{lem:konig1}
\item If $p \mustsatisfy t$\ then each minimal successful computation prefix of the experiment $e = p\;|\;t$ has the form
\[
e = e_0 \shortrightarrow e_1 \shortrightarrow \cdots e_n
\]
where $e_n \trans{\omega}$\ and $e_i = e_j$\ implies $i = j$ for all $0 \leq i, j \leq n$.
\label{lem:konig2}
\item If $p \mustsatisfy t$\ the maximal length of a successful computation of $|p, t|$\ is defined as
\[
|p, t| = \begin{cases}
0 & \mbox{ if } t \trans{\omega}\\
\max \{ \;|p',t'| \;:\; (p \barra t) \shortrightarrow (p' \barra t')\;\} + 1 & \mbox{ otherwise}
\end{cases}
\]
\label{lem:konig3}
\item If $p \mustsatisfy t$\ then |p, t|\ is finite.
\label{lem:konig4}
\end{enumerate}
\end{lem}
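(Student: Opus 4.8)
The plan is to prove the four items in order, obtaining parts~(c) and~(d) together from König's Lemma. For part~(a) I would argue directly from the three SOS rules for parallel composition: every transition out of an experiment $p \barra t$ comes from an internal move of $p$, an internal move of $t$, or a synchronisation of the two on a common external action~$a$. Since the process LTS and the test LTS are both branching finite, $\Succ{p}$ and $\Succ{t}$ are finite; in particular $t$ enables only finitely many actions, and for each such $a$ the sets $\Succ{a,p}$ and $\Succ{a,t}$ are finite, so $\Succ{p \barra t}$ is a finite union of finite sets and $\mathcal{E}$ is branching finite. For part~(b), take a successful computation prefix $e_0 \shortrightarrow \cdots \shortrightarrow e_n$ of $p \barra t$ of minimal length, so that the test component of $e_n$ enables $\omega$ while that of no earlier $e_i$ does; if some state recurred, $e_i = e_j$ with $i < j \leq n$, then excising the segment between $e_i$ and $e_j$ would yield a strictly shorter computation prefix still ending in $e_n$, hence still successful, contradicting minimality. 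So the $e_i$ are pairwise distinct.

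For parts~(c) and~(d) the key device is the tree $\mathcal{U}$ whose nodes are the computation prefixes $e_0 \shortrightarrow \cdots \shortrightarrow e_k$ of $p \barra t$ no proper prefix of which reaches a state whose test component enables $\omega$, ordered by prefix extension. By part~(a) the tree $\mathcal{U}$ is finitely branching, and were it infinite König's Lemma~\cite{Boolos} would supply an infinite branch, that is, an infinite computation of $p \barra t$ none of whose states ever enables $\omega$ --- a maximal, unsuccessful computation, contradicting $p \mustsatisfy t$. So $\mathcal{U}$ is finite, and I would define $|p,t|$ as its height. The stated recursion is then checked directly: if $t \trans{\omega}$ the root is a leaf and $|p,t| = 0$; otherwise $p \barra t$ has at least one successor (if it were stuck, the empty computation would be maximal, and unsuccessful since $t \nottrans{\omega}$), each successor $p' \barra t'$ again satisfies $p' \mustsatisfy t'$ (prefixing $p \barra t \shortrightarrow$ to a maximal computation of $p' \barra t'$ gives a maximal, hence successful, computation of $p \barra t$, and as $t \nottrans{\omega}$ its success lies within the $p' \barra t'$ part), and the subtree of $\mathcal{U}$ rooted at $p \barra t \shortrightarrow p' \barra t'$ is exactly $\mathcal{U}$ for the experiment $p' \barra t'$; hence the height of $\mathcal{U}$ equals $1 + \max\{\,|p',t'| : p \barra t \shortrightarrow p' \barra t'\,\}$. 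Part~(d) is then immediate, a finite tree having finite height.

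The step I expect to be the main obstacle is the well-definedness issue behind part~(c): the ``successor'' relation on experiments is not well-founded in general (for instance in the presence of $\tau$-loops), so the recursive clause does not by itself define a function on the pairs $p \barra t$ with $p \mustsatisfy t$. The clean route is precisely to introduce the finite tree $\mathcal{U}$ first, take $|p,t|$ to be its height, and only afterwards verify the displayed equation, which in turn relies on the small lemma that $\mustsatisfy$ is inherited by successor experiments whenever $t \nottrans{\omega}$. The remaining verifications are routine.
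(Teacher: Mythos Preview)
Your argument is correct, and for parts~(a), (c) and (d) it is essentially the paper's approach made explicit: the paper also counts successors via the three SOS rules for~(a), and for~(c)/(d) it too invokes K\"onig's Lemma, though less carefully --- it simply asserts that (a) together with (b) makes the recursion for $|p,t|$ an ``inductive definition'' and hence well-founded, whereas you actually construct the finite tree $\mathcal{U}$, take its height as the definition, and then verify the recursion. Your route is the cleaner one here, and your observation that the recursive clause alone does not define a function (because $\shortrightarrow$ need not be well-founded) is exactly the point the paper glosses over.

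The one genuine difference is in part~(b). You argue by excision: a repeated state lets you shorten the prefix while preserving success, contradicting minimality of length. The paper instead uses the hypothesis $p \mustsatisfy t$ directly: from a repetition $e_i = e_j$ with $i < j$ (and necessarily $j < n$, since $e_n$ is the first state enabling $\omega$) it builds the infinite loop $e_0 \shortrightarrow \cdots \shortrightarrow e_i \shortrightarrow \cdots \shortrightarrow e_j = e_i \shortrightarrow \cdots$, which is a maximal computation never enabling $\omega$, contradicting $p \mustsatisfy t$. Your excision argument is more elementary and in fact does not need the $\mustsatisfy$ hypothesis at all, but note that it relies on reading ``minimal'' as minimal \emph{length} among all successful prefixes; the paper's loop argument also covers the reading ``no proper prefix is successful'', where excision produces a different path rather than a proper prefix and so does not by itself yield a contradiction.
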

\begin{proof}\qquad
\begin{enumerate}[(a)]
\item Suppose $e \shortrightarrow e'$; then $e' = p' \barra t'$, where one of the following occurs:
\begin{itemize}
\item $p \trans{\tau} p'$, $t = t'$,
\item $p = p'$, $t \trans{\tau} t'$,
\item $p \trans{a} p'$, $t \trans{a} t'$.
\end{itemize}

The number of successors of $p \barra t$\ is given by
\begin{eqnarray*}
|\Succ{p \barra t}| &=& |\Succ{\tau, p}| + |\Succ{\tau, t}| + \sum_{a \in Act} |\Succ{a, p}| \cdot |\Succ{a, t}|\\
&\leq& |\Succ{\tau, p}| + |\Succ{\tau, t}| + (\sum_{a \in Act} |\Succ{a, p}|\;) \cdot (\sum_{a \in Act} |\Succ{a, t}|\;)\\
&\leq& |\Succ{\tau, p}| + |\Succ{\tau, t}| + |\Succ p|\cdot|\Succ t|
\end{eqnarray*}

As both $p, t$\ are branching finite, each amount in the last sum is finite, and so is $|\Succ{p \barra t}|$.

\item We prove the contrapositive statement: suppose that an experiment $e$\ has a minimal successful computation prefix such that
\[
e = e_0 \shortrightarrow \cdots \shortrightarrow e_n
\]

and there exist two indexes $0 \leq i, j \leq n $\ such that $e_i = e_j$ and $i \neq j$. Without loss of generality, let $i <j$; as we are assuming  that the successful computation above is minimal, it cannot be $j = n$. We have the (infinite) computation
\[
e_0 \shortrightarrow \cdots e_i \shortrightarrow e_{i+1} \shortrightarrow \cdots \shortrightarrow e_j = e_i \shortrightarrow e_{i+1}\shortrightarrow \cdots
\]

which is unsuccessful. Hence $p$\ does not \mustsatisfy\ $t$.
\item If $t \trans{\omega}$\ then the maximal computation prefix of a computation of $p \barra t$\ is given by $p \barra t$, hence $|p, t| = 0$.\\
If $t \nottrans{\omega}$\ and $p \mustsatisfy t$, then each computation prefix of the form
\[
p \barra t \shortrightarrow p' \barra t'
\]

is successful, with maximal length prefix computation $|p', t'|$. This value is less or equal to $\max \{|p', t'|\;:\; p \barra t \shortrightarrow p' \barra t'\}$; moreover, there exists at least one successor of $p \barra t$\ for which equality holds. Hence,
\[
|p, t| = \max \{ |p', t'| \; :\; p \barra t \shortrightarrow p' \barra t'\} + 1
\]

\item Suppose $p \mustsatisfy t$. By (\ref{lem:konig1}) $p \barra t$\ is finite branching, id est the set $\{p' \barra t' \;;\; p \barra t \shortrightarrow p' \barra t'\}$\ is finite.\\
Further, by (\ref{lem:konig2}) we have an inductive definition of $|p, t|$, which is therefore well defined and finite.
\end{enumerate}
\end{proof}
}
\item[Proposition \ref{prop:Tbf}]\begin{aproof}\qquad
\begin{enumerate}[(i)]
\item Each term of grammar \eqref{eq:tests}\ can be represented as
\[
\sum_{i \in I} t_i
\]

where $I$\ is finite and each $t_i$\ is either in the form $0$, $\alpha.t'$\ or $\mu X.t'$. Then for each $i \in I$\ the number of outgoing transitions $n(t_i)$\ of $t_i$\ is at most one: we have therefore
\[
n(t) \leq \sum_{i \in I} n(t_i) \leq |I|
\]

This argument applies to all states of the generated LTS: hence $\mathcal{T}$\ is branching finite.

\item A standard proof of this Proposition can be obtained by converting each test in a \textbf{Nondeterministic Finite state Tree Automata} \cite{regulartrees}.
However, since such topics are beyond the scope of this report, we will provide a simpler proof.

The main role of the syntactic replacement that occurs after a transition of the form $\mu X.t \trans{\tau} t\{\mu X.t/X\}$\ is to add loops in the LTS generated by $\mu X.t$\ from a state to the state $\mu X.t$\ itself. For each test $t$, a loop free test $\underline{t}$\ can be defined as
\begin{eqnarray*}
\underline{0} &=& 0\\
\underline{\alpha.t} &=& \alpha.\underline{t}\\
\underline{\omega.t} &=& \omega.\underline{t}\\
\underline{t_1 + t_2} &=& \underline{t_1} + \underline{t_2}\\
\underline{\mu X.t} &=& \tau.\underline{t\{0/X\}}
\end{eqnarray*}

Notice that, for every test $t$ generated by Grammar \eqref{eq:tests}, the test $t\{0/X\}$ is generated by 
the same Grammar, so that $\underline{t}$ is well defined.
It is straightforward to prove that the number of states of the LTS generated by $t$\ are at most the number of states of the LTS generated by $\underline{t}$\footnote{equality holds if and only if $t$\ does not contain any sub term of the form $\mu X.t'$, where $X$\ appears free in $t'$.}\\
Since a test $\underline{t}$\ does not contain any formula of the form $\mu X.t'$, it is possible to prove that the the LTS generated by $\underline{t}$\ is finite state by a straightforward induction on the structure of $\underline{t}$\ itself. Therefore, it follows that also the LTS generated by an arbitrary test $t$\ is finite state.
\end{enumerate}
\end{aproof}
\item[Lemma \ref{lem:divergence}. ]\begin{aproof}
Let $p$ be a process such that $p \Uparrow$, let $\phi \in \musthml$ such that $p \in \sem \phi$.
Then $\phi$ cannot be $\Acc A, \fff, [\alpha] \phi$\ 
nor a conjunction of formulae containing one of such terms.\\
We now show that $\phi$ cannot be a formula of the form $\lfp X \psi$, where $\psi$ 
contains either free occurrences of the variable $X$ or the operators $\Acc A, \bbox \alpha$. 
To this end, we perform a case analysis on the formul $\psi$: 
\begin{enumerate}[(i)]
\item \label{lem:div1} $\psi$ contains an occurrence of the operator $\bbox \alpha$. Here we can apply Corollary 
\ref{cor:minsubst} to obtain a formula of the form 
$\bbox \alpha \phi' \wedge \phi''$ which is logically equivalent to $\phi$. Thus, if $p \Uparrow$ then 
$p \notin \sem \phi$, 
\item \label{lem:div2} $\psi$ contains the operator $\Acc A$. We can proceed as in Case \ref{lem:div1}, 
\item \label{lem:div3} $\psi$\ contains at least a free occurrence of 
variable $X$. If such an occurrence is guarded by a $\bbox \alpha$ operator, then we can proceed as in 
Case \ref{lem:div1}. Otherwise we can obtain a formula of the form $\lfp X {X \wedge \psi'}$ which is 
equivalent to $\phi = \lfp X \psi$. Again, this is done by a repeated application of Corollary \ref{cor:minsubst}. 
Now it is trivial to notice that $\emptyset$ is a solution to the equation $X = X \wedge \psi$, and therefore 
it is its least solution. Hence $\sem{\phi} = \emptyset$, so that $p \notin \sem{\phi}$.
\end{enumerate}

The only possible case left for $p\Uparrow$, $p \in \sem\phi$ to hold is therefore 
given by $\phi$ being generated by the Grammar below:
\begin{equation}
\phi \is \ttt \barra \phi_1 \wedge \phi_2 \barra \lfp X \phi.
\label{eq:ttgrammar}
\end{equation}
It is trivial now to show $\sem{\phi} = S$.
\end{aproof}
\leaveout{
\item[Proposition \ref{prop:cpo}.]\begin{aproof}
Let $X_0 \subseteq X_1 \subseteq \cdots$ be a chain of 
elements from $\mathcal{C}$. If $X_i = S$ for some $i \geq 0$, 
then $\bigcup_{n} X_n = S$, which is therefore included in 
$\mathcal{C}$.\\
Suppose then $X_i \neq S$ for all $i \geq 0$. Thus we have 
that, whenever $p \in \bigcup_n X_n$ then $p \in X_i$ for 
some $i$, and therefore $p \Downarrow$. Then $\bigcup_n 
X_n \in \mathcal{C}$.
\end{aproof}
}
\item[Proposition \ref{prop:dmndcontinuous}.]\begin{aproof}
 It is trivial to show that
\[
\bigcup_n[\cdot \alpha \cdot] X_n \subseteq [\cdot \alpha \cdot] \bigcup_n X_n.
\]

Thus we only need to show that the opposite implication holds.\\
First, notice that it $X_i = S$ for some $i$, then
\[
\bigcup_n\bbox{\cdot\alpha\cdot}X_n = \{ s :\; s \Downarrow\} = \bbox{\cdot\alpha\cdot}\bigcup_n X_n
\]

Suppose then that $X_i \neq S$ for all $i \geq 0$. Then we have 
$\bigcup_n X_n \neq S$.
By definition the set $\bbox{\cdot\alpha\cdot}\bigcup_n X_n$ can 
be written as
\[
\{s\;:\; s \Downarrow, \Succ{\alpha, s} \subseteq \bigcup_n X_n\}.
\]
 We will prove that for each state $s$ in such a set $\Succ{\alpha, s}$ 
is finite, therefore there exists an $X_n$ such that $\Succ{\alpha, s} 
\subseteq X_n$. As a direct consequence, $s \in \bbox{\cdot\alpha\cdot} X_n$, 
which is included in $\bigcup_n \bbox{\cdot\alpha\cdot}X_n$.

Let $s \in \bbox{\cdot\alpha\cdot} \bigcup_n X_n$ and let 
$s'$ be one of its $\alpha$ derivative. By definition we have 
$s' \in \bigcup_n X_n$. Thus there exists $n \geq 0$ such that 
$s' \in X_n$. Since $X_n \in \mathcal{C}$, $X_n \neq S$, it holds 
$s' \Downarrow$. Since we are assuming that the LTS of processes 
is finite, as a consequence of Konig's lemma we obtain 
that if the set $\Succ{\alpha, s}$ is infinite then  
the $\tau$-computation tree of either $s$ or one of 
its $\alpha$-derivative $s'$ has an infinite path. 
The former contradicts the statement $s\Downarrow$, while 
the latter contradicts the property $s' \Downarrow$ we just proved. 
Thus $\Succ{\alpha, s}$ is finite.
\end{aproof}

\item[Lemma \ref{lem:statespaceformulae}.]\begin{aproof} Suppose $\sem{\phi} = S$ and let $p$\ be a process such that $p \Uparrow$; since $p \in \sem{\phi}$, the same argument used for the proof of Lemma \ref{lem:divergence} applies.
\end{aproof}

\item[Lemma \ref{lem:testprops}.]\begin{aproof}\qquad
\begin{itemize}
\item Suppose $p \mustsatisfy \mu X.t$. Then all computations with prefix
\[
p \;|\; \mu X.t \shortrightarrow p \;|\; t\{\mu X.t/X\}
\]

are successful: hence $p \mustsatisfy t\{\mu X.t/X\}$.
\item Suppose $p \Downarrow, p \mustsatisfy t\{\mu X.t/X\}$. Then for each computation of $p \barra \mu X.t$\ with prefix
\[
p \;|\; \mu X.t \shortrightarrow \cdots \shortrightarrow p' \;|\; \mu X.t \shortrightarrow p' \;|\; t\{\mu X.t/X\}
\]
there exists a computation with prefix
\[
p \;|\; t\{\mu X.t/X\} \shortrightarrow \cdots \shortrightarrow p' \;|\; t\{\mu X.t/X\}
\]
which is successful. Hence $p \mustsatisfy \mu X.t$.
\end{itemize}
\end{aproof}
\item \begin{lem}
\label{lem:formulae2tests}
Let $\phi, \psi$\ be formulae in \musthml. Then
\[
\Tmust{\phi\{\psi/X\}} \equiv \Tmust \phi \{\Tmust \psi/X\}
\]

where $\equiv$\ is the syntactic congruence equivalence over tests.
\end{lem}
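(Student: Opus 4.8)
The plan is to argue by structural induction on $\phi$, mirroring the recursive definition of $\Tmust{\cdot}$; no hypothesis on the branching of the LTS is needed, since the statement is almost entirely syntactic. As a preliminary I would rename the bound variables of $\phi$ so that none of them coincides with $X$ or occurs free in $\psi$, which makes every substitution capture-avoiding and, in particular, lets $\{\Tmust\psi/X\}$ be pushed through the test constructors $\alpha.t$, $t_1+t_2$ and $\mu Y.t$. I would also record two facts, each provable by a one-line induction: $\mathrm{fv}(\Tmust\theta)\subseteq\mathrm{fv}(\theta)$ for every $\theta\in\musthml$ (so a variable not free in $\theta$ is not free in $\Tmust\theta$), and $X$ is free in a formula iff it is free in one of its immediate subformulae. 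Throughout, I read the first clause of $\Tmust{\lfp X\phi}$ as ``$X$ is not free in $\phi$'', the way it is used in the proof of Proposition~\ref{prop:oneway}.

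The base and modality cases are mechanical. For $\phi\in\{\ttt,\fff\}$ or $\phi\equiv\Acc A$ the formula is closed, so $\phi\{\psi/X\}\equiv\phi$, while $\Tmust\phi$ is a closed test, so $\Tmust\phi\{\Tmust\psi/X\}\equiv\Tmust\phi$; both sides are $\omega.0$, $0$ or $\sum_{a\in A}a.\omega.0$ respectively. For $\phi\equiv Y$ with $Y\not\equiv X$ both sides are $Y$, and for $\phi\equiv X$ both sides are $\Tmust\psi$. For $\phi\equiv[\tau]\phi'$ we have $\Tmust{([\tau]\phi')\{\psi/X\}}=\tau.\Tmust{\phi'\{\psi/X\}}$, which by the induction hypothesis is $\tau.(\Tmust{\phi'}\{\Tmust\psi/X\})=(\tau.\Tmust{\phi'})\{\Tmust\psi/X\}=\Tmust{[\tau]\phi'}\{\Tmust\psi/X\}$; the case $\phi\equiv[a]\phi'$ is the same, the extra summand $\tau.\omega.0$ being closed and hence untouched by the substitution.

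The two cases that require attention are $\phi\equiv\phi_1\wedge\phi_2$ and $\phi\equiv\lfp Y\chi$, since there the definition of $\Tmust{\cdot}$ splits on a side condition. For $\lfp Y\chi$ the side condition only asks whether $Y$ is free in $\chi$; as $Y\not\equiv X$ and $Y\notin\mathrm{fv}(\psi)$, this is left unchanged by $\{\psi/X\}$, so the same clause applies to $\chi$ and to $\chi\{\psi/X\}$. In the clause where $Y$ is not free both sides collapse, by the induction hypothesis, to $\Tmust\chi\{\Tmust\psi/X\}$; in the clause where $Y$ is free the left-hand side is $\mu Y.\Tmust{\chi\{\psi/X\}}$, the right-hand side is $(\mu Y.\Tmust\chi)\{\Tmust\psi/X\}=\mu Y.(\Tmust\chi\{\Tmust\psi/X\})$ (legitimate since $Y\notin\mathrm{fv}(\Tmust\psi)$), and these agree by the induction hypothesis. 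For $\phi_1\wedge\phi_2$: if the formula is closed and logically equivalent to $\ttt$, the substitution is vacuous and both sides equal $\omega.0$; otherwise both sides reduce, using the induction hypothesis and the distribution of $\{\Tmust\psi/X\}$ over $\tau.(\cdot)+\tau.(\cdot)$, to $\tau.\Tmust{\phi_1\{\psi/X\}}+\tau.\Tmust{\phi_2\{\psi/X\}}$.

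The genuine obstacle is hidden in the conjunction case: one must know that ``$\phi_1\wedge\phi_2$ is closed and logically equivalent to $\ttt$'' holds exactly when ``$(\phi_1\wedge\phi_2)\{\psi/X\}$ is closed and logically equivalent to $\ttt$'' holds, so that the two branches of the definition match up on the two sides of the equation. The forward implication is immediate, since a closed formula absorbs the substitution. The reverse implication does \emph{not} hold for a completely arbitrary $\psi$ --- for instance $\phi_1\wedge\phi_2\equiv X\wedge X$ together with $\psi\equiv\ttt$ makes the substituted formula closed and $\equiv\ttt$ while the original is neither --- so the lemma has to be read with the proviso, harmless in every application, that $\psi$ is not logically equivalent to $\ttt$; concretely, in each use the formula plugged in for $X$ has the form $\lfp X{\psi_0}$ with $X$ free in $\psi_0$, whose interpretation differs from $S$ by Lemma~\ref{lem:statespaceformulae}. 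Isolating the precise syntactic condition on $\psi$ under which the two branches provably coincide --- and thereby closing the conjunction case and the induction --- is the part of the argument I expect to cost the most thought; the remaining cases are routine bookkeeping.
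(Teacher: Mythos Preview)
Your approach---structural induction on $\phi$ with capture-avoiding renaming of bound variables---is exactly the paper's; its entire proof is the sentence ``By a straightforward induction on the structure of $\phi$, providing a consistent renaming of bounded variables when dealing with the case $\phi = \lfp X \varphi$, with $\varphi$ not closed.''

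You have, however, been more careful than the paper. The obstacle you isolate in the conjunction case is genuine: your counterexample $\phi\equiv X\wedge X$, $\psi\equiv\ttt$ gives $\Tmust{\ttt\wedge\ttt}=\omega.0$ on the left (first branch of~\eqref{eq:tmustwedge}) but $(\tau.X+\tau.X)\{\omega.0/X\}=\tau.\omega.0+\tau.\omega.0$ on the right, so the lemma as literally stated is false, and the paper's one-line sketch never engages with the semantic side condition in~\eqref{eq:tmustwedge}. Your proposed repair---restricting to $\psi$ with $\sem\psi\neq S$---is precisely what the applications need: the lemma is only ever invoked with $\psi$ of the form $\lfp X{\psi_0}$ where $X$ is free in $\psi_0$, and Lemma~\ref{lem:statespaceformulae} then supplies the hypothesis. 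Your reading of the side condition in~\eqref{eq:tmustmin} as ``the bound variable is not free in the body'' rather than the literal ``the body is closed'' is likewise the one the induction requires, and is how the paper itself uses that clause in Proposition~\ref{prop:oneway}.
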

\begin{proof}
By a straightforward induction on the structure of $\phi$, providing a consistent renaming of bounded variables when dealing with the case $\phi = \lfp X \varphi$, with $\varphi$\ not closed.
\end{proof}
\item[Proposition \ref{prop:satisfaction}.]\begin{aproof}
The definition of $\sem\cdot$\ ensures that $\models$\ is a satisfaction relation; we have:
\begin{eqnarray*}
(\models \ttt) &=& S\\
(\models \fff) &=& \emptyset\\
(\models \Acc A) &=& \{ \setof{s}{s \Downarrow, s \Trans{\tau} s' \mbox{ implies } S(s') \cap A \neq \emptyset}\\
(\models \;[\alpha]\phi) &=& [\cdot \alpha \cdot] (\models\; \phi)\\
(\models \;\phi\{\lfp X \phi /X\}) &=& (\models\; \lfp X \phi)
\end{eqnarray*}
where the last equality follows from Corollary \ref{cor:minsubst}.

It remains to show that $\models$\ is in fact the smallest satisfaction relation.\\
Let $R$\ be a satisfaction relation, and suppose that $p \in \sem{\phi}$: we show that $p \; R \; \phi$.\\
By Proposition \ref{cor:continuity}\ there exists $k \geq 0$\ such that $p \in \sem{\phi^k}$. 
We proceed by induction on $k$.\\
The case $k = 0$\ is vacuous. Assume the result holds for a generic $k$; 
we will perform an inner induction on the structure of $\phi$. 
Again, only the most interesting details are given.\\
Suppose $\phi = \lfp X \psi$: then $\lfp{X}{\psi}^{(k+1)} = (\psi\{\phi/X\})^k$, 
and by inductive hypothesis $p \;R\; \psi\{\phi/X\}$ follows, and so $p \; R \; \phi$ 
by the definition of satisfaction relation.\\
Finally, if $\phi$ has the form $\bbox{\alpha}\psi$ or $\phi_1 \wedge \phi_2$, it is 
not possible to use the inductive hypothesis directly. This is because
$(\bbox{\alpha}\phi)^{(k+1)} = \bbox \alpha (\phi)^{(k+1)}, 
(\phi_1 \wedge \phi_2)^{(k+1)} = \phi_1{(k+1)} \wedge \phi_2^{(k+1)}$.\\
We define therefore the height of a formula $h(\phi)$ as
\begin{eqnarray*}
h(\ttt) &=& 0\\
h(\fff) &=& 0\\
h(\Acc A) &=& 0\\
h(\lfp X \psi) &=& 0\\
h(\bbox \alpha \psi) &=& h(\psi) + 1\\
h(\phi_1 \wedge \phi_2) &=& \mbox{max}(h(\phi_1), h(\phi_2)) +1
\end{eqnarray*}
and we perform another induction of $h(\phi)$. The case $h(\phi) = 0$ 
has already been handled. Suppose then $h(\phi) = n+1$; then 
either $\phi = \bbox \alpha \psi$ or $\phi = \phi_1 \wedge \phi_2$. 
We will consider only the first case.Here $h(\psi) = n$, so that 
by inductive hypothesis we have $p' \models \psi$ implies $p'\;R\;\psi$.\\
If $p \models \bbox \alpha \psi$ then $p \Downarrow$; further, whenever 
$p \Trans{\alpha} p'$, we have $p' \models \psi$ and therefore $p' \;R\; \psi$. 
Thus $p \in \bbox{\cdot \alpha \cdot}(R \phi)$.
\leaveout{
Notice that, when dealing with a formula of 
the form $[\alpha]\phi$, another induction on the number of the leftmost $[\alpha]$\ operators is needed, 
as $([\alpha]\phi)^{(k+1)} = [\alpha](\phi^{(k+1)})$. 
A similar analysis has to be made when dealing with conjunctions.
}
\end{aproof}
\leaveout{\item[Proposition \ref{prop:tauallowed}.]\begin{aproof}
\begin{description}
 \item \item[If: ] Since $\maymodels$\ is the smallest weak satisfaction relation, it satisfies both the \\
 implications of definition \ref{def:wsatrel}\ and their converse. Suppose $s \trans{\tau} s'$, and 
$s' \maymodels \phi$; we proceed by induction on $\phi$:
\begin{itemize}
 \item $\phi \equiv \ttt$. By definition $s \maymodels \ttt$.
 \item $\phi \equiv \fff$. Vacuous, as $s' \not\maymodels \fff$.
 \item $\phi \equiv \dmnd \alpha \psi$. As $s' \maymodels \dmnd \alpha \psi$, there exists $s''$\ such that 
  $s' \Trans{\alpha} s''$\ and $s'' \maymodels \phi$. Since $s \Trans{\tau} s'$\ by hypothesis, 
  $s \Trans{\alpha} s''$, and by definition \ref{def:wsatrel}\ $s \maymodels \dmnd \alpha \psi$.
 \item $\phi \equiv \phi_1 \vee \phi_2$. By hypothesis, $s' \Trans{\tau} s''$\ such that $s'' \maymodels \phi_1$ or 
  $s'' \maymodels \phi_2$; without loss of generality, assume $s'' \maymodels \phi_1$. As $s \Trans{\tau} s'$, 
  $s \Trans{\tau} s''$, and by definition $s \maymodels \phi_1 \vee \phi_2$.
 \item $\phi \equiv \lfp X \psi$. As $s' \maymodels \lfp X \psi$, there exists $s''$\ such that 
  $s' \Trans{\tau} s'', s'' \maymodels \psi \{ \lfp X \psi / X \}$. Again, we have $s \Trans{\tau} s''$, 
  therefore $s\maymodels \lfp X \psi$.
\end{itemize}

\item[Only if: ] We just need to show that the relation
\[
 R = \{\;(s, \phi)\}\;|\;s\Trans{\tau}s',\;s'\maymodels\phi \mbox{ for some } s' \in S\;\}
\]

is a weak satisfaction relation; then we remember that $\maymodels$\ is the smallest weak satisfaction relation, and 
therefore is included in $R$, to claim the validity of the implication. We proceed by cases on $\phi$:
\begin{itemize}
 \item $\phi \equiv \ttt$. We have $s \maymodels \ttt$, $s \Trans{\tau} s$ for all $s \in S$. Then $s\;R\;\ttt$\ 
 is true for all $s \in S$.
 \item $\phi \equiv \fff$. We have $s' \maymodels \fff$\ for no $s' \in S$. Therefore, there is no $s$\ such that 
 $s \Trans{\tau} s'$\ and $s' \maymodels \fff$; thus for no $s \in S\; s \maymodels \fff$.
 \item $\phi \equiv \dmnd \alpha \psi$. Suppose $s \Trans{\alpha} s', s'\;R\;\psi$; we have to show 
 $s\;R\;\dmnd \alpha \psi$.\\
 As $s'\;R\;\psi$, by definition there exists $s''$\ such that $s' \Trans{\tau}s''$\ and $s''\maymodels \psi$. 
 Therefore $s \Trans{\alpha} s''$, and since $s \Trans{\tau} s$\ we have $s \; R\; \dmnd \alpha \psi$.
 \item $\phi \equiv \phi_1 \vee \phi_2$. Without loss of generality, suppose 
 $s \trans{\tau} s', s' ;R\; \phi_1$; we have to show $s\;R\; \phi_1 \vee \phi_2$. By definition of $R$,
 there exists $s''$ such that $s' \Trans{\tau} s''$, $s'' \maymodels \phi_1$. 
 Therefore $s' \maymodels \phi_1 \vee \phi_2$, and since $s \Trans{\tau} s'$\ we have $s \;R\; \phi_1 \vee \phi_2$.
 \item $\phi \equiv \lfp X \psi$. Suppose $s \Trans{\tau} s'$, $s' \;R\; \psi\{\lfp X \psi / X \}$; we have to show 
 $s \; R \; \lfp X \psi$. By definition of $R$\ there exists $s''$\ such that $s' \Trans{\tau} s''$\ and 
 $s'' \maymodels \psi \{ \lfp X \psi / X\}$. By definition of $\maymodels$\ this is the same of
 $ s \Trans{\tau} s'$, and $s' \maymodels \lfp X \psi$, or equivalently $s\; R\; \lfp X \psi$.
 
\end{itemize}
\end{description}
\end{aproof}
\item[Proposition \ref{prop:mayeq}.]\begin{aproof}
\begin{description}
\item[If: ] We just need to show that $\models$\ is a weak satisfaction relation. This is straightforward, as by Proposition \ref{prop:mayhmltauclosed}\ and the definition of $\sem\cdot$\ we have the following implications:
\begin{eqnarray*}
p \models \ttt && \mbox{ for all } p \in S\\
p \models \fff && \mbox{ for no } p \in S\\
p \models \dmnd \alpha \phi &\mbox{iff}& p \Trans{\alpha} p' \mbox{ for some } p' \in P\\
p \models \phi_1 \vee \phi_2 &\mbox{iff}& p \Trans{\tau} p' \mbox{ for some } p' \mbox{ such that } p' \models \phi_1 \vee \phi_2\\
p \models \lfp X \phi &\mbox{iff}& p \Trans{\tau} p' \mbox{ for some } p' \mbox{ such that} p' \models \phi\{\lfp X \phi/X\}
\end{eqnarray*}
where the last implication is obtained by Corollary \ref{cor:minsubst}.
\item{Only if: } Suppose $p \models \phi$. We show $p \models \phi$\ by induction on $\phi$. The only interesting case is given by $\phi \equiv \lfp X \psi$.\\
As $p \models \lfp X \psi$, by Proposition \ref{prop:mayhmltauclosed} and Corollary \ref{cor:minsubst} there exists $p'$\ with $p\Trans{\tau}p'$\ and $p'\models \psi\{\lfp X \psi/X\}$. We need to show $p' \maymodels \psi\{\lfp X \psi/X\}$, then we obtain $p \maymodels \lfp X \phi$\ by the definition of $\maymodels$.
Remember that $p \maymodels \lfp X \psi$\ if and only if $p \Trans{\tau} p'$\ for some $p'$\ such that $p' \maymodels \psi\{\lfp X \psi\}$; further, by Proposition \ref{prop:tauallowed}\ this is true if and only if $\phi \maymodels \phi\{\lfp X \psi\}$. Therefore, the set
\[
P = \{ p \;|\; p \maymodels \lfp X \psi\}
\]
is a solution to the equation $X = \psi$. Finally, we apply Tarski's fixed point Theorem \ref{thm:tarski}\ to claim that $\sem{\lfp X \psi}$\ is a subset of $P$, therefore $p \models \lfp X \psi$\ implies $p \maymodels \lfp X \psi$.
\end{description}
\end{aproof}
\item[Proposition \ref{prop:maylwsr}.] \begin{aproof}
Assume $p \must \Tmay \phi$. We proceed by induction on the minimal length of a successful prefix of a computation, denoted $|p, \Tmay \phi|$\ with an abuse of notation, to show that $p \maymodels \phi$.
\begin{itemize}
\item $|p, \Tmay \phi| = 0$. Then we may infer $\Tmay \phi \trans{\omega}$\ hence $\phi \equiv \ttt$. 
In this case, for each $p\in S$\ it holds. $p \maysatisfy \Tmay \phi$, and $\forall p \in S. p \maymodels \ttt$.
\item $|p, \Tmay \phi| = k+1$. Assume the statement holds for $k$, and consider the prefix 
\[
p | \Tmay \phi \shortrightarrow p' | t'
\]

of a minimal successful computation.\\ We distinguish several cases:
\begin{enumerate}[(a)]
\item $p \trans{\tau} p', t' \equiv \Tmay \phi$. Then by inductive hypothesis $p' \maymodels \phi$, 
and by proposition \ref{prop:tauallowed}\ we have $p \maymodels \phi$.
\item $p = p', \Tmay \phi \trans{\tau} t'$: in this case there are two possibilities.
\begin{itemize}
\item $\phi = \lfp X \psi$\ for some $\psi$. Hence $t' \equiv \Tmay \psi \{\Tmay \phi/X\}$, 
which is $t' \equiv \Tmay {\psi \{\phi/X\}}$. Again, by induction we have 
$p \maymodels \psi\{\phi/X\}$, and hence $p \maymodels \phi$.
\item $\phi = \phi_1 \vee \phi_2$. Without loss of generality 
we may infer $t' \equiv \Tmay {\phi_1}$. By Inductive hypothesis 
we have $p \maymodels \phi_1$, hence $p \maymodels \phi_1 \vee \phi_2$.
\end{itemize}
\item $p \trans{a} p', \Tmay \phi \trans{a} t'$. In this case we have 
$\phi = \dmnd \alpha \psi$, and hence $t' \equiv \Tmay \psi$. 
Then, by using the inductive hypothesis again, we have $p \maymodels \dmnd a \phi$.
\end{enumerate}
\end{itemize}
\end{aproof}
}
\end{description}
}
\end{document}